\newcommand{\APPENDICITIS}[2]{#2}
  \providecommand\BibTeX{{%
    \normalfont B\kern-0.5em{\scshape i\kern-0.25em b}\kern-0.8em\TeX}}}
\renewcommand{\c}[1]{\lstinline[basicstyle={\color{jlstring}\ttfamily\small\selectfont}]{#1}\xspace}
\newcommand{\code}{\c}
\newcommand{\figref}[1]{Fig.~\ref{#1}}
\newcommand{\secref}[1]{Sec.~\ref{#1}}
\newcommand{\appref}[1]{App.~\ref{#1}}
\newcommand{\thmref}[1]{Theorem~\ref{#1}}
\newcommand{\lemref}[1]{Lemma~\ref{#1}}
\newcommand{\juliette}{\textsc{Juliette}\xspace}
\newcommand{\jlmultnum}{357\xspace}
\newcommand{\jladdnum}{166\xspace}
\newcommand{\eval}{\texttt{\small eval}\xspace}
\newcommand{\Eval}{\texttt{\small Eval}\xspace}
\newcommand{\invokelatest}{\texttt{\small invokelatest}\xspace}
\newcommand{\Alt}{~\vert~}
\newcommand{\subst}[2]{\ensuremath{[#1\!\mapsto\!#2]}}
\newcommand{\dom}{\ensuremath{\mathop{dom}}}
\renewcommand{\obar}[1]{\overline{#1}}
\newcommand{\eqtt}{\texttt{=}}
\newcommand{\HEY}[2]{{\footnotesize\textsc{#1-#2}}\xspace}
\newcommand{\WAE}[1]{\HEY E{#1}}
\newcommand{\WAT}[1]{\HEY{T}{#1}}
\newcommand{\WAOE}[1]{\HEY{OE}{#1}}
\newcommand{\WAOD}[1]{\HEY{OD}{#1}}
\newcommand{\WAOT}[1]{\HEY{OT}{#1}}
\newcommand{\WACAN}[1]{\HEY{CN}{#1}}
\newcommand{\err}{{\texttt{error}}\xspace}
\newcommand{\Primop}{\ensuremath{\Delta}\xspace}
\newcommand{\primop}[1]{\ensuremath{\delta_{#1}}\xspace}
\newcommand{\primopd}{\primop{l}}
\newcommand{\primcall}[2]{\ensuremath{\primop{#1}(#2)}\xspace}
\newcommand{\primcalld}[1]{\primcall{l}{#1}}
\newcommand{\PrimopRT}{\ensuremath{\Psi}\xspace} 
\renewcommand{\v}{\ensuremath{\texttt{v}}\xspace}
\newcommand{\vs}{\obar{\v}\xspace}
\newcommand{\skp}{\texttt{unit}\xspace}
\newcommand{\mname}{\ensuremath{\texttt{m}}\xspace}
\newcommand{\mval}[1]{{\texttt{#1}}\xspace}
\newcommand{\m}{\mval{m}}
\newcommand{\e}{\ensuremath{\texttt{e}}\xspace}
\newcommand{\es}{\ensuremath{\obar{\e}}\xspace}
\newcommand{\x}{\ensuremath{x}\xspace}
\newcommand{\xs}{\ensuremath{\obar{\x}}\xspace}
\newcommand{\seq}[2]{\ensuremath{#1\,;\,#2}}
\newcommand{\mcall}[2]{\ensuremath{\mathop{#1}(#2)}}
\newcommand{\mcalld}{\mcall{\m}{\vs}}
\newcommand{\md}{\ensuremath{\texttt{md}}\xspace}
\newcommand{\evalg}[1]{\ensuremath{{\color{violet}\llparenthesis\,{\color{black}#1}\,\rrparenthesis}}}
\newcommand{\evalt}[2]{\ensuremath{\evalg{#2}_{\color{violet}#1}}}
\newcommand{\mdef}[3]{\ensuremath{\triangleleft\, \mathop{#1}(#2)\eqtt\,#3 \,\triangleright}}
\newcommand{\mdefd}{\ensuremath{\mdef{\m}{\obar{\jty{\x}{\t}}}{\e}}}
\newcommand{\rslt}{\ensuremath{\texttt{r}}\xspace}
\newcommand{\p}{\ensuremath{\texttt{p}}\xspace}
\renewcommand{\t}{\ensuremath{\tau}\xspace}
\newcommand{\ts}{\ensuremath{\obar{\t}}\xspace}
\newcommand{\g}{\ensuremath{\sigma}\xspace}
\newcommand{\gs}{\ensuremath{\obar{\g}}\xspace}
\newcommand{\mty}{\ensuremath{\mathbb{f}_\m}\xspace}
\newcommand{\Unit}{\ensuremath{\mathbb{1}}\xspace}
\newcommand{\typeof}{\ensuremath{\mathop{\mathbf{typeof}}}\xspace}
\newcommand{\jsub}[2]{\ensuremath{#1 <: #2}}
\newcommand{\jty}[2]{\ensuremath{#1 :: #2}}
\newcommand{\MT}{\ensuremath{\mathrm{M}}\xspace}
\newcommand{\MTg}{\ensuremath{\MT_g}\xspace}
\newcommand{\MText}[2]{\ensuremath{#1 \bullet #2}}
\newcommand{\getmd}{\ensuremath{\mathop{\mathbf{getmd}}}\xspace}
\newcommand{\latest}{\ensuremath{\mathop{\mathbf{latest}}}}
\newcommand{\applcbl}{\ensuremath{\mathop{\mathrm{applicable}}}}
\newcommand{\containseq}{\ensuremath{\mathop{\mathrm{contains}}}}
\newcommand{\Cx}{\ensuremath{\texttt{C}}\xspace}
\newcommand{\Xx}{\ensuremath{\texttt{X}}\xspace}
\newcommand{\OEx}{\ensuremath{\texttt{E}}\xspace}
\newcommand{\hole}{\ensuremath{\square}\xspace}
\newcommand{\plugx}[2]{\ensuremath{#1\left[#2\right]}}
\newcommand{\plugCx}[2]{\ensuremath{{\color{violet} #1\left[{\color{black} #2}\right]}}}
\newcommand{\rep}{\ensuremath{\mathtt{rep}}\xspace}
\newcommand{\rV}{\mathop{\mathcal{V}}}
\newcommand{\rX}{\mathop{\mathcal{X}}}
\newcommand{\rC}{\mathop{\mathcal{C}}}
\newcommand{\rVd}{\ensuremath{\rV(\v)}\xspace}
\newcommand{\rXX}[1]{\ensuremath{\rX(#1,\ \mcalld)}\xspace}
\newcommand{\rXd}{\rXX{\Xx}}
\newcommand{\rCC}[1]{\ensuremath{\rC(#1,\ \rdx)}\xspace}
\newcommand{\rCd}{\rCC{\Cx}}
\newcommand{\CanSym}{\ensuremath{\mathop{\boldsymbol{Can}}}\xspace}
\newcommand{\Can}[2]{\ensuremath{\CanSym\left(#1\ \mathbf{as}\ #2\right)}\xspace}
\newcommand{\evalstep}{\ensuremath{\rightarrow}\xspace}
\newcommand{\evalstepc}{\ensuremath{\evalstep^*}\xspace}
\newcommand{\stwa}[2]{\ensuremath{\langle #1, #2 \rangle}}
\newcommand{\evalwast}[2]{\ensuremath{#1\ \evalstep\ #2}}
\newcommand{\evalwa}[4]{\ensuremath{\evalwast{\stwa{#1}{#2}}{\stwa{#3}{#4}}}}
\newcommand{\evalwad}[2]{\evalwa{\MT}{#1}{\MT}{#2}}
\newcommand{\evalwastc}[2]{\ensuremath{#1\ \evalstepc\ #2}}
\newcommand{\evalwac}[4]{\ensuremath{\evalwastc{\stwa{#1}{#2}}{\stwa{#3}{#4}}}}
\newcommand{\evalerrwa}[3]{\ensuremath{#1\ \vdash\ #2\,\evalstep\,#3}}
\newcommand{\evalerrwad}[2]{\evalerrwa{\MT}{#1}{#2}}
\newcommand{\Gm}{\ensuremath{\Gamma}\xspace}
\newcommand{\typedwa}[4]{\ensuremath{#2\ \vdash_{#1}\ #3\ :\ #4}}
\newcommand{\typedwad}[2]{\typedwa{}{\Gm}{#1}{#2}}
\newcommand{\gm}{\ensuremath{\gamma}\xspace}
\newcommand{\substok}[2]{\ensuremath{#1 \vdash #2}\xspace}
\newcommand{\substokd}{\substok{\Gm}{\gm}}
\newcommand{\nv}{\ensuremath{\nu}\xspace}
\newcommand{\nvs}{\ensuremath{\obar{\nv}}\xspace}
\newcommand{\mdeq}[3]{\ensuremath{#2(#1) \leadsto #3}\xspace}
\newcommand{\mdeqd}{\mdeq{\obar{\sigma}}{\m}{\m'}}
\newcommand{\SpecEnv}{\Phi}
\newcommand{\mspec}[4]{\ensuremath{\vdash^{#1}_{{\color{violet}#2}\leadsto{\color{violet}#3}} #4}\xspace}
\newcommand{\mspecd}{\mspec{\SpecEnv}{\MT}{\MT'}{\mdeqd}}
\newcommand{\expropt}[6]{\ensuremath{#2\ \vdash^{#1}_{{\color{violet}#3}\leadsto{\color{violet}#5}}\ #4 \leadsto #6}}
\newcommand{\exproptd}[2]{\expropt{\SpecEnv}{\Gm}{\MT}{#1}{\MT'}{#2}}
\newcommand{\expropte}[2]{\expropt{\SpecEnv}{}{\MT}{#1}{\MT'}{#2}}
\newcommand{\mdopt}[5]{\ensuremath{\vdash^{#1}_{{\color{violet}#2}\leadsto{\color{violet}#3}} #4 \leadsto #5}\xspace}
\newcommand{\mdoptd}[2]{\mdopt{\SpecEnv}{\MT}{\MT'}{#1}{#2}}
\newcommand{\tableopt}[4]{\ensuremath{\vdash^{#1} #3 \leadsto #4}}
\newcommand{\tableoptd}{\tableopt{\SpecEnv}{\e}{\MT}{\MT'}}
\newcommand{\tableoptref}[4]{\ensuremath{\vdash^{#1}_{#2} #3 \leadsto #4}}
\newcommand{\tableoptrefd}{\tableoptref{\SpecEnv}{\e}{\MT}{\MT'}}
\newcommand{\tableoptexpr}[3]{\ensuremath{\vdash_{{\color{violet}#1}\leadsto{\color{violet}#2}} #3}}
\newcommand{\tableoptexprd}[1]{\tableoptexpr{\MT}{\MT'}{#1}}
\newcommand{\rdx}{\ensuremath{\texttt{rdx}}\xspace}
\newcommand{\vnm}{\ensuremath{\texttt{v}^{\neq \m}}\xspace}
\begin{document}
\title{World Age in Julia} 
\subtitle{Optimizing Method Dispatch in the Presence of Eval\APPENDICITIS{}{ (Extended Version)}}

\author{Julia Belyakova}\affiliation{\institution{Northeastern University}\country{USA} }
\author{Benjamin Chung}\affiliation{\institution{Northeastern University}\country{USA} }
\author{Jack Gelinas}\affiliation{\institution{Northeastern University}\country{USA} }
\author{Jameson Nash}\affiliation{\institution{Julia Computing}\country{USA} }
\author{Ross Tate}\affiliation{\institution{Cornell University}\country{USA} }
\author{Jan Vitek}\affiliation{\institution{Northeastern University / Czech Technical University}\country{USA / Czech Republic} }

\renewcommand{\shortauthors}{Belyakova et al.}

\keywords{eval, method dispatch, compilation, dynamic languages}

\begin{CCSXML}
	<ccs2012>
	<concept>
	<concept_id>10011007.10011006.10011008.10011024</concept_id>
	<concept_desc>Software and its engineering~Language features</concept_desc>
	<concept_significance>500</concept_significance>
	</concept>
	<concept>
	<concept_id>10011007.10011006.10011008</concept_id>
	<concept_desc>Software and its engineering~General programming languages</concept_desc>
	<concept_significance>300</concept_significance>
	</concept>
	<concept>
	<concept_id>10011007.10011006.10011041.10011044</concept_id>
	<concept_desc>Software and its engineering~Just-in-time compilers</concept_desc>
	<concept_significance>300</concept_significance>
	</concept>
	<concept>
	<concept_id>10011007.10011006.10011008.10011009.10011021</concept_id>
	<concept_desc>Software and its engineering~Multiparadigm languages</concept_desc>
	<concept_significance>100</concept_significance>
	</concept>
	</ccs2012>
\end{CCSXML}

\ccsdesc[500]{Software and its engineering~Language features}
\ccsdesc[300]{Software and its engineering~General programming languages}

\begin{abstract}
Dynamic programming languages face semantic and performance challenges in
the presence of features, such as \eval, that can inject new code into a
running program. The Julia programming language introduces the novel
concept of world age to insulate optimized code from one of the most
disruptive side-effects of \eval: changes to the
definition of an existing function. This paper provides the first formal
semantics of world age in a core calculus named \juliette, and shows how
world age enables compiler optimizations, such as inlining, in the presence
of \eval. While Julia also provides programmers with the means to bypass
world age, we found that this mechanism is not used extensively: a static
analysis of over 4,000 registered Julia packages shows that only
4--9\% of packages bypass world age.
This suggests that Julia's semantics aligns with programmer expectations.
\end{abstract}

\maketitle

\section{Introduction}\label{sec:intro}

The Julia programming language~\cite{BezansonEKS17} aims to decrease the gap
between productivity and performance languages in scientific computing.
While Julia provides productivity features such as dynamic types, optional
type annotations, reflection, garbage collection, symmetric multiple dispatch,
and {dynamic code loading}, its designers carefully arranged those
features to allow for heavy compiler optimization. The key to performance lies
in the synergy between language design, language-implementation techniques, and
programming style~\cite{oopsla18a}.

The goal of this paper is to shed light on one particular design
challenge: how to support \eval that enables dynamic code loading,
\emph{and} achieve good performance. The \eval construct comes from
Lisp~\cite{lisp} and is found in most dynamic languages, but its expressive
power varies from one language to another.  Usually \eval takes a string or
a syntax tree as an argument and executes it in some environment.  In
JavaScript and R, \eval may execute in the current lexical environment; in
Lisp and Clojure, it is limited to the ``top level''.  On this spectrum,
Julia takes the latter approach, which enables compiler optimizations that
would otherwise be unsound. For example, in a program in
\figref{fig:eval-scope}, multiplication \c{x*2} in the body of function \c f
can be safely optimized to an efficient integer multiplication for the call
\c{f(42)}. This is because \eval only accesses the top-level environment and
thus cannot change the value of a local parameter \c x, which is known to be
the integer 42.  For a global \c x, such an optimization would be unsound.

\begin{figure*}
 \begin{minipage}{.27\textwidth}
  \begin{lstlisting}
> x = 3.14
> f(x) = (
    eval(:(x = 0));
    x * 2)

> f(42) # 84
> x     # 0
\end{lstlisting}\vspace{-2mm}
  \caption{Scope of \eval in Julia}\label{fig:eval-scope}
\end{minipage}
\begin{minipage}{.05\textwidth}
\hspace{1em}
\end{minipage}
\begin{minipage}{.31\textwidth}\begin{lstlisting}[morekeywords={defn}]
> (defn g [] 2)
> (defn f [x]
    (eval `(defn g [] ~x))
    (* x (g)))
> (f 42) ; 1764
> (g)    ; 42
> (f 42) ; 1764
\end{lstlisting}\vspace{-2mm}
    \caption{Eval in Clojure}\label{fig:clojure}
  \end{minipage}
\begin{minipage}{.31\textwidth}\begin{lstlisting}
> g() = 2
> f(x) = (
    eval(:(g()=$x));
    x * g())
> f(42) # 84
> g()   # 42
> f(42) # 1764
\end{lstlisting}\vspace{-2mm}
\caption{Eval in Julia}\label{fig:eval-methods}
\end{minipage}
\end{figure*}

What is unique about the design of \eval in Julia is the treatment of
function definitions. Many compilers rely on the information about functions
for optimizations, but those optimizations can be jeopardized by the presence
of \eval. To explain how Julia handles the interaction of \eval and functions,
we contrast it with the Clojure language. \figref{fig:clojure} shows a Clojure
program with a call to a function \c f which, within its body,
updates function \c g by invoking \eval.
Then, the call to \c f returns 1764 because the new definition of \c g is
used. \figref{fig:eval-methods} shows a Julia equivalent of the same program.
Here, the second call to \c f returns 1764 just like in Clojure,
but \emph{the first call returns 84}.
This is because, while the first invocation of \c f is running,
it does not see the redefinition of \c g made by \eval:
the redefinition becomes visible only after the first call (to \c{f(42)}) returns
to the top level. From the compiler's point of view,
this means that calling \eval does not force recompilation of any
methods that are ``in-flight.'' Thus, it is safe to devirtualize, specialize,
and inline functions in the presence of \eval without the need for deoptimization.
For example, \c{x*g()} can be safely replaced with \c{x*2} in \c f for
the first call \c{f(42)}.

\begin{wrapfigure}{r}{.48\textwidth}\vspace{-4mm}
  \begin{minipage}{.48\textwidth}\begin{lstlisting}
*(x::Int,     y::Int)     = mul_int(x, y)
*(x::Float64, y::Float64) = mul_float(x,y)
*(a::Number,  b::AbstractVector) = ...
*(x::Bool,    y::Bool)    = x & y
\end{lstlisting}\end{minipage}\vspace{-2mm}
\caption{Multiple definitions of \c *}\label{fig:mult}\vspace{-2mm}\end{wrapfigure}

Julia made the choice to restrict access to newly defined methods due to
pressing performance concerns. Julia heavily relies on symmetric multiple
dispatch~\cite{Bobrow86}, which allows a function to have multiple
implementations, called methods, distinguished by their parameter type
annotations. At run time, a call is dispatched to the most specific method
applicable to the types of its arguments. While some functions might have
only one method, plenty have dozens or even hundreds of them.  For example,
the multiplication function alone has \jlmultnum standard methods (see an excerpt
in \figref{fig:mult}). If Julia were to always use generic method invocation to
dispatch \c *, programs would become unbearably slow. By constraining \eval,
the compiler can avoid generic invocations. In \figref{fig:eval-scope}, the compiler can
pick and inline the right definition of \c * when compiling \c{f(42)}.
It should be noted that this optimization-friendly \eval semantics does not
apply to data. Function definitions are treated differently from
variables, and changes to global variables (such as in \figref{fig:eval-scope})
can be observed immediately.

Arguably, despite being unusual, Julia's semantics is easy to understand for
programmers. There is always a clear point where new definitions become
visible---at the top level---and thus, users can avoid surprises and dependence
on the exact position of \eval in the code.
However, in case the default semantics is not desirable,
Julia also provides an escape hatch: the built-in function
\c{invokelatest(f)}, which forces the implementation to invoke the
most recent definition of \c f. A slower alternative to \c{invokelatest}
is to call \c f within \eval, which always executes in the top level.

This language mechanism that delays the effect of \eval on function
definitions is called \emph{world age}. In the Julia documentation,
world age is described operationally~\cite{oopsla18a}: every method
defined in a program is associated with an age, and for each function call,
Julia ensures that the current age is larger than the age of the
method about to be invoked. One can think of the world age as a
counter that allows the implementation to ignore all methods that were
born after the last top-level call started. Much of its specification
is tied to implementation details and efficiency considerations.  Our
contributions are as follows:
\begin{itemize}
\item \emph{A core calculus for world age:} We introduce \juliette, a
  calculus that models the notion of world age abstractly. In the calculus,
  the implementation-oriented world-age counters are replaced with
  method tables that are explicitly copied at the top level,
  and \eval is simplified down to an operation that
  evaluates its argument in a specific method table.
\item \emph{Formalization of optimizations:} We formalize and prove
  correct three compiler optimizations, namely inlining,
  devirtualization, and specialization.
\item \emph{Corpus analysis:} We analyze Julia packages to understand
  how \eval is used, and estimate the potential impact of world age on
  library code. We also identify a number of programming
  patterns by manual inspection of selected packages.
\item \emph{Testing the semantics:} We develop a Redex model of our
  calculus and optimizations to allow rapid experimentation and testing.
\end{itemize}

\noindent
The corpus analysis and the Redex model are
publicly available.\footnote{\url{https://github.com/julbinb/juliette-wa}}
\APPENDICITIS{The formalization with detailed proofs can be found
in the extended version of the paper~\cite{belyakova2020world}.}{}

\section{Background}

We start with an overview of the features of Julia relevant to our
work, and review related work.

\subsection{Julia Overview}\label{sec:julia} 

Despite the extensive use of types and type annotations for dispatch
and compiler optimizations, Julia is not statically typed.
A formalization of types and subtyping is provided by~\citet{oopsla18b},
and a general introduction to the language is given by~\citet{BezansonEKS17}.

\paragraph{Values}
Values are either instances of \emph{primitive types}---sequences of bits---or
\emph{composite types}---collection of fields holding values. Every value has
a concrete type (or tag). This tag is either inferred statically or stored
in the boxed value.  Tags are used to resolve multiple dispatch semantically
and can be queried with \typeof.

\paragraph{Types} Programmers
can declare three kinds of user-defined types: \emph{abstract types},
\emph{primitive types}, and \emph{composite types}. Abstract types cannot be
instantiated, while concrete types can. For example, \c{Float64} is
concrete, and is a subtype of abstract type \c{Number}.  Concrete types
have no subtypes. Additionally, user-defined type constructors can have bounded
type parameters and can declare up to a single supertype.

\paragraph{Annotations}
Type annotations include a number of built-in type constructors, such as
union and tuple types.  Tuple types, written \c{Tuple\{A,...\}}, describe
immutable values that have a special role in the language: every method
takes a single tuple argument.  The \c{::} operator ascribes a type to a
definition.  We will use \t to denote annotations.

\paragraph{Subtyping}
The subtyping relation, \c{<:}, is used in run-time casts and multiple
dispatch. Julia combines nominal subtyping, union types, iterated union
types, covariant and invariant constructors, and singleton types.  Tuple
types are covariant in their parameters, so, for instance,
\c{Tuple\{Float64,Float64\}} is a subtype of \c{Tuple\{Number, Number\}}.

\paragraph{Multiple dispatch}
A function can have multiple methods where each method declares what
argument types it can handle; an unspecified type defaults to \c{Any}.
At run time, dispatching a call \c{f(v)} amounts to picking the best
applicable method from all the methods of function \c f.  For this,
the dispatch mechanism first filters out methods whose type
annotations \t are not a supertype of the type tag of \c v.  Then it
takes the method whose type annotation $\t_i$ is the most specific of
the remaining ones.  If the set of applicable methods is empty, or
there is no single best method, a run-time error is raised.

\paragraph{Reflection}

Julia provides a number of built-in functions for run-time introspection
and meta-programming. For
instance, the methods of any function \c f may be listed using
\c{methods(f)}. All the methods are stored in a special data structure,
called the \emph{method table}.  It is possible to search the method table
for methods accepting a given type: for instance, \c{methods(*, (Int,
  Float64))} will show methods of \c * that accept an integer-float pair.
The \eval function takes an expression object and evaluates it in the global
environment of a specified module. For example, \c{eval(:(1+2))} will take
the expression \c{:(1+2)} and return \c3.

\begin{wrapfigure}{r}{.45\textwidth}\vspace{-6mm}
  \begin{minipage}{.45\textwidth}\begin{lstlisting}
for op in (:+, :*, :&, :|)
  eval(:($op(a,b,c) = $op($op(a,b),c)))
end
\end{lstlisting}\end{minipage}\vspace{-3mm}
\caption{Code generation}\label{fig:eval-for}\vspace{-2mm}\end{wrapfigure}

\noindent
\Eval is frequently used for meta-programming as part of code generation.
For example, \figref{fig:eval-for} generalizes some of
the basic binary operators to three arguments, generating four new methods.
Instead of building expressions explicitly, one can also invoke the parser on a string.
For instance, \c{eval(Meta.parse("id(x) = x"))} creates an identity method.

\subsection{Related Work} 

This paper is concerned with controlling the visibility of function
definitions. Most programming languages control \emph{where} definitions are
visible, as part of their scoping mechanisms. Controlling \emph{when}
function definitions become visible is less common.

Languages with an interactive development environment had to deal with the
addition of new definitions for functions from the
start~\cite{lisp}. Originally, these languages were interpreted. In that
setting, allowing new functions to become visible immediately was both easy
to implement and did not incur any performance overhead.

Just-in-time compilation changed the performance landscape, allowing dynamic
languages to have competitive performance. However, this meant that to generate
efficient code, compilers had to commit to particular versions of functions.
If any function is redefined, all code that depends on that function must
be recompiled; furthermore, any function currently executing has to be
deoptimized using mechanisms such as on-stack-replacement~\cite{Hoelzle92}.
The drawback of deoptimization is that it makes the compiler more complex
and hinders some optimizations. For example, a special \c{assume} instruction
is introduced as a barrier to optimizations by \citet{popl18}, who formalized
the speculation and deoptimization happening in a model compiler.

Java allows for dynamic loading of new classes and provides
sophisticated controls for where those classes are visible.
This is done by the class-loading framework that is
part of the virtual machine~\cite{LB98}.  Much research happened in that
context to allow the Java compiler to optimize code in the presence of dynamic
loading. \citet{detlefs99} describe a technique, which they call preexistence,
that can devirtualize a method call when the receiver object predates the
introduction of a new class.  Further research looked at performing
dependency analysis to identify which methods are affected by the newly
added definitions, to be then recompiled on
demand~\cite{nguyen2005interprocedural}. \citet{glew2005method}
describes a type-safe means of inlining and devirtualization: when newly
loaded code is reachable from previously optimized code, these optimizations
must be rechecked.

Controlling \emph{when} definitions take effect is important in
dynamic software updating, where running systems are updated with new
code~\cite{lee}. \citet{hicks} introduce a calculus for reasoning about
representation-consistent dynamic software updating in C-like languages.
One of the key elements for their result is the presence of an \c{update}
instruction that specifies when an update is allowed to happen. This has
similarities to the world-age mechanism described here.

Substantial amounts of effort have been put into building calculi that
support \eval and similar constructs. For example, \citet{politz12}
described the ECMAScript 5.1 semantics for \eval, among other
features. \citet{glew2005method} formalized dynamic class loading in the
framework of Featherweight Java, and \citet{matthews2008operational}
developed a calculus for \eval in Scheme.  These works formalize the
semantics of dynamically modifiable code in their respective languages, but,
unlike Julia, the languages formalized do not have features explicitly
designed to support efficient implementation.

\section{World age in Julia}\label{sec:wa-julia}

The world-age mechanism in Julia limits the set of methods that can be
invoked from a given call site. World age fixes the set of method
definitions reachable from the currently executing method, isolating
it from dynamically generated ones. In turn, this allows the compiler
to optimize code without need for deoptimization, and limits the
number of required synchronization points in a multi-threaded program.
If full access to methods is required, however, Julia provides escape
hatches to bypass world age by sacrificing performance.

\subsection{Defining World Age}

The primary goal of the world-age mechanism is to align the language's
semantics with the assumptions made by the Julia just-in-time compiler's
optimizations. Semantically, newly added methods (i.e. ones defined 
using \eval) only become
visible when execution returns to the top level, and the set of callable methods
for an execution is fixed when it leaves the top level. Compilation of
methods is triggered---only at the top level---when one of the following holds:
(1) a function is called with previously unobserved types of arguments, or (2)
a previously compiled function needs to be recompiled due to a change in its
own definition or  one of its dependencies. Since the set of visible methods
gets fixed at a top-level call, and compilation only occurs from the top level,
the compiler may assume that the currently known set of methods is complete
and can optimize accordingly.



\begin{figure}[!bt]
  \begin{minipage}{7cm}
    \begin{minipage}{3cm}
\begin{lstlisting}[linewidth=2.6cm]
f() = ... # t1
g() = ... # t2
g() = ... # t3
\end{lstlisting}
\hspace{-0.1cm}\(
\begin{array}{rcl}
\texttt{f}&\mapsto&[\texttt{t}_1,\infty]\\
\texttt{g}_1&\mapsto&[\texttt{t}_2,\texttt{t}_3]\\
\texttt{g}_2&\mapsto&[\texttt{t}_3,\infty]\\
\end{array}
\)
\end{minipage}
\begin{minipage}{3.8cm}
    \includegraphics{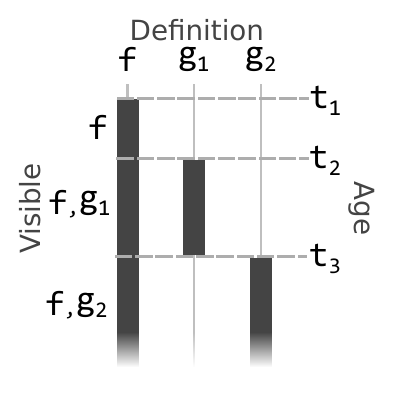}
\end{minipage}
\caption{Age ranges}\label{defage}
  \end{minipage}
  \hspace{5mm}
  \begin{minipage}{6cm}
\begin{lstlisting}[linewidth=6.5cm]
> function ntl()
    eval(:(f() = 1))
    f()
  end
> ntl()
ERROR: MethodError: no method matching f()
The applicable method may be too new:
running in world age 26806, while current
world is 26807. Closest candidates are:
  f() at REPL[10]:2
\end{lstlisting}
\vspace{-3mm}\caption{Age error}\label{fig:toonew}
\end{minipage}

\end{figure}

For performance reasons, the world-age mechanism is implemented by a
simple monotonic counter. The counter is incremented every time a method is
defined, and its value becomes the method's ``birth age''.
Every method also can a store a ``death age'' (that is initially infinity), which
is set when it is replaced or deleted. Methods with their
ages are stored in a global data structure called a method table.
The birth and death ages of a method determine the minimum and maximum world age
from which the method can be invoked.
This is illustrated in \figref{defage}. Here, we define functions \c f and \c g,
where \c f has only one method, whereas \c g has two methods, one replacing the other.
Let us observe what definitions are visible at each step.
Since \c f is defined once at time $\texttt{t}_1$ (i.e. when the world-age
counter is equal to $\texttt{t}_1$) and
never redefined, it has a birth age of $\texttt{t}_1$ and a death age of
$\infty$; thus, it can be used anytime after $\texttt{t}_1$ and
forevermore. In contrast, method $\texttt{g}_1$, created at time $\texttt{t}_2$,
is redefined at time $\texttt{t}_3$, so its birth age is $\texttt{t}_2$ and
death age is~$\texttt{t}_3$; therefore, $\texttt{g}_1$ can be called from
world age $\texttt{t}$ where $\texttt{t}_2 \leq \texttt{t} \leq \texttt{t}_3$.
Finally, $\texttt{g}_2$ is never redefined,
so its age ranges from $\texttt{t}_3$ to $\infty$.

This language design can be restrictive. The easiest way to run afoul of world
age is to attempt to define a method with \eval and call it immediately thereafter. In \figref{fig:toonew},
function \c{ntl} creates a new function \c f using \eval and attempts to
call it immediately without returning to the top level.
Let us dissect the example and its error message. Function \c{ntl} was
invoked from the top level with age of \c{26806}, thus limiting the set
of visible methods to the ones born by this time. Then, \c{ntl} used \eval to
define \c f, giving it a birth age of \c{26807}. Finally, to call \c f, \c{ntl}
needs a method of \c f that was born by \c{26806}, but none exists.
Since the only method of \c f was created at \c{26807}, a \c{MethodError} is
raised indicating that no method was found.

\subsection{Breaking the Age Barrier}

There are situations in which the world-age mechanism is too restrictive:
for example, when a program wishes to programmatically generate code and then
use it immediately.
To accommodate these circumstances, Julia provides two ways for programmers to
execute code ahead of its birth age.
The first is \eval itself, which executes its arguments as if they were
at the top level, thus allowing any existing method to be called.
However, \eval needs to interpret arbitrary ASTs and is rather slow.
Luckily, in many circumstances, the program only wishes to bypass the world-age
restriction for a single function call. For this, one can use
\invokelatest, a built-in function that calls its argument from the latest
world age. While substantially slower than a normal call, \invokelatest is faster
than \eval. Moreover, \invokelatest is passed arguments directly from the
calling context so that values do not need to be inserted into \eval's AST.
Both \eval and \invokelatest can be used to amend the example shown in
\figref{fig:toonew} to call \c{f} in the latest world age. If we
replace the bare call to \c{f} with a call to \c{eval(:(f()))}, then the
call to \c{ntl} will produce 1. Similarly, \c{Base.invokelatest(f)} will get
the same result.

Using either of these mechanisms, programmers can opt out from the
limitation imposed by world age, but this comes with performance implications.
Since neither \invokelatest nor \eval can be optimized, and both can kick off
additional JIT compilation, they can have substantial performance impact.
However, this impact is limited to only these explicitly impacted call
sites. As a result, programmers can carefully design their programs
to minimize the number of broken barriers, thus minimizing the
performance impact of the dynamism.

\subsection{World Age in Practice}\label{sec:analysis}

We have argued that world age is useful for performance of Julia
programs, but does its semantics match programmers' expectations? We propose
 a slightly indirect
answer to this, analyzing a corpus of programs and observing how often they
use the two escape hatches mentioned above. Of those, \invokelatest is the clearest
indicator, as there is no reason to use it except to bypass the world
age. Similarly, \eval'ing a function call means evaluating that call in the
latest world age, thereby allowing it to see the latest method definitions.
The \eval indicator is imprecise, however, as there are uses of \eval that
are not impacted by world age.

We take as our corpus all 4,011 registered Julia packages as of August 2020.
The results of statically analyzing the code base are shown in \figref{fig:wapkgusage}.
The analysis shows that 2,846 of the 4,011 packages used neither \eval
nor \invokelatest, and thus are definitely age agnostic. Of the remaining
packages, 1,094 used \eval only, and so \emph{could} be impacted.
15 packages used \invokelatest only, and some 56 used both.
We can reasonably presume that at least these latter 71 packages are
impacted by world age because they bypass it using \invokelatest.
Drawing conclusions about \eval-using packages requires further analysis.

\begin{figure}[!h]
  \begin{minipage}{0.45\textwidth}
  \centering
  \includegraphics[scale=0.5]{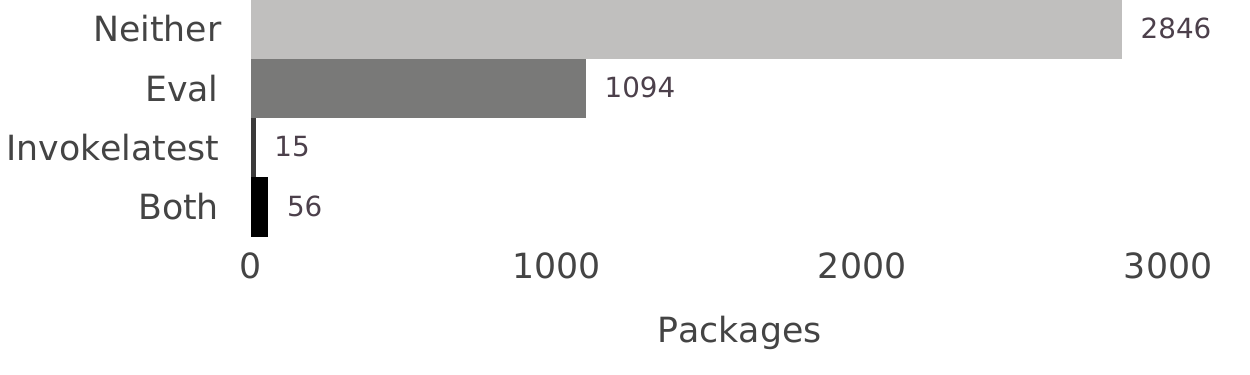}
  \vspace{-3mm}
  \caption{\eval and \invokelatest use by package}
  \label{fig:wapkgusage}
  \end{minipage}
  \begin{minipage}{0.45\textwidth}
  \includegraphics[scale=0.6]{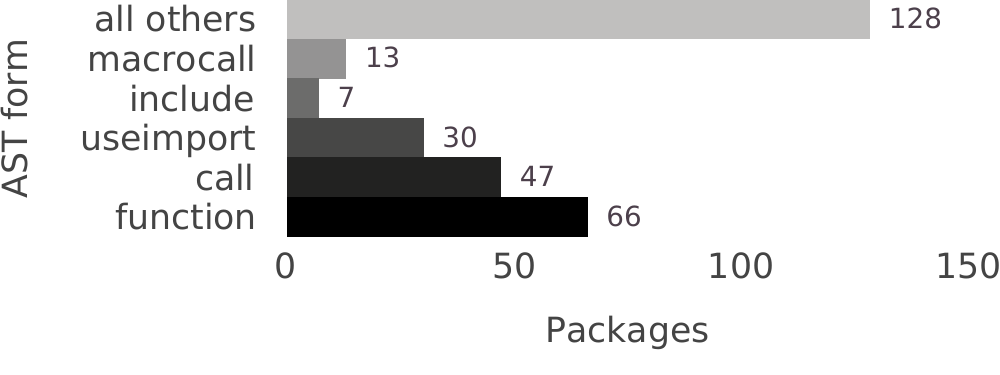}
  \vspace{-3mm}
  \caption{Static use of AST forms in all packages}
  \label{fig:all_static_ast}
  \end{minipage}
\end{figure}

To understand if packages that only use \eval are impacted by world
age, we statically analyzed the location of calls to \eval and their
arguments by parsing files that contain \eval. For each call, we 
classify the argument ASTs, recursively traversing them and counting
occurrences of relevant nodes. The analysis is conservative: it assumes
that an AST that is not statically obvious (such as a variable) could
contain anything.
\figref{fig:all_static_ast} 
shows how many packages use world-age relevant AST forms. Only uses of
\eval from within functions---where world-age could be relevant---are shown;
top-level uses of \eval---which cannot be affected by world age---are filtered out.
The ``all-others'' category encompasses all AST forms not relevant to world age.
While this aggregate is, taken as a whole, more common than any other single
AST form, none of the constituent AST forms is more prevalent than
function calls. Therefore, most common arguments to \eval are function definitions,
followed by function calls and loading of modules and other files.

Using the results of the static analysis,
we estimate that about 4--9\% of the 4,011 packages might be affected
by world age. The upper bound (360 packages) is a conservative estimate,
which includes 289 packages with potentially world-age-related calls
to \eval but without calls to \invokelatest, and the 71 packages
that use \invokelatest.
The lower bound (186 packages) includes 115 \invokelatest-free packages that
call \eval with both function definitions and function calls,
and the 71 packages with \invokelatest. 


To validate our static results, we dynamically analyzed 32 packages
out of the~186 identified as possibly affected by world age. These packages
were selected by randomly sampling a subset of 49 packages, which 
was then further reduced by removing packages that did not run, whose
tests failed, or that did not call \eval or \invokelatest at least once.
Over this corpus, the dynamic analysis was implemented by adding
instrumentation to record calls to \eval and \invokelatest, recording the ASTs
and functions, respectively, as well as the stack traces for each invocation.


\begin{figure}[!ht]
  \begin{minipage}{0.49\textwidth}
  \includegraphics[scale=0.5]{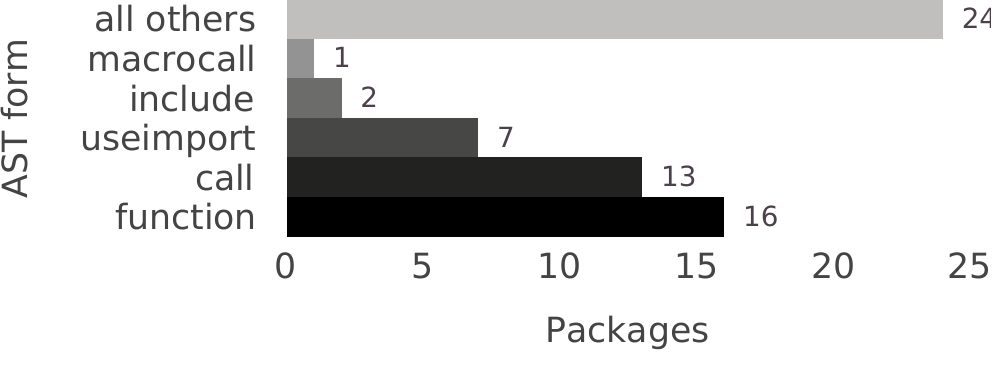}
    \vspace{-3mm}
  \caption{Static use of AST forms used by package}
  \label{fig:stasts}
  \end{minipage}
  \begin{minipage}{0.49\textwidth}
  \includegraphics[scale=0.5]{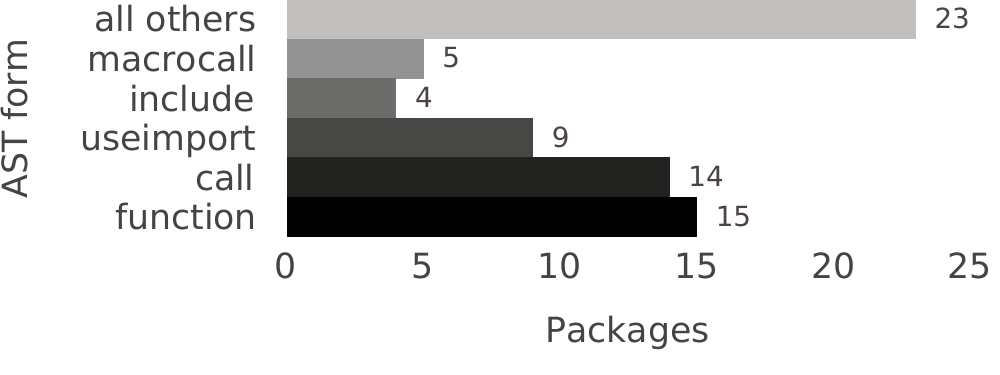}
    \vspace{-3mm}
  \caption{Dynamic use of AST forms used by package}
  \label{fig:dynasts}
  \end{minipage}
  \vspace{-1mm}
\end{figure}

The results of the static and dynamic analysis of the 32 packages
are given in \figref{fig:stasts} and \figref{fig:dynasts}, respectively.
Both analysis methods agreed that the
most common world-age-relevant use of \eval was to define functions, followed
by making function calls and importing other packages. In general, the dynamic
analysis was able to identify more packages that used each AST form, as it can
examine every AST ran through \eval, not only statically declared ones. However,
this accuracy is dependent on test coverage.



\subsection{Programming with World Age}

We now turn to common patterns found by manual inspection of select packages
of the corpus.

\paragraph{Boilerplating.}

The most common use of \eval is to automatically generate code for
boilerplate functions. These generated functions are typically created at the top-level so that
they can be used by the rest of the program. Consider the
\c{DualNumbers.jl} package, which provides a common dual number representation
for automatic differentiation. A dual number, which is a pair of the normal value and an ``epsilon'', which represents the derivative of the value, should support the same
operations as any number does and mostly defers to the standard operations.
For example, the \c{real} function, which gets the real component of a number
when applied to a dual number should recurse into both the actual and epsilon
value. \Eval can generate all of the needed implementations at package
load time (\c{@eval} is a macro that passes its argument to \eval as an AST).
\begin{lstlisting}
  for op in (:real, :imag, :conj, :float, :complex)
     @eval Base.$op(z::Dual) = Dual($op(value(z)), $op(epsilon(z)))
  end
\end{lstlisting}
A common sub-pattern is to generate proxies for interfaces defined by an
external system. For this purpose, the \c{CxxWrap.jl} library uses \eval at the
top level to generate (with the aid of a helper method that generates the ASTs)
proxies for arbitrary C++ libraries.
\begin{lstlisting}
  eval(build_function_expression(func, funcidx, julia_mod))
\end{lstlisting}

\paragraph{Defensive callbacks.}

The most widely used pattern for \invokelatest deals with function values of
unknown age.  For example, when invoking a callback provided by a client, a
library may protect itself against the case where the provided function was
defined after the library was loaded. There are two forms of this pattern.
The simplest uses \invokelatest for all callbacks, such as the library
\c{Symata.jl}:
\begin{lstlisting}
  for hook in preexecute_hooks
    invokelatest(hook)
  end
\end{lstlisting}
Every hook in \c{preexecute_hooks} is protected against world-age errors (at
the cost of slower function calls).  To avoid this slowdown, the second
common pattern catches world-age exceptions and falls back to \invokelatest such as in
from the \c{Genie.jl} web server:
\begin{lstlisting}
  fr::String = try
    f()::String
  catch
    Base.invokelatest(f)::String
  end
\end{lstlisting}
This may cause surprises, however. If a sufficiently old method exists, the
call may succeed but invoke the wrong method.\footnote{In Julia,
  higher-order functions are passed by name as generic functions,
  so a callback will be subject to multiple dispatch.}
  This pattern may also catch unwanted exceptions and
execute \c f twice, including its side-effects.

\paragraph{Domain-specific generation}

As a language targeting scientific computing, Julia has a large number of
packages that do various symbolic domain reasoning.  Examples include
symbolic math libraries, such as \c{Symata} and \c{GAP}, which have the
functionality to generate executable code for symbolic expressions.
\c{Symata} provides the following method to convert an internal expression
(a \c{Mxpr}) into a callable function. Here, \c{Symata} uses a translation
function \c{mxpr_to_expr} to convert the \c{Symata} \c{mxpr} into a Julia
\c{Expr}, then wraps it in a function definition (written using explicit AST
forms), before passing it to \eval.

\begin{lstlisting}
  function Compile(a::Mxpr{:List}, body)
    aux = MtoECompile()
    jexpr = Expr(:function,
                 Expr(:tuple, [mxpr_to_expr(x, aux) for x in margs(a)]...),
                 mxpr_to_expr(body, aux))
    Core.eval(Main, jexpr)
  end
\end{lstlisting}

\paragraph{Bottleneck}

Generated code is commonly used in Julia as a way to mediate between a high-level 
DSL and a numerical library. Compilation from the DSL to executable
code can dramatically improve efficiency while still retaining a high-level
representation. However, functions generated thusly cannot be called from the
code that generated them, since they are too new. Furthermore, this code is expected
to be high-performance, so using \invokelatest for every call is not acceptable. The
bottleneck pattern overcomes these issues. The idea is to split the program
into two parts: one that generates code, and another that runs it. The two
parts are bridged with a single \invokelatest call (the ``bottleneck''),
allowing the second part to call the generated code efficiently. The pattern
is used in the \c{DiffEqBase} library, part of the DifferentialEquations family
of libraries that provides numerical differential equation solvers.

\begin{lstlisting}
  if hasfield(typeof(_prob),:f) && hasfield(typeof(_prob.f),:f) &&
       typeof(_prob.f.f) <: EvalFunc
    Base.invokelatest(__solve,_prob,args...; kwargs...)
  else
    __solve(_prob,args...;kwargs...)
  end
\end{lstlisting}
Here, if \c{_prob} has a field \c f, which has another field \c f, and the
type of said inner-inner \c f is an \c{EvalFunc} (an internally-defined
wrapper around any function that was generated with \eval), then it will
invoke the \c{__solve} function using \invokelatest, thus allowing \c{__solve}
to call said method. Otherwise, it will do the invocation normally.

\paragraph{Superfluous eval}

This is a rare anti-pattern, probably indicating
a misunderstanding of world age by some Julia programmers. For example,
\code{Alpine.jl} package has the following call to \eval:
\begin{lstlisting}
  if isa(m.disc_var_pick, Function)
    eval(m.disc_var_pick)(m)
\end{lstlisting}
Here, \code{eval(m.disc_var_pick)} does nothing useful but imposes a performance
overhead. Because \c{m.disc_var_pick} is already a function value, calling
\eval on it is similar to using \c{eval(42)} instead of \c{42} directly;
this neither bypasses the world age nor even interprets an AST.

\paragraph{Name-based dispatch}
Another anti-pattern uses \eval to convert function names to functions.
For example, \c{ClassImbalance.jl} package chooses a function to call,
using its uninterpreted name:
\begin{lstlisting}
  func = (labeltype == :majority) ? :argmax : :argmin
  indx = eval(func)(counts)
\end{lstlisting}
It would be more efficient to operate with function values directly,
i.e. \c{func = ... : argmin} and then call it with \c{func(counts)}.
Similarly, when a symbol being looked up is generated dynamically, as it is
in the following example from \c{TextAnalysis.jl}, the use of \eval could be
avoided.
\begin{lstlisting}
  newscheme = uppercase(newscheme)
  if !in(newscheme, available_schemes) ...
  newscheme = eval(Symbol(newscheme))()
\end{lstlisting}
This pattern could be replaced with a call \c{getfield(TextAnalysis, Symbol(newscheme))},
where \c{getfield} is a special built-in function that finds a value in the
environment by its name. Using \c{getfield} would be more efficient than \eval.

\section{Juliette, a world age calculus}\label{sec:wa-formal}

To formally study world age, we propose a core calculus, named \juliette, that
captures the essence of Julia's semantics and permits us to reason about
the correctness of some of the optimizations performed by the compiler.

Designing such a calculus is always an exercise in parsimony, balancing the
need to highlight principles while avoiding entanglements with particular
implementation choices. The first decision to grapple with is how to
represent world age. While efficient, counters are also pervasive and cause
confusion.\footnote{Although Julia's documentation attempts to explain world
  age~\cite{bib:juliadoc-methods}, questions such as
  \href{discourse.julialang.org/t/world-age-problem-explanation/9714}{this
    one} pop up periodically.} Furthermore, they obscure reasoning about
program-state equivalence; two programs with different initial counter
values could, if care is not taken, appear different. Dispensing with the counters used by Julia's compiler
is appealing.

An alternative that we chose is a more abstract representation of world age, one
that captures its intent: control over method visibility. \juliette
uses \emph{method tables} to represent sets of methods available for
dispatch. The \emph{global table} is the method table that records all
definitions and always reflects the ``true age'' of the world; the global table is
part of \juliette program state. \emph{Local tables} are method tables used
to resolve method dispatch during execution and may lag behind the global
table when new functions are introduced. Local tables are then baked into program syntax
to make them explicit during execution. As in Julia,
\juliette separates method tables (which represent code) from data: as mentioned
in \secref{sec:intro}, the world-age semantics only applies to code.
As global variables interact with \eval in the standard way,
we omit them from the calculus.

The treatment of methods is similar in both \juliette and Julia up to (lexically) local method definitions.
In both systems, a generic function is defined by the set of methods with the
same name. In Julia, local methods are syntactic sugar for
global methods with fresh names. For simplicity, we do not model this
aspect of Julia:  \juliette methods are always added to the global method table.
All function calls are resolved using the set of methods found in the
current local table. A function value \m denotes the name of a function and
is not itself a method definition. Then, since \juliette omits global variables,
its global environment is entirely captured by the global method table.

Although in Julia \eval incorporates two features---top-level evaluation and
quotation\footnote{Represented with the \c{\$} operator in Julia, as in
\texttt{eval(:(g() = \$x))} in \figref{fig:eval-methods}.}---only
top-level evaluation is relevant to world age, and this is what we model
in \juliette. Instead of an \eval construct, the calculus has
operations for evaluating expressions in different method-table contexts.
In particular, \juliette
offers a \emph{global evaluation construct} \evalg{\e} (pronounced ``banana
brackets'') that accesses the most recent set of methods. This is equivalent
to \eval's behavior, which evaluates in the latest world age.
Since \juliette does not have global variables, \evalg{\e} reads
from the local environment directly instead of using quotation.

Every function call \mcall{\m}{\vs} in \juliette gets resolved in the closest
enclosing local method table \MT by using an \emph{evaluation-in-a-table}
construct \evalt\MT{\mcall\m\vs}.
Any top-level function call first takes a snippet of the current global table
and then evaluates the call in that \emph{frozen} snippet.
That is, \evalg{\mcall\m\vs} steps
to \evalt\MT{\mcall\m\vs} where \MT is the current global table. Thus, once
a snippet of the global table becomes local table,
all inner function calls of \mcall{\m}{\vs}
will be resolved using this table, reflecting the fact that a currently
executing top-level function call does not see updates to the global table.

To focus on world age, \juliette omits irrelevant features such as loops or
mutable variables. Furthermore, the calculus is parameterized over values,
types, type annotations, a subtyping relation, and primitive operations.
For the purposes of this paper, only minimal assumptions are needed about those.

\subsection{Syntax} 

\begin{figure}
\[\footnotesize
\begin{array}{ccl@{\qquad}l}
    \\ \e & ::= & & \text{\emph{Expression}}
    \\ &\Alt& \v & \text{value}
    \\ &\Alt& \x & \text{variable}
    \\ &\Alt& \seq{\e_1}{\e_2} & \text{sequencing}
    \\ &\Alt& \primcalld{\es} & \text{primop call}
    \\ &\Alt& \mcall{\e}{\es} & \text{function call}
    \\ &\Alt& \md & \text{method definition}
    \\ &\Alt& \evalg{\e} & \text{global evaluation}
    \\ &\Alt& \evalt{\MT}{\e} & \text{evaluation in a table}
    \\
    \\ \p & ::= & \evalg{\e} & \text{\emph{Program}}
    \\
    \\ \md & ::= & \mdefd & \text{\emph{Method definition}}
\end{array}\hspace{5mm}
\begin{array}{ccl@{\qquad}l}
    \\ \v & := & \ldots & \text{\emph{Value}}
    \\ &\Alt& \skp & \text{unit value}
    \\ &\Alt& \m & \text{generic function}
    \\
    \\ \g & := & \ldots & \text{\emph{Type tag}}
    \\ &\Alt& \Unit & \text{unit type}
    \\ &\Alt& \mty  & \text{type tag of function \m}
    \\
    \\ \t & := & \ldots & \text{\emph{Type annotation}}
    \\ &\Alt& \top & \text{top type}
\end{array}
\]
\caption{Surface syntax}\label{syntax}
\end{figure}

The surface syntax of \juliette is given in \figref{syntax}. It includes
method definitions \md, function calls~\mcall{\e}{\es}, sequencing
\seq{\e_1}{\e_2}, global evaluation \evalg{\e},
evaluation in a table \evalt{\MT}{\e}, variables \x, values~\v,
primitive calls~\primcall{l}{\es}, type tags \g, and type annotations \t.
Values \v include \skp (unit value, called \c{nothing} in Julia) and \m (generic
function value). Primitive operators $\primopd$ represent built-in functions
such as \c{Base.mul_int}. Type tags \g include \Unit (unit type, called
\c{Nothing} in Julia) and $\mty$ (tag of function value \m). Type annotations \t
include $\top \in \t$ ($\top$ is the top type, called \c{Any} in Julia) and $\g
\subseteq \t$ (all type tags serve as valid type annotations).

\subsection{Semantics}\label{subsec:wa-semantics}

The internal syntax of \juliette is given in the top of \figref{semantics}.  It
includes evaluation result \rslt (either value or error), method table \MT,
and two evaluation contexts, \Xx and \Cx, which are used to define
small-step operational semantics of \juliette.  Evaluation contexts \Xx are
responsible for simple sequencing, such as the order of argument evaluation;
these contexts never contain global/table evaluation expressions \evalg{\cdot}
and \evalt{\MT}{\cdot}.
World evaluation contexts \Cx, on the other hand, capture the full grammar
of expressions.  

Program state is a pair \stwa{\MT}{\plugCx{\Cx}{\e}} of a global method
table \MT and an expression \plugCx{\Cx}{\e}.  
We define the semantics of the calculus using two
judgments: a normal small-step evaluation denoted by
\evalwa{\MT}{\plugCx{\Cx}{\e}}{\MT'}{\plugCx{\Cx}{\e'}}, and a step to an
error \evalerrwa{\MT}{\plugCx{\Cx}{\e}}{\err}.
The $\typeof(\v) \in \g$
operator returns the tag of a value.  We require that $\typeof(\skp) =
\Unit$ and $\typeof(\m) = \mty$.  We write $\typeof(\vs)$ as a shorthand for
$\obar{\typeof(\v)}$.  Function $\Primop(l, \vs) \in \rslt$ computes primop
calls, and function $\PrimopRT(l,\gs) \in \g$ indicates the tag of $l$'s
return value when called with arguments of types \gs.  These functions have
to agree, i.e.  $\forall \vs,\gs. (\typeof(\vs)=\gs \land \Primop(l, \vs) =
\v' \implies \typeof(\v') = \PrimopRT(l,\gs))$.
The subtyping relation \jsub{\t_1}{\t_2} is used for multiple dispatch.
We require that $\jsub{\t}{\top}$
($\top$ is indeed the top type) and $\jsub{\g_1}{\g_2} \Leftrightarrow \g_1
\equiv \g_2$ (tags are final, i.e. do not have subtypes).

\begin{figure}
  \footnotesize
  \[
  \begin{array}{ccl@{\qquad}l}
      \\ \rslt & ::= & & \text{\emph{Result}}
      \\ &\Alt& \v   & \text{value}
      \\ &\Alt& \err & \text{error}
      \\
      \\ \MT & ::= & & \text{\emph{Method table}}
      \\ &\Alt& \varnothing       & \text{empty table}
      \\ &\Alt& \MText{\MT}{\md}  & \text{table extension}
      \\

  \end{array}
  \begin{array}{ccl@{\qquad}l}
      \\ \Xx & ::= & & \text{\emph{Simple evaluation context}}
      \\ &\Alt& \hole & \text{hole}
      \\ &\Alt& \seq{\Xx}{\e} & \text{sequence}
      \\ &\Alt& \primcall{l}{\vs\ \Xx\ \es} & \text{primop call (argument)}
      \\ &\Alt& \mcall{\Xx}{\es} & \text{function call (callee)}
      \\ &\Alt& \mcall{\v}{\vs\ \Xx\ \es} & \text{function call (argument)}
      \\
      \\ \Cx & ::= & & \text{\emph{World evaluation context}}
      \\ &\Alt& \Xx & \text{simple context}
      \\ &\Alt& \plugx{\Xx}{\evalg{\Cx}} & \text{global evaluation}
      \\ &\Alt& \plugx{\Xx}{\evalt{\MT}{\Cx}} & \text{evaluation in a table \MT}
  \end{array}
  \]
~\\
\begin{mathpar}
\inferrule[\WAE{Seq}]
  { }
  { \evalwad{\plugCx{\Cx}{\seq{\v}{\e}}}{\plugCx{\Cx}{\e}} }

\inferrule[\WAE{Primop}]
  { \Primop(l, \vs) = \v' }
  { \evalwad{\plugCx{\Cx}{\primcalld{\vs}}}{\plugCx{\Cx}{\v'}} }

\inferrule[\WAE{MD}]
  { \md \equiv \mdefd }
  { \evalwa
      {\MT}{\plugCx{\Cx}{\md}}
      {\MText{\MT}{\md}}{\plugCx{\Cx}{\m}} }

\inferrule[\WAE{CallGlobal}]
  { }
  { \evalwad
      {\plugCx\Cx{\evalg{\plugx\Xx{ \mcall\m\vs}}}}
      {\plugCx\Cx{\evalg{\plugx\Xx{ \evalt{\MT}{\mcall\m\vs}}}}}}

\inferrule[\WAE{CallLocal}]
  { \typeof(\vs) = \gs \\ \getmd(\MT',\m,\gs)=\mdefd }
  { \evalwad
      {\plugCx\Cx{\evalt{\MT'}{\plugx\Xx{\mcall\m\vs}}}}
      {\plugCx\Cx{\evalt{\MT'}{\plugx\Xx{\e\subst\xs\vs}}}} }

\inferrule[\WAE{ValGlobal}]
  { }
  { \evalwad{\plugCx{\Cx}{\evalg{\v}}}{\plugCx{\Cx}{\v}} }

\inferrule[\WAE{ValLocal}]
  { }
  { \evalwad{\plugCx\Cx{\evalt{\MT'}\v}}{\plugCx\Cx\v} }
\end{mathpar}

~\\

\begin{mathpar}\footnotesize
\inferrule[\WAE{VarErr}]
  { }
  { \evalerrwad{\plugCx{\Cx}{\x}}{\err} }

\inferrule[\WAE{PrimopErr}]
  { \Primop(l, \vs) = \err }
  { \evalerrwad{\plugCx{\Cx}{\primcalld{\vs}}}{\err} }

\inferrule[\WAE{CalleeErr}]
{ \v_c \neq \m }
{ \evalerrwad
    {\plugCx{\Cx}{\mcall{\v_c}{\vs}}}
    {\err} }

\inferrule[\WAE{CallErr}]
{ \typeof(\vs) = \gs \\ \getmd(\MT', \m, \gs) = \err }
{ \evalerrwad
    {\plugCx{\Cx}{\evalt{\MT'}{\plugx{\Xx}{\mcall{\m}{\vs}}}}}
    {\err} }
\end{mathpar}
~\\[3mm]
\[\footnotesize
\begin{array}{rcl}
  \getmd(\MT, \m, \gs) & = &
    \min(\applcbl(\latest(\MT), \m, \gs)) \\
  \\
  \latest(\MT) & = & \latest(\emptyset, \MT) \\
  \latest(mds, \varnothing) & = & mds \\
  \latest(mds, \MText{\MT}{\md}) & = & \latest(mds \cup \md, \MT) 
    \text{ if } \neg \containseq(mds, \md) \\
  \latest(mds, \MText{\MT}{\md}) & = & \latest(mds, \MT)
    \qquad\, \text{ if } \containseq(mds, \md) \\
  \\
  \applcbl(mds, \m, \gs) & = & \{\mdefd \in mds\ |\
    \jsub{\gs}{\ts} \} \\
  \\
  \min(mds) & = & \mdefd \in mds \text{ such that }
    \forall \mdef{\m}{\obar{\jty{\_}{\t'}}}{\_}
    \in mds\ .\ \jsub{\ts}{\ts'}\\
  \min(mds) & = & \err \text{ otherwise}
  \\
  \containseq(mds, \md) & = & \exists\,\md' \in mds \text{ such that } \\
    & & \quad (\md \equiv \mdef{\m}{\obar{\jty{\_}{\t}}}{\_}) \ \land\
        (\md' \equiv \mdef{\m}{\obar{\jty{\_}{\t'}}}{\_}) \
        \land \ \jsub{\ts}{\ts'} \land\ \jsub{\ts'}{\ts}
\end{array}
\]
\caption{Internal Syntax and Semantics}\label{semantics}
\end{figure}

\paragraph{Normal Evaluation}

These rules capture successful program executions.
Rule \WAE{Seq} is completely standard: it throws away the evaluated part of
a sequencing expression. Rules \WAE{ValGlobal} and \WAE{ValLocal} pass value
\v to the outer context. This is similar to Julia where \eval returns the
result of evaluating the argument to its caller.
Rule \WAE{MD} is responsible for updating the global table: a method
definition $\md$ will extend the current global table \MT into
$\MText{\MT}{\md}$, and itself evaluate to $\m$, which is a function value.
Note that \WAE{MD} only extends the method table and leaves existing
definitions in place.  If the table contains multiple definitions of a
method with the same signature, it is then the dispatcher's responsibility
to select the right method; this mechanism is described below in more
detail.

The two call forms $\WAE{CallGlobal}$ and $\WAE{CallLocal}$ form the core of
the calculus. The rule $\WAE{CallGlobal}$ describes the case where a method
is called directly from a global evaluation expression. In Julia, this means
either a top-level call, an \invokelatest call, or a call within \eval such
as \c{eval(:(g(...)))}. The ``direct'' part is encoded with the use of a
simple evaluation context \Xx.  In this global-call case, we need to save
the current method table into the evaluation context for a subsequent use by
$\WAE{CallLocal}$. To do this, we annotate
the call \mcall{\m}{\vs} with a copy of the current global method table
$\MT$, producing \evalt{\MT}{\mcall\m\vs}.

To perform a local call---or, equivalently, a call after the invocation has
been wrapped in an annotation specifying the current global
table---$\WAE{CallLocal}$ is used.  This rule resolves the call according to
the tag-based multiple-dispatch semantics in the ``deepest'' method table
$\MT'$ (the use of \Xx makes sure there are no method tables between $\MT'$
and the call). Once an appropriate method has been found, it proceeds as a
normal invocation rule would, replacing the method invocation with the
substituted-for-arguments method body. Note that the body of the method is
still wrapped in the \evalt{\MT'}{} context. This ensures that nested calls
will be resolved in the same table (unless they are more deeply wrapped
in a global evaluation \evalg{}).

An auxiliary meta-function $\getmd(\MT, \m, \gs)$, which is used to resolve
multiple dispatch, is defined in the bottom of \figref{semantics}. This function
returns the most specific method applicable to arguments with type tags \gs, or
errs if such a method does not exist.  If the method table contains
multiple equivalent methods, older ones are ignored. For example, for the
program
\[
  \evalg{\seq{\mdef{\mval{g}}{}{2}}
  {\seq{\mdef{\mval{g}}{}{42}}{\mcall{\mval{g}}{}}}},
\]
function call \mcall{\mval{g}}{} is going to be resolved in the table
$\MText{(\MText{\varnothing}{\mdef{\mval{g}}{}{2}})}{\mdef{\mval{g}}{}{42}}$,
which contains two equivalent methods (we call methods equivalent if they
have the same name and their argument type annotations are equivalent with
respect to subtyping). In this case, the function \getmd will
return method \mdef{\mval{g}}{}{42} because it is the newest method out of
the two.

Note that functions can be mutually recursive because of the dynamic nature
of function call resolution.

\paragraph{Error Evaluation}

These rules capture all possible error states of \juliette.
Rule \WAE{VarErr} covers the case of a free variable, an \c{UndefVarError} in
Julia.  \WAE{PrimopErr} accounts for errors in primitive operations such as
\c{DivideError}.  \WAE{CalleeErr} fires when a non-function value is called.
Finally, \WAE{CallErr} accounts for multiple-dispatch resolution errors,
e.g. when the set of applicable methods is empty (no method found), and when
there is no best method (ambiguous method).


\subsection{Example}

\figref{fig:jl-wa-example} shows a translation of the program from
\figref{fig:eval-methods} to \juliette. First note that, as part of
the translation, we wrap the entire program in \evalg{}, indicating
that the outermost scope is the top level.  Translation of method
calls and definitions then proceeds, using $\mdef{\m}{\xs}{\e}$ as a
shorthand for $\mdef{\m}{\obar{\jty{\x}{\top}}}{\e}$ where $\top$ is
the top type. Method bodies are converted by replacing \eval
invocations with their expressions wrapped in \evalg{}. The \evalg{}
context of \e in Juliette effectively acts the same way that \eval of
\e does in Julia, but evaluates variables in \e using local, rather
than global, scope.

\begin{figure}[!h]
\begin{minipage}{6cm}\vspace{0.3cm}
\begin{lstlisting}
g()  = 2
f(x) = (eval(:(g() = $x)); x * g())
f(42)
\end{lstlisting}
\end{minipage}
\hspace{1cm}
\begin{minipage}{6cm}
\[
\begin{array}{rl}
  {\color{violet} \llparenthesis}
  & \mdef{\mval{g}}{}{2}\ ; \\
  & \mdef{\mval{f}}{\x}{(\seq{\evalg{\mdef{\mval{g}}{}{\x}}}
                              {\x * \mval{g}()})}\ ; \\
  & \mcall{\mval{f}}{42} \quad {\color{violet} \rrparenthesis}
\end{array}
\]
\end{minipage}
\caption{From Julia (left) to \juliette (right)}\label{fig:jl-wa-example}
\end{figure}

Now we will show the execution of this translated program according to our
small-step semantics.
The initial state is $\stwa{\varnothing}{\p}$
where \p is the program on the right of \figref{fig:jl-wa-example} (and the
\c{*} operator is a primop). The first several steps of evaluation use rules
\WAE{MD} and \WAE{Seq} to add the definitions of \mval{g} and \mval{f} to
the global table.  This produces the state
\[
  \stwa{\MT_0}{\evalg{\mcall{\mval{f}}{42}}},
\]
where
\[
\begin{array}{ccl}
  \MT_0 & = & (\MText{\varnothing}{\mdef{\mval{g}}{}{2}}) \\
  & & \MText{}{\mdef{\mval{f}}{\x}{(\seq{\evalg{\mdef{\mval{g}}{}{\x}}}
              {\x * \mval{g}()})}}
\end{array}
\]
Next, using the \WAE{CallGlobal} rule, the top-level call \mcall{\mval{f}}{42}
steps to \evalt{\MT_0}{\mcall{\mval{f}}{42}}. This then produces the state
\[
  \stwa{\MT_0}{\evalg{\evalt{\MT_0}{\mcall{\mval{f}}{42}}}},
\]
copying the global table into the context \evalt{\MT_0}{\cdot}.
Now, rule \WAE{CallLocal} can be used to resolve the call \mcall{\mval{f}}{42}
in the table $\MT_0$. Method $\mdef{\mval{f}}{\x}{\ldots}$ is the only
method of \mval{f} and it is applicable to the integer argument
($\typeof(42) = \jsub{\mathtt{Int}}{\top}$), so the program steps to:
\[
  \stwa{\MT_0}{\evalg{\ \evalt{\MT_0}
  { \seq{\evalg{\mdef{\mval{g}}{}{42}}}{42 * \mcall{\mval{g}}{}} }\ }}.
\]
The next expression to evaluate is the new \mval{g} definition,
$\mdef{\mval{g}}{}{42}$. Rule \WAE{MD} fires and the program steps to
\[
  \stwa{\MT_1}{\evalg{\ \evalt{\MT_0}
  { \seq{\evalg{\mval{g}}}{42 * \mcall{\mval{g}}{}} }\ }},
\]
where
\[
\begin{array}{cclcl}
  \MT_1 & = & \MT_0 & = & ((\MText{\varnothing}{\mdef{\mval{g}}{}{2}}) \\
  & & \MText{}{\mdef{\mval{g}}{}{42}} &
    & \MText{}{\mdef{\mval{f}}{\x}{(\seq{\evalg{\mdef{\mval{g}}{}{\x}}}
              {\x * \mval{g}()})}}) \\
  & & & & \MText{}{\mdef{\mval{g}}{}{42}}.
\end{array}
\]
The next two steps are:
\begin{eqnarray}
  \stwa{\MT_1}{\evalg{\ \evalt{\MT_0}
    { \seq{\evalg{\mval{g}}}{42 * \mcall{\mval{g}}{}} }\ }}
    & \xrightarrow{\WAE{ValGlobal}}
    & \stwa{\MT_1}{\evalg{\evalt{\MT_0}
        { \seq{\mval{g}}{42 * \mcall{\mval{g}}{}} }}}
  \\
    & \xrightarrow{\WAE{Seq}}
    & \stwa{\MT_1}{\evalg{\evalt{\MT_0}
        { 42 * \mcall{\mval{g}}{} }}}. \label{eq:ex-step}
\end{eqnarray}
Note that the last program state is represented by
$\stwa{\MT_1}{\plugCx{\Cx}{\evalt{\MT_0}{\plugx{\Xx}{\mcall{\mval{g}}{}}}}}$,
where $\Cx = \evalg{\hole}$ and $\Xx = 42 * \hole$.
So we have to use \WAE{CallLocal} again to resolve \mcall{\mval{g}}{}
in the $\MT_0$ that is fixed in the context.
Table $\MT_0$ has only one definition of \mval{g}, the one that returns $2$,
so the program steps to:
\[
  \stwa{\MT_1}{\evalg{\evalt{\MT_0}{ 42 * 2 }}}.
\]
Finally, the application of \WAE{Primop}, \WAE{ValLocal}, and \WAE{ValGlobal}
leads to the final state:
\begin{equation}\label{eq:ex-final-orig}
  \stwa{\MT_1}{84}.
\end{equation}

Now, consider a modification of the original program where
in the definition of \mval{f}, the call \mcall{\mval{g}}{} is wrapped into
a global evaluation \evalg{\mval{g}()}:

\begin{minipage}{7cm}\vspace{0.3cm}
\begin{lstlisting}
  g()  = 2
  f(x) = (eval(:(g() = $x)); x * eval(:(g())))
  f(42)
\end{lstlisting}
\end{minipage}
\hspace{1cm}
\begin{minipage}{6cm}
\[
\begin{array}{rl}
  {\color{violet} \llparenthesis}
  & \mdef{\mval{g}}{}{2}\ ; \\
  & \mdef{\mval{f}}{\x}{(\seq{\evalg{\mdef{\mval{g}}{}{\x}}}
                             {\x * \evalg{\mval{g}()}})}\ ; \\
  & \mcall{\mval{f}}{42} \quad {\color{violet} \rrparenthesis}
\end{array}
\]
\end{minipage}\vspace{0.3cm}

\noindent
At the beginning, the modified program will run similarly to the original one,
and with step \eqref{eq:ex-step}, it will reach the state:
\[
  \stwa{\MT_1}{\evalg{\evalt{\MT_0}{ 42 * \evalg{\mcall{\mval{g}}{}} }}}.
\]
Here, \evalg{\evalt{\MT_0}{42 * \evalg{\mcall{\mval{g}}{}}}} is represented
by $\plugCx{\Cx}{\evalg{\plugx{\Xx}{\mcall{\mval{g}}{}}}}$,
where $\Cx = \evalg{\evalt{\MT_0}{42 * \hole}}$ and $\Xx = \hole$.
Therefore, the call \mcall{\mval{g}}{} is back at the top level.
With \WAE{CallGlobal} rule,
the call steps to $\evalt{\MT_1}{\mcall{\mval{g}}{}}$ because $\MT_1$
is the \emph{current global} table, thus producing the state:
\[
  \stwa{\MT_1}{\evalg{\evalt{\MT_0}{
    42 * \evalg{\ \evalt{\MT_1}{\mcall{\mval{g}}{}}\ } }}}.
\]
Resolved in $\MT_1$, call \mcall{\mval{g}}{} returns $42$,
and thus the whole program ends in the final state:
\[
  \stwa{\MT_1}{1764}.
\]
Note that the resulting global table is the same as in \eqref{eq:ex-final-orig},
but the return value is different.

\subsection{Properties}\label{subsec:properties}

\juliette operational semantics is deterministic,
and all failure states are captured by error evaluation
(meaning that a \juliette program never gets stuck).

\begin{lemma}[Unique Form of Expression]\label{lem:paper:expr-form-unique}
  Any expression \e can be uniquely represented in one of the following ways:
  \begin{enumerate}[label=(\alph*)]
    \item $\e = \v$; or
    \item $\e = \plugx{\Xx}{\mcalld}$; or
    \item $\e = \plugx{\Cx}{\rdx}$,
  \end{enumerate}
  where \rdx (shown in~\figref{fig:wa-rdx-main})
  is a subset of expressions driving the reduction.
\end{lemma}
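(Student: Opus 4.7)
The plan is to proceed by structural induction on $\e$, always picking the decomposition that reflects the unique leftmost active sub-expression forced by the left-to-right evaluation order baked into the grammars of $\Xx$ and $\Cx$. The base cases are immediate: a value $\v$ gives (a); a variable $\x$ and a method definition $\md$ are both in $\rdx$, so they fall into case (c) with $\Cx = \hole$.

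For the compound forms $\seq{\e_1}{\e_2}$, $\primcalld{\es}$, and $\mcall{\e_c}{\es}$, I would first check whether every sub-expression in an $\Xx$-hole position is already a value. If so, the whole expression either is a redex---namely $\seq{\v}{\e}$, $\primcalld{\vs}$, or a callee-error call $\mcall{\v}{\vs}$ with $\v \neq \m$---giving case (c) with $\Cx = \hole$, or it has the form $\mcalld$ exactly, giving case (b) with $\Xx = \hole$. Otherwise, I would locate the leftmost non-value sub-expression $\e'$, apply the induction hypothesis, and compose the resulting decomposition with the surrounding simple-context constructor ($\seq{\hole}{\e_2}$, $\primcall{l}{\vs\ \hole\ \es}$, $\mcall{\hole}{\es}$, or $\mcall{\v}{\vs\ \hole\ \es}$). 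The key algebraic observation is that prepending a simple-context constructor absorbs into the outer $\Xx$ layer of every $\Cx$ shape---$\Xx'$, $\plugx{\Xx'}{\evalg{\Cx''}}$, or $\plugx{\Xx'}{\evalt{\MT}{\Cx''}}$---so composition sends case (b) to case (b) and case (c) to case (c).

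The cases $\e = \evalg{\e'}$ and $\e = \evalt{\MT'}{\e'}$ are the interesting ones and, in my view, the main obstacle. When $\e'$ is a value, the whole expression is itself the redex $\evalg{\v}$ (resp.\ $\evalt{\MT'}{\v}$), yielding case (c) with $\Cx = \hole$. When $\e'$'s decomposition is (b), i.e.\ $\e' = \plugx{\Xx'}{\mcalld}$, the whole expression $\evalg{\plugx{\Xx'}{\mcalld}}$ (resp.\ $\evalt{\MT'}{\plugx{\Xx'}{\mcalld}}$) matches the \WAE{CallGlobal} redex (resp.\ \WAE{CallLocal}/\WAE{CallErr}); this is precisely where case (b) collapses into case (c). When $\e'$'s decomposition is (c), I use the $\Cx$-production $\plugx{\Xx}{\evalg{\Cx}}$ (resp.\ $\plugx{\Xx}{\evalt{\MT'}{\Cx}}$) with $\Xx = \hole$ to extend the inner $\Cx'$ to the outer layer, keeping the same $\rdx$. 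Thus $\evalg{\cdot}$ and $\evalt{\MT'}{\cdot}$ wrappings always produce case (c) and never (a) or (b), which is the invariant that makes case (b) mean ``a bare call not yet captured by any global or table layer.''

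For uniqueness, I observe that the leftmost non-value sub-expression of $\e$ is syntactically forced, which in turn determines the outer context layer by layer. The three cases are pairwise exclusive: (a) yields a value at the top, whereas (b) and (c) plug non-values ($\mcalld$ and any $\rdx$, respectively) into the hole, so their top form is non-value; (b) and (c) are separated by whether the active position holds a bare $\mcalld$ surrounded only by simple-context material, or an honest redex. Since $\Xx$ contains no $\evalg{\cdot}$ or $\evalt{\MT'}{\cdot}$ production and $\mcalld$ by itself is not in $\rdx$, no single expression admits both decompositions.
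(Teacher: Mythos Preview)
Your proposal is correct and takes essentially the same approach as the paper: structural induction on $\e$, with the crucial observation that a $\evalg{\cdot}$ or $\evalt{\MT}{\cdot}$ wrapper converts an inner case-(b) decomposition into a case-(c) redex. The paper organizes the argument slightly differently---it first proves existence alone by induction on $\e$, and then, for uniqueness, introduces an explicit canonical-form judgment $\CanSym(\e\ \mathbf{as}\ \rep)$ whose inference rules mirror the decomposition and proves by induction on $\e$ that any two derivable forms coincide; your leftmost-non-value argument is the same induction stated informally.
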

\begin{proof}
  By induction on \e\APPENDICITIS{.}
  {, using auxiliary definitions and lemmas about the representation of
   expressions from \appref{app:proofs:canonical}. The full proof is given
   by \lemref{lem:expr-form-canonical} on page~\pageref{lem:expr-form-canonical}.}
\end{proof}

\begin{theorem}[Progress]\label{thm:paper:progress}
  For any program \p and method table \MTg, 
  the program either reduces to a value,
  or it makes a step to another program,
  or it errs.
  That is, one of the following holds:
  \begin{enumerate}[label=(\alph*)]
    \item $\evalwa{\MTg}{\p}{\MTg'}{\v}$; or
    \item $\evalwa{\MTg}{\p}{\MTg'}{\p'}$; or
    \item $\evalerrwa{\MTg}{\p}{\err}$.
  \end{enumerate}
\end{theorem}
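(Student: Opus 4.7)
The plan is to prove Progress by a straightforward case analysis on the canonical form of the expression inside the top-level global evaluation, leveraging Lemma~\ref{lem:paper:expr-form-unique} to obtain the decomposition and then invoking exactly one evaluation (or error) rule per case. Since $\p = \evalg{\e}$, I would apply the lemma to $\e$ and split on the three resulting forms.

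If $\e = \v$ (case (a) of the lemma), then $\p = \evalg{\v}$ matches the left-hand side of \WAE{ValGlobal} with $\Cx = \hole$, so the program takes a step to the value $\v$; this yields case (a) of the theorem. If $\e = \plugx{\Xx}{\mcalld}$ (case (b)), then $\p = \evalg{\plugx{\Xx}{\mcalld}}$ fits the premise of \WAE{CallGlobal} with $\Cx = \hole$ and the current global table $\MTg$, stepping to $\evalg{\plugx{\Xx}{\evalt{\MTg}{\mcalld}}}$, giving case (b) of the theorem. Recognizing that this second subcase is exactly the one where no other rule fires---a bare function call at the top level cannot reduce via \WAE{CallLocal} because there is no enclosing $\evalt{\MT'}{\cdot}$ to dispatch against---is the reason the lemma is stated with three alternatives instead of two.

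In the remaining case $\e = \plugx{\Cx}{\rdx}$, I would observe that $\p = \plugCx{\Cx'}{\rdx}$ where $\Cx' = \plugx{\hole}{\evalg{\Cx}}$ is a legal world evaluation context by the grammar of \Cx. It then suffices to case-split on the form of $\rdx$ (drawn from the redex grammar in the referenced figure) and show each form either steps or errs. The routine subcases are: $\rdx = \seq{\v}{\e'}$ applies \WAE{Seq}; $\rdx = \md$ applies \WAE{MD}; $\rdx = \evalg{\v}$ applies \WAE{ValGlobal}; $\rdx = \evalt{\MT'}{\v}$ applies \WAE{ValLocal}; $\rdx = \x$ errs by \WAE{VarErr}; and $\rdx = \mcall{\v_c}{\vs}$ with $\v_c \neq \m$ errs by \WAE{CalleeErr}. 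The two remaining subcases rely on totality of auxiliary meta-functions: for $\rdx = \primcalld{\vs}$, since $\Primop(l,\vs)$ returns either a value or \err, either \WAE{Primop} steps or \WAE{PrimopErr} errs; for $\rdx = \evalt{\MT'}{\plugx{\Xx}{\mcalld}}$, totality of $\getmd(\MT',\m,\typeof(\vs))$ likewise yields either \WAE{CallLocal} or \WAE{CallErr}.

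The main obstacle, as far as there is one, is bookkeeping: we must confirm that the redex grammar \rdx is comprehensive enough that the three alternatives of the lemma partition all expressions, so that every non-value expression either sits atop a latent top-level call (case (b), handled by \WAE{CallGlobal}) or exposes a redex with a matching rule. Given the lemma, the proof itself is then a rule-by-rule verification with no induction required at this level.
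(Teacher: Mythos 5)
Your approach is essentially the paper's: the paper also proves Progress by case analysis on $\p = \evalg{\e}$ via \lemref{lem:paper:expr-form-unique}, handling $\evalg{\v}$ with \WAE{ValGlobal}, $\evalg{\plugx{\Xx}{\mcalld}}$ with \WAE{CallGlobal}, and reducing the third case to a rule-by-rule check that every redex base either steps or errs (which the paper factors out as an auxiliary ``Redex Steps'' lemma rather than inlining it as you do). One concrete omission in your enumeration: the redex grammar has \emph{nine} productions, and you list only eight. You skip $\rdx = \evalg{\plugx{\Xx}{\mcall{\m}{\vs}}}$, a function call directly under a global-evaluation bracket occurring \emph{inside} a larger context $\Cx'$; it steps by \WAE{CallGlobal}, annotating the call with the current global table, exactly as in your top-level case (b) but one level down. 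The fix is immediate, so this is a bookkeeping slip rather than a flaw in the strategy, but the case analysis is not complete as written.
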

\begin{proof}
  By case analysis on $\p = \evalg{\e}$, using \lemref{lem:paper:expr-form-unique}.
  \APPENDICITIS{}
  {The full proof (provided in \appref{app:proofs:progress},
   page~\pageref{proof:progress}) relies on
   an auxiliary lemma~\ref{lem:redex-steps} about reducible expressions.}
\end{proof}

\begin{theorem}[Determinism]\label{thm:paper:eval-deterministic}
  \juliette semantics is deterministic.
\end{theorem}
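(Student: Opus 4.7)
The plan is to prove determinism by appealing to \lemref{lem:paper:expr-form-unique}. Concretely, I would suppose that a state $\stwa{\MT}{\e}$ admits two derivations---two normal steps, two error derivations, or one of each---and show that the outcomes coincide. If $\e$ is a value, no rule fires and determinism is vacuous. Otherwise, by \lemref{lem:paper:expr-form-unique}, $\e$ factors uniquely as $\plugx{\Cx}{\rdx}$ (case (b), where $\e = \plugx{\Xx}{\mcalld}$, is subsumed by case (c) once the outermost $\evalg{\cdot}$ of the enclosing program is unfolded), so any two reductions must act on the very same redex in the very same surrounding context. It then remains to show that $\rdx$ together with \MT uniquely determines both the rule that fires and its right-hand side.

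With the redex pinned down, I would do a case analysis on the shape of $\rdx$ and check each pair of potentially applicable rules. The structural rules \WAE{Seq}, \WAE{CallGlobal}, \WAE{ValGlobal}, and \WAE{ValLocal} carry no auxiliary premises, so their right-hand sides are determined syntactically. For \WAE{MD}, the extended table $\MText{\MT}{\md}$ and the returned value $\m$ are fixed by the redex and the current table. For \WAE{Primop} versus \WAE{PrimopErr}, determinism follows because $\Primop$ is a function, so its output is either a unique value or \err, never both. For \WAE{CallLocal} versus \WAE{CallErr}, the outcome is fixed because \getmd is a function: each of $\latest$, $\applcbl$, and $\min$ is deterministic, and $\min$ either selects a unique minimum method or returns \err.

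In my view the main obstacle lies in ruling out the apparent overlap between the two call rules and, similarly, between the two value rules inside world-evaluation frames. A bare function call $\mcalld$ could superficially match either \WAE{CallGlobal} or \WAE{CallLocal}, and a value inside $\evalg{\cdot}$ or $\evalt{\MT'}{\cdot}$ could match either \WAE{ValGlobal} or \WAE{ValLocal}. The resolution is that simple evaluation contexts $\Xx$ by construction contain no $\evalg{\cdot}$ or $\evalt{\MT}{\cdot}$ frames, so the unique $\Cx$-decomposition of \lemref{lem:paper:expr-form-unique} commits each redex to at most one rule: \WAE{CallGlobal} fires exactly when the nearest enclosing world-evaluation frame is $\evalg{\cdot}$, while \WAE{CallLocal} fires exactly when it is $\evalt{\MT'}{\cdot}$, and analogously for the two value rules. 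The remaining pairwise disjointness checks---such as a non-function callee forcing \WAE{CalleeErr} rather than \WAE{CallLocal}---are straightforward side conditions on the premises, and settling them completes the proof.
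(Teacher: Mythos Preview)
Your proposal is correct and follows essentially the same route as the paper: unique decomposition via \lemref{lem:paper:expr-form-unique}, then case analysis on the redex base, with the only genuinely overlapping pairs being \WAE{Primop}/\WAE{PrimopErr} and \WAE{CallLocal}/\WAE{CallErr}, resolved by functionality of $\Primop$ and $\getmd$.

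One remark on your third paragraph: the ``obstacle'' you worry about---distinguishing \WAE{CallGlobal} from \WAE{CallLocal}, and \WAE{ValGlobal} from \WAE{ValLocal}---is already dissolved by the paper's definition of $\rdx$. The redex bases $\evalg{\plugx{\Xx}{\mcalld}}$, $\evalt{\MT}{\plugx{\Xx}{\mcalld}}$, $\evalg{\v}$, and $\evalt{\MT}{\v}$ are four syntactically distinct grammar productions, so the unique $\plugCx{\Cx}{\rdx}$ factorization of \lemref{lem:paper:expr-form-unique} immediately commits the expression to exactly one of them; no further argument about the structure of $\Xx$ is needed. Relatedly, your handling of case~(b) is slightly off: a bare $\plugx{\Xx}{\mcalld}$ with no enclosing world frame simply cannot step under any rule, so it is vacuously deterministic---you need not appeal to an outermost program $\evalg{\cdot}$ (the theorem is stated for arbitrary expressions, not just programs).
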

\begin{proof}
  \APPENDICITIS
     {The proof}
     {The full proof is provided in \appref{app:proofs:determinism},
      page~\pageref{proof:eval-deterministic}.  It}
  relies on the fact that (1) any expression that steps
  can be represented as $\plugCx{\Cx}{\rdx}$, and (2) such a representation
  is unique by \lemref{lem:paper:expr-form-unique}.
  By case analysis on \rdx, we can see that for all redex bases
  except \primcalld{\vs} and \evalt{\MT}{\plugx{\Xx}{\mcall{\m}{\vs}}},
  there is exactly one (normal- or error-evaluation) rule applicable.
  For \primcalld{\vs} and \evalt{\MT}{\plugx{\Xx}{\mcall{\m}{\vs}}},
  there are two rules for each, but their premises are incompatible.
  Thus, for any expression \plugCx{\Cx}{\rdx}, exactly one rule is applicable.
\end{proof}

\begin{figure}
\[\footnotesize
  \begin{array}{ccl}
      \\ \rdx & ::= &
      \\ &\Alt& \x
      \\ &\Alt& \seq{\v}{\e}
      \\ & &
 \end{array}\hspace{0.05cm}
 \begin{array}{ccl}
  \\ & &
  \\ &\Alt& \primcalld{\vs}
  \\ &\Alt& \mcall{\vnm}{\vs}
 \end{array}\hspace{0.05cm}
 \begin{array}{ccl}
  \\ & &
  \\ &\Alt& \md
  \\ & &
  \end{array}\hspace{0.05cm}
 \begin{array}{ccl}
  \\ & &
  \\ &\Alt& \evalg{\v}
  \\ &\Alt& \evalt{\MT}{\v}
  \end{array}\hspace{0.05cm}
  \begin{array}{ccl}
   \\ & &
   \\ &\Alt& \evalg{\plugx{\Xx}{\mcall{\m}{\vs}}}
   \\ &\Alt& \evalt{\MT}{\plugx{\Xx}{\mcall{\m}{\vs}}}
   \end{array}
\]
\caption{Redex bases}\label{fig:wa-rdx-main}
\end{figure}

\subsection{Optimizations}\label{subsec:optimizations}

The world-age semantics allows function-call optimization even in the presence
of \eval. Recall how an {\eval}ed or top-level function call
\evalg{\mcall{\m}{\vs}} steps.
First, rule \WAE{CallGlobal} is applied: it fixes the current state of
the global table \MT in the call's context, stepping the call to
$\evalt{\MT}{\mcall{\m}{\vs}}$.  Then, the call \mcall{\m}{\vs} itself, and all
of its nested calls (unless they are additionally wrapped into \evalg{}),
are resolved using the now-local table \MT. Therefore, \MT provides all necessary
information for the resolution of such calls, and they can be optimized based
on the method table \MT.

Next, we will focus on  three generic-call optimizations: inlining,
specialization, and transforming generic calls into direct calls
(devirtualization). Namely, we provide formal definitions of these
optimizations and show them correct.

\paragraph{Inlining}

If a function call is known to dispatch to a certain method
using a \emph{fixed} method table, it
might be possible to \emph{inline} the body of the method in place of the
call. For example, consider a program on the left of
\figref{fig:opt-src-example}.  The call \c{f(5)} has no choice but to
dispatch to the only definition of~\c f.  Because the call \c{g(x)} in
\c{f(5)} is not wrapped in an \eval, it is known that the call to \c g is
going to be dispatched in the context with exactly two methods of \c g:
\c{g1} and \c{g2}.  Furthermore, since \c x is known to be of type (tag)
\c{Int} inside \c f, we know that \c{g(x)} has to dispatch to the method
\c{g1} (because \c{Int <: Any} but \c{Int </: Bool}).
Thus, it is possible to optimize method \c f for the call \c{f(5)} by inlining
\c{g(x)}, which yields the following optimized definition of \c f:

\begin{minipage}{5.8cm}
\begin{lstlisting}
f(x::Int) = x * (x + x)
\end{lstlisting}
\end{minipage}
\vspace{-0.5em}

\paragraph{Direct-call optimization}

When inlining is not possible or desirable, but it is clear which method
is going to be invoked, a function call can be replaced by a direct invocation.
Consider the example on the right of \figref{fig:opt-src-example}.
The only difference from the previous example is that the argument of \c g
inside~\c f is not a variable but an expression \c{(println(x); x)}.  This
expression always returns an integer, so we know that at run time, that \c g
will be dispatch to method \c{g1}. However, unlike previously,
the call to \c g cannot be inlined using direct syntactic substitution. In that
case, the value of \c x would be printed twice instead of just once, because
inlining would transform \c{g((println(x);x))} into \c{(println(x);x) +
  (println(x);x)} and thus change the observable behavior of the program.
It is still possible to optimize \c f, by replacing the generic call to \c g
with a \emph{direct call} to the method \c{g1}. In pseudo-code, this can be
written as:

\begin{minipage}{5.8cm}
\begin{lstlisting}
f(x::Int) = x * g@g1((println(x); x))
\end{lstlisting}
\end{minipage}

\noindent
In the calculus, we model a direct call as a call to a new function with
a single method such that the name of the function is not used anywhere
in the original method table or expression. For example, for the program
above, we can add function \c h with only one method \c{h(x::Int)=x+x},
allowing \c f to be optimized to:

\begin{minipage}{5.8cm}
\begin{lstlisting}
f(x::Int) = x * h(println(x); x)
\end{lstlisting}
\end{minipage}
\vspace{-0.4em}

\begin{figure}
  \begin{minipage}{5.4cm}
\begin{lstlisting}
g(x::Any)  = x + x    # g1
g(x::Bool) = x        # g2
f(x::Int)  = x * g(x)
f(5)
\end{lstlisting}
  \end{minipage}
  \hspace{1cm}
  \begin{minipage}{5.4cm}
\begin{lstlisting}
g(x::Any)  = x + x    # g1
g(x::Bool) = x        # g2
f(x::Int)  = x * g((println(x);x))
f(5)
\end{lstlisting}
  \end{minipage}
\caption{Candidate programs for inlining (on the left)
  and direct call optimization (on the right)}\label{fig:opt-src-example}
\end{figure}

\paragraph{Specialization}

The final optimization we consider is specialization of methods for argument
types. In \figref{fig:opt-src-example}, method \c{g1} is defined for \c x of
type \c{Any}, meaning that the call \c{x+x} can be dispatched to
any of at least \jladdnum standard methods. But because, within \c f, \c g
is known to be called with an argument of type \c{Int} (due to \c x in \c f
having that type), it is possible to generate a new implementation
of \c g \emph{specialized} for this argument type. The advantage is that the
specialized implementation can directly use efficient integer addition.
Thus, combined with the direct call, we have:

\begin{minipage}{5.8cm}
\begin{lstlisting}
g(x::Int) = Base.add_int(x, x) # g3
f(x::Int) = x * g@g3((println(x); x))
\end{lstlisting}
\end{minipage}
\vspace{-0.4em}

\noindent
In the calculus, specialization is modeled similarly to direct calls:
as a function with a fresh name.

\subsection{Optimization Correctness}\label{subsec:correctness}

In this section, we present a formal definition of optimizations and
state the main theorem about their correctness.
The general idea of optimizations is as follows: if an expression \e is going
to be executed in a fixed method table \MT, it is safe to instead execute
\e in a table $\MT'$ obtained by optimizing method definitions of \MT
(like we did with the definition of \c f in the examples above).

As demonstrated by the examples, the first ingredient of optimizations
is type information, which is necessary to ``statically'' resolve
function calls; for this, we use a simple concrete-typing relation defined in
\figref{fig:wa-typing}.
The relation \typedwad{\e}{\g} propagates information about variables and
type tags of values, and succeeds only if the expression would always reduce
to a value of concrete type \g \emph{if} it reduces to any value.
This is because to resolve a function call, we need to know
the type tags of its arguments. A typing relation can be more complex to enable
further optimization opportunities (and it is much more complex in Julia),
but typing of Julia is a separate topic that is out of scope of this paper:
here, we focus on compiler optimization and use concrete typing only as a tool.

\figref{fig:wa-table-opt} shows the judgments related to method-table optimization.
The rule \WAOT{MethodTable} says that an optimized version $\MT'$ of
table $\MT$ (1) has to have all the methods of $\MT$, although they can be
optimized, and (2) can have more methods given that their names do not appear
in the original table \MT. The latter enables adding new methods that model
direct calls and specializations, and the former allows for optimization
of existing methods. According to the rule \WAOD{MD}, a method
in $\MT'$ optimizes a method in $\MT$ if it has the same signature
(i.e. name and argument types), and its body is an optimization of the original
body of the method being optimized.

The method-optimization environment $\SpecEnv$
tracks direct calls and specializations: \mdeqd tells that when arguments
of \m have type tags \gs, a call to \m in \MT can be replaced by a call
to $\m'$ in $\MT'$. Note that all entries of $\SpecEnv$ need to be valid
according to {\small \textsc{MethodOpt-Valid}}: assuming that the methods
are in the optimization relation, their bodies indeed have to be in that
relation (the assumption is needed to handle recursion).
Both \WAOD{MD} and {\small \textsc{MethodOpt-Valid}} rely
on expression optimization to relate method bodies.


\begin{figure}\footnotesize
\[
\begin{array}{rcll}
  \Gm & ::= & & \text{\emph{Typing environment}}\\
      & \Alt & \varnothing \\
      & \Alt & \Gm, \x:\t \\
\end{array}
\]
\begin{mathpar}
\inferrule[\WAT{Var}]
  { \Gm(\x) = \g }
  { \typedwad{\x}{\g} }

\inferrule[\WAT{Val}]
  { \typeof(\v) = \g }
  { \typedwad{\v}{\g} }

\inferrule[\WAT{MD}]
  { }
  { \typedwad{\mdefd}{\mty} }

\inferrule[\WAT{Seq}]
  { \typedwad{\e_2}{\g} }
  { \typedwad{(\seq{\e_1}{\e_2})}{\g} }

\inferrule[\WAT{Primop}]
  { \PrimopRT(l, \gs)=\g' \\ \typedwad{\e_i}{\g_i} }
  { \typedwad{\primopd(\es)}{\g'} }

\inferrule[\WAT{EvalGlobal}]
  { \typedwad{\e}{\g} }
  { \typedwad{\evalg{\e}}{\g} }

\inferrule[\WAT{EvalLocal}]
  { \typedwad{\e}{\g} }
  { \typedwad{\evalt{\MT}{\e}}{\g} }
\end{mathpar}
\caption{Concrete-typing judgment}\label{fig:wa-typing}
\end{figure}


\begin{figure}\footnotesize
  \[
  \begin{array}{rcll}
    \SpecEnv & ::= & & \text{\emph{Method-optimization environment}} \\
             & \Alt & \varnothing \\
             & \Alt & \SpecEnv, \mdeqd \\
  \end{array}
  \]
  \begin{mathpar}
  \\
  \inferrule[MethodOpt-Valid]
    { \getmd(\MT,\m,\gs) = \mdef{\m}{\obar{\jty{\x}{\t}}}{\e_b} \\\\
      \getmd(\MT',\m',\gs) = \mdef{\m'}{\obar{\jty{\x'}{\t'}}}{\e'_b} \\\\
      \expropt{\SpecEnv}{\obar{\x:\g}}{\MT}{\e_b}{\MT'}{\e'_b\subst{\xs'}{\xs}}
    }
    { \mspecd }

  \inferrule[\WAOD{MD}]
  { \expropt{\SpecEnv}{\obar{\x:\t}}{\MT}{\e}{\MT'}{\e'\subst{\x'}{\x}} }
  { \mdoptd{\mdefd}{\mdef{\mname}{\obar{\jty{\x'}{\t}}}{\e'}} }
  \\
  \inferrule[\WAOT{MethodTable}]
    { \MT  = \MText{\MText{\md_1}{\ldots}}{\md_n} \qquad
      \MT' = \MText{\MText{\MText{\MText{\MText{\md'_1}{\ldots}}{\md'_n}}
                    {\md'_{n+1}}}{\ldots}}{\md'_k} \\
      \forall 1 \leq i \leq n.\ \mdoptd{\md_i}{\md'_i} \\
      \forall (\mdeqd) \in \SpecEnv.\ \mspecd \\
      \forall n+1 \leq j \leq k. \mathop{name}(\md'_j)
        \text{ does not occur in } \MT }
    { \tableoptd }
  
  \inferrule[MNamesCompat]
  { \forall \m \text{ referenced by } \e. \ \
    \m \in \dom(\MT) \iff \m \in \dom(\MT') }
  { \tableoptexprd{\e} }

  \inferrule[\WAOT{MethodTable-Expr}]
    { \tableoptd \\ \tableoptexprd{\e} }
    { \tableoptrefd }
  \end{mathpar}
\caption{Method table \& definition optimization}\label{fig:wa-table-opt}
\end{figure}


\begin{figure}\footnotesize
\[
\begin{array}{rcll}
  \nv & ::= & \v\ \Alt\, \x & \text{\emph{Near-value}} \\
  \\
\end{array}
\]
\begin{mathpar}
\inferrule[\WAOE{Val}]
  { \v \neq \m }
  { \exproptd{\v}{\v} }

\inferrule[\WAOE{ValFun}]
  { \tableoptexprd{\m} }
  { \exproptd{\m}{\m} }

\inferrule[\WAOE{Var}]
  { }
  { \exproptd{\x}{\x} }
\\

\inferrule[\WAOE{Global}]
  { \tableoptexprd{\e} }
  { \exproptd{\evalg{\e}}{\evalg{\e}} }

\inferrule[\WAOE{Local}]
  { \tableoptexprd{\e} }
  { \exproptd{\evalt{\MT_l}{\e}}{\evalt{\MT_l}{\e}} }

\inferrule[\WAOE{MD}]
  { \tableoptexprd{\mathop{name}(\md)} }
  { \exproptd{\md}{\md} }
\\

\inferrule[\WAOE{Seq}]
  { \exproptd{\e_1}{\e'_1} \\ \exproptd{\e_2}{\e'_2} }
  { \exproptd{\seq{\e_1}{\e_2}}{\seq{\e'_1}{\e'_2}} }

\inferrule[\WAOE{Primop}]
  { \forall i.\ \exproptd{\e_i}{\e'_i} }
  { \exproptd{\primcall{l}{\es}}{\primcall{l}{\es'}} }

\inferrule[\WAOE{Call}]
  { \exproptd{\e_c}{\e'_c} \\ \forall i.\ \exproptd{\e_i}{\e'_i} }
  { \exproptd{\mcall{\e_c}{\es}}{\mcall{\e'_c}{\es'}} }

\inferrule[\WAOE{Inline}]
  { \forall i.\ \typedwad{\nv_i}{\g_i} \\
    \getmd(\MT,\m,\gs) = \mdef{\m}{\obar{\jty{\x}{\t}}}{\e_b} \\
    \expropt{\SpecEnv}{\Gm}{\MT}{\e_b\subst{\xs}{\nvs}}{\MT'}{\e'} }
  { \exproptd{\mcall{\m}{\nvs}}{\seq{\skp}{\e'}} }

\inferrule[\WAOE{Direct}]
  { \forall i.\ \exproptd{\e_i}{\e'_i} \\ \typedwad{\e'_i}{\g_i} \\
    (\mdeqd) \in \SpecEnv }
  { \exproptd{\mcall{\m}{\es}}{\mcall{\m'}{\es'}} }
\end{mathpar}
\caption{Expression optimization}\label{fig:wa-expr-opt}
\end{figure}

Finally, the expression-optimization relation is shown in \figref{fig:wa-expr-opt}.
Note that the rules do not allow for function-call optimizations inside the
global-evaluation construct \evalg{}: the only applicable rule in that case
is \WAOE{Global}. Function calls can only be optimized if they are fixed-table
calls.  Rules \WAOE{Inline} and \WAOE{Direct} correspond to the inlining and
the direct-call/specialization optimizations, respectively. As discussed earlier,
inlining cannot be done if a function is called with expression arguments.
Therefore, in \WAOE{Inline} we use an auxiliary definition \nv,
``near-value'', which is either a value or a variable.
Because we model direct-call and specialization optimizations as calls to
freshly-named methods, the main job is done in the table-optimization rule
\WAOT{MethodTable}; rule \WAOE{Direct} only records the fact of invoking
a specific method. If the method definition of $\m'$ from \mdeqd
has the same parameter-type annotations
as \m, it represents a direct call to an original method of \m; otherwise, it
represents a specialized method. Note that for all optimizations, function-call
arguments have to be concretely typed. Otherwise, we do not know
definitively how a function call is going to be dispatched at run time.

The optimizations defined in \figref{fig:wa-table-opt}--\ref{fig:wa-expr-opt}
are sound. That is, the evaluation of the original and optimized programs
yield the same result. To show this, we establish
a bisimulation relation between original and optimized expressions
(after the following auxiliary lemmas):

\begin{lemma}[Context Irrelevance]\label{lem:paper:context-irrelevant}
  For all \Cx, $\Cx'$, \rdx, $\e'$, \MTg, $\MTg'$, the following holds:
  \[ \evalwa{\MTg}{\plugCx{\Cx}{\rdx}}{\MTg'}{\plugCx{\Cx}{\e'}}
      \quad \iff \quad 
      \evalwa{\MTg}{\plugCx{\Cx'}{\rdx}}{\MTg'}{\plugCx{\Cx'}{\e'}}. \]
\end{lemma}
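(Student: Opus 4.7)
The plan is to prove both directions by inspecting the reduction rules of \juliette from \figref{semantics}. The key structural observation is that every reduction rule, both normal-evaluation (\WAE{Seq} through \WAE{ValLocal}) and error-evaluation (\WAE{VarErr} through \WAE{CallErr}), has a conclusion of the form $\evalwa{\MT}{\plugCx{\Cx''}{\cdot}}{\MT'}{\plugCx{\Cx''}{\cdot}}$ where the outer world context \Cx'' is universally quantified, appears identically on both sides, and is not mentioned in any premise. Consequently, whether a rule fires, as well as the resulting global table and residual expression it produces, depends only on the redex and on \MT.

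For the forward direction, I would assume $\evalwa{\MTg}{\plugCx{\Cx}{\rdx}}{\MTg'}{\plugCx{\Cx}{\e'}}$ and consider the rule $R$ deriving this step. By \lemref{lem:paper:expr-form-unique}, the decomposition of $\plugCx{\Cx}{\rdx}$ into a world context plus a redex base is unique, so $R$ must identify the outer context as exactly \Cx and the focused redex as exactly \rdx. Since $R$'s premises and its prescribed transition depend only on \rdx and \MTg, I can re-apply the same rule to $\plugCx{\Cx'}{\rdx}$ and derive $\evalwa{\MTg}{\plugCx{\Cx'}{\rdx}}{\MTg'}{\plugCx{\Cx'}{\e'}}$ with identical \MTg' and $\e'$. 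The reverse direction is symmetric.

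The main obstacle is justifying that when $R$ fires on $\plugCx{\Cx}{\rdx}$, it really does isolate exactly \rdx rather than some nested sub-redex, and that it identifies the outer context as exactly \Cx rather than some larger or smaller context; both of these uniqueness properties follow directly from \lemref{lem:paper:expr-form-unique}. A secondary point worth checking is the behaviour of rules such as \WAE{CallGlobal} and \WAE{CallLocal}, whose redex bases already contain nontrivial internal structure (a simple context \Xx wrapping \mcall{\m}{\vs}, together with an \evalg{\cdot} or \evalt{\MT'}{\cdot} annotation): since this structure lives inside \rdx itself and is copied into the result unchanged, it is unaffected when \Cx is swapped for \Cx', so these cases pose no additional difficulty.
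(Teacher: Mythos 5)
Your proof is correct and follows essentially the same route as the paper's: inspect each reduction rule, observe that the enclosing world context appears identically on both sides of the conclusion and in no premise, and transplant the step from \Cx to $\Cx'$. The only difference is that you make explicit the appeal to \lemref{lem:paper:expr-form-unique} for uniqueness of the decomposition, which the paper leaves implicit.
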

\begin{proof}
  By analyzing normal-evaluation steps, we can see that only \rdx matters
  for the reduction. Formally, the proof goes by inspecting a reduction step
  for \plugCx{\Cx}{\rdx} (\plugCx{\Cx'}{\rdx}) and building a corresponding
  step for \plugCx{\Cx'}{\rdx} (\plugCx{\Cx}{\rdx}).
\end{proof}

\begin{lemma}[Simple-Context Irrelevance]\label{lem:paper:simple-context-irrelevant}
  For all \MT, \Cx, \e, \MTg, $\e'$, $\MTg'$, \Xx, the following holds:
  \[ 
     \evalwa{\MTg}{\plugCx{\Cx}{\evalt{\MT}{\e}}}
            {\MTg'}{\plugCx{\Cx}{\evalt{\MT}{\e'}}}
     \quad\implies\quad
     \evalwa{\MTg}{\plugCx{\Cx}{\evalt{\MT}{\plugx{\Xx}{\e}}}}
            {\MTg'}{\plugCx{\Cx}{\evalt{\MT}{\plugx{\Xx}{\e'}}}}.
  \]
\end{lemma}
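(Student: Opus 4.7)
The plan is to decompose $\e$ using \lemref{lem:paper:expr-form-unique} and in each case witness the required step for the wrapped version $\plugx{\Xx}{\e}$. The one auxiliary fact I need beyond the lemmas already in hand is closure of contexts under plugging: the plug of one simple context into another is again simple, and more generally, plugging a world context into a simple context yields a world context. Both statements follow by a routine structural induction on the outer simple context through its five productions.

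By \lemref{lem:paper:expr-form-unique}, $\e$ has one of three forms. The case $\e = \v$ is vacuously excluded, since the only reduction of $\evalt{\MT}{\v}$ is by \WAE{ValLocal} to $\v$, which strips the $\evalt{\MT}{\cdot}$ frame and therefore cannot produce anything of the shape $\plugCx{\Cx}{\evalt{\MT}{\e'}}$ required by the premise. When $\e = \plugx{\Xx'}{\mcalld}$, the premise step must have fired by \WAE{CallLocal}, dispatching in $\MT$ to some method with body $\e_b$, so that $\e' = \plugx{\Xx'}{\e_b\subst{\xs}{\vs}}$. To reproduce the step on the wrapped expression, I would rewrite $\plugx{\Xx}{\plugx{\Xx'}{\mcalld}}$ as $\plugx{\Xx_0}{\mcalld}$ for the composed simple context $\Xx_0$, and apply \WAE{CallLocal} with identical dispatch in $\MT$, landing at $\evalt{\MT}{\plugx{\Xx_0}{\e_b\subst{\xs}{\vs}}} = \evalt{\MT}{\plugx{\Xx}{\e'}}$ as required.

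When $\e = \plugx{\Cx'}{\rdx}$, the reduction fires at $\rdx$ strictly inside $\e$, rewriting it to some $\e''$ with a possibly updated global table, so $\e' = \plugx{\Cx'}{\e''}$. The wrapped expression decomposes as $\plugx{\Cx''}{\rdx}$, where $\Cx''$ is obtained by plugging $\Cx'$ into $\Xx$; by the grammar of world contexts together with the composition fact, $\Cx''$ is again a valid world evaluation context. I would then invoke \lemref{lem:paper:context-irrelevant} to transport the step on $\rdx$ from its original ambient context to the new one, preserving the update to the global table.

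The main obstacle I foresee is the \WAE{CallLocal} case, where one must verify that dispatch in the local table $\MT$ is insensitive to the added outer simple frame $\Xx$; this reduces entirely to the simple-context composition fact, and so amounts to bookkeeping rather than a conceptual difficulty. Everything else is a direct application of \lemref{lem:paper:expr-form-unique} and \lemref{lem:paper:context-irrelevant}.
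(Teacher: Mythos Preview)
Your proposal is correct and follows essentially the same route as the paper: decompose $\e$ via \lemref{lem:paper:expr-form-unique}, discharge the value case as vacuous because \WAE{ValLocal} strips the frame, handle $\plugx{\Xx'}{\mcalld}$ by composing simple contexts and reapplying \WAE{CallLocal}, and handle $\plugx{\Cx'}{\rdx}$ by composing into a world context and invoking \lemref{lem:paper:context-irrelevant}. The auxiliary closure facts you single out are exactly what the paper uses (it records them as implicit preliminary facts), so there is no substantive difference.
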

\begin{proof}
  By \lemref{lem:paper:expr-form-unique}, \e is either \v or \plugx{\Xx_e}{\mcalld}
  or \plugx{\Cx_e}{\rdx}. If \e is \v, the assumption of the lemma does not hold
  (\plugCx{\Cx}{\evalt{\MT}{\v}} would step to \plugCx{\Cx}{\v}),
  so only \plugx{\Xx_e}{\mcalld} and \plugx{\Cx_e}{\rdx} cases are possible.
  \begin{itemize}
    \item When \e is \plugx{\Xx_e}{\mcalld},
      $\evalt{\MT}{\e} = \evalt{\MT}{\plugx{\Xx_e}{\mcalld}}$ is a redex,
      and \plugCx{\Cx}{\evalt{\MT}{\e}} steps by rule \WAE{CallLocal}.
      But $\evalt{\MT}{\plugx{\Xx}{\e}} =
        \evalt{\MT}{\plugx{\Xx}{\plugx{\Xx_e}{\mcalld}}}$
      is also a redex, and \plugCx{\Cx}{\evalt{\MT}{\plugx{\Xx}{\e}}}
      steps by rule \WAE{CallLocal} similarly to \plugCx{\Cx}{\evalt{\MT}{\e}}.
    \item When \e is \plugx{\Cx_e}{\rdx},
      $\evalt{\MT}{\e} = \evalt{\MT}{\plugx{\Cx_e}{\rdx}}$
      and $\plugCx{\Cx}{\evalt{\MT}{\e}} = \plugCx{\Cx'}{\rdx}$
      where $\Cx' = \plugCx{\Cx}{\evalt{\MT}{\Cx_e}}$.
      Since $\plugCx{\Cx}{\evalt{\MT}{\plugx{\Xx}{\e}}} = \plugCx{\Cx''}{\rdx}$
      for $\Cx'' = \plugCx{\Cx}{\evalt{\MT}{\plugx{\Xx}{\Cx_e}}}$,
      \plugCx{\Cx}{\evalt{\MT}{\e}} and \plugCx{\Cx}{\evalt{\MT}{\plugx{\Xx}{\e}}}
      step similarly by \lemref{lem:paper:context-irrelevant}.
  \end{itemize}
\end{proof}

\begin{lemma}[Optimization Preserves Values]\label{lem:paper:optimization-preserves-values}
  For all $\SpecEnv, \MT, \MT', \Gm, \e, \v$, the following hold:
  \[ \exproptd{\v}{\e} \quad\implies\quad \e = \v
      \qquad\text{ and }\qquad
     \exproptd{\e}{\v} \quad\implies\quad \e = \v.\]
\end{lemma}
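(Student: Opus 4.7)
The plan is to prove both implications by straightforward case analysis on the final rule used in the derivation of the expression-optimization judgment, using the fact that the source and target expressions are fixed syntactic forms in every rule of \figref{fig:wa-expr-opt}.

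For the first implication, $\exproptd{\v}{\e} \implies \e = \v$, I would enumerate the rules whose source pattern is compatible with a value. Since values $\v$ are restricted to $\skp$, $\m$, and whatever other primitive values the calculus is parameterized over (by \figref{syntax}), none of them match the source patterns of \WAOE{Var}, \WAOE{Seq}, \WAOE{Primop}, \WAOE{Call}, \WAOE{MD}, \WAOE{Global}, \WAOE{Local}, \WAOE{Inline}, or \WAOE{Direct}, because those patterns all mention syntactic constructors that are not values. Only \WAOE{Val} and \WAOE{ValFun} remain, and both map $\v$ to $\v$ directly. Thus $\e = \v$.

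For the second implication, $\exproptd{\e}{\v} \implies \e = \v$, I would perform the dual case analysis, inspecting the target pattern of each rule. The targets of \WAOE{Var}, \WAOE{MD}, \WAOE{Seq}, \WAOE{Primop}, \WAOE{Call}, \WAOE{Global}, \WAOE{Local}, \WAOE{Direct}, and \WAOE{Inline} are all non-value syntactic forms (variables, method definitions, sequences---including the $\seq{\skp}{\e'}$ produced by inlining---primop calls, function calls, or evaluation wrappers). Only \WAOE{Val} and \WAOE{ValFun} have a value as their target, and in both cases the source is that same value.

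The main ``obstacle,'' which is really a matter of bookkeeping, is to verify that I have considered every optimization rule and correctly distinguished value forms from non-value forms; in particular, checking that \WAOE{Inline}'s target $\seq{\skp}{\e'}$ is syntactically a sequencing expression (not a value), and that \WAOE{Direct}'s target is a function-call expression. With that check done, both implications follow immediately from inversion on the derivation, so no induction is needed.
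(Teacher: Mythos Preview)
Your proposal is correct and matches the paper's approach exactly: the paper's proof is simply ``By case analysis on the optimization relation,'' and you have spelled out that case analysis in full, correctly noting that only \WAOE{Val} and \WAOE{ValFun} apply in either direction. No induction is needed, as you observe.
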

\begin{proof}
  By case analysis on the optimization relation.
\end{proof}


\begin{figure}
  \footnotesize
  \[
  \begin{array}{rcll}
    \gm & ::= & \obar{\x \mapsto \v} \text{ where } \forall i,j. \x_i \neq \x_j
              & \text{\emph{Value substitution}}
  \end{array}
  \]
  \begin{mathpar}
  \inferrule*[right=\gm-Ok]
    { \dom(\Gm) = \dom(\gm) \\
      \forall \x \in \dom(\gm).\
        \left(\Gm(\x) = \g \iff \typeof(\gm(\x)) = \g\right) }
    { \substokd }
  \end{mathpar}
  \caption{Value substitution}\label{fig:wa-value-subst}
\end{figure}


\begin{lemma}[Value Substitution Preserves Optimization]
  \label{lem:paper:subst-preserves-optim}
  For all $\SpecEnv, \Gm, \e, \MT, \e', \MT', \gm$, such that
  $\forall \v \in \gm.\ \tableoptexprd{\v}$,
  the following holds:
  \[
    \left(\exproptd{\e}{\e'}\ \land\ \substokd \right)
    \quad\implies\quad
    \expropt{\SpecEnv}{}{\MT}{\gm(\e)}{\MT'}{\gm(\e')}.
  \]
\end{lemma}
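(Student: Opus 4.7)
The plan is to proceed by structural induction on the derivation of $\exproptd{\e}{\e'}$, with case analysis matching the rules of \figref{fig:wa-expr-opt}. The interesting cases split into three groups: (i) atomic cases where substitution either leaves the expression unchanged or replaces a variable with a value from $\gm$, (ii) compositional cases where the induction hypothesis applies directly to subterms, and (iii) the \WAOE{Inline} and \WAOE{Direct} rules where substitution interacts nontrivially with both concrete typing and the inlining substitution.

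For the atomic cases, if $\e = \x$ then $\gm(\x) = \v$ by $\substokd$; I would case split on whether $\v$ is a generic function $\m$ or not. If $\v = \m$, apply \WAOE{ValFun}, using the hypothesis $\forall \v \in \gm.\ \tableoptexprd{\v}$ to discharge the name-compatibility premise; otherwise apply \WAOE{Val}. For \WAOE{Val} and \WAOE{ValFun} themselves, the substitution acts as the identity and the conclusion is immediate. For the wrapping cases \WAOE{Global}, \WAOE{Local}, and \WAOE{MD}, the rules require only name-compatibility over the inner expression, and since substitution only injects values from $\gm$ (each of which is name-compatible by assumption), the name-compatibility premise transfers to $\gm(\e)$. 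The compositional cases \WAOE{Seq}, \WAOE{Primop}, and \WAOE{Call} follow immediately by applying the induction hypothesis to each subterm and reassembling with the same rule.

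The key case is \WAOE{Inline}, where $\e = \mcall{\m}{\nvs}$ and $\e' = \seq{\skp}{\e_0'}$, with premises $\typedwad{\nv_i}{\g_i}$, $\getmd(\MT,\m,\gs) = \mdef{\m}{\obar{\jty{\x}{\t}}}{\e_b}$, and $\expropt{\SpecEnv}{\Gm}{\MT}{\e_b\subst{\xs}{\nvs}}{\MT'}{\e_0'}$. I must show the corresponding optimization under the empty typing environment of $\mcall{\m}{\gm(\nvs)}$ to $\seq{\skp}{\gm(\e_0')}$. Each $\gm(\nv_i)$ is still a near-value (values are untouched, variables map to values in $\gm$), and an auxiliary substitution-for-typing lemma (provable by induction on the typing derivation using $\substokd$) shows $\typedwa{}{\varnothing}{\gm(\nv_i)}{\g_i}$. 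Assuming the method parameters $\xs$ are fresh for $\dom(\gm)$ (by alpha-renaming if needed), substitutions commute: $\gm(\e_b\subst{\xs}{\nvs}) = \e_b\subst{\xs}{\gm(\nvs)}$. Applying the induction hypothesis to the premise then yields the required optimization of the substituted body. The \WAOE{Direct} case is analogous, replacing the call $\mcall{\m}{\es}$ with $\mcall{\m'}{\es'}$; each argument-pair optimization is handled by the IH, each $\typedwad{\e'_i}{\g_i}$ is lifted to $\typedwa{}{\varnothing}{\gm(\e'_i)}{\g_i}$ by the typing substitution lemma, and $(\mdeqd) \in \SpecEnv$ carries over unchanged.

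The main obstacle is the \WAOE{Inline} case, for two reasons: first, I must prove (or import) the auxiliary substitution lemma for the concrete-typing judgment of \figref{fig:wa-typing}, which is needed to transport each $\typedwad{\nv_i}{\g_i}$ to the empty-context judgment; second, I must carefully justify that $\gm$ commutes with the inlining substitution $\subst{\xs}{\nvs}$, which hinges on the freshness of method parameters with respect to $\dom(\gm)$. Once these two technicalities are handled, the rest of the proof is routine rule-by-rule reapplication under $\gm$.
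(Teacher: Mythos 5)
Your proposal matches the paper's proof essentially step for step: induction on the derivation of \exproptd{\e}{\e'}, an auxiliary lemma that value substitution preserves concrete typing (the paper proves exactly this as a separate lemma), and the same key observation in the \WAOE{Inline} case that $\gm$ commutes with $\subst{\xs}{\nvs}$ because the method parameters are disjoint from $\dom(\gm)$. Your explicit handling of the variable case via the hypothesis $\forall \v \in \gm.\ \tableoptexprd{\v}$ is correct and is precisely the role that side condition plays.
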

\begin{proof}
  By induction on the derivation of \exproptd{\e}{\e'}.
  \APPENDICITIS{}{The full proof is given
  in \appref{app:proofs:value-subst}, on page \pageref{proof:subst-preserves-optim}.}
\end{proof}

\begin{lemma}[Bisimulation]\label{lem:paper:expr-optimization-bisimulation}
For all method tables $\MT, \MT'$, method-optimization environment $\SpecEnv$,\\
and expressions $\e_1$, ${\e_1}\!'$, such that
\[ \tableopt{\SpecEnv}{\e_1}{\MT}{\MT'} \quad \text{and} \quad
   \expropt{\SpecEnv}{}{\MT}{\e_1}{\MT'}{\e_1'}, \]
for all global tables $\MT_g, {\MT_g}\!'$ and world context \Cx,
the following hold:
\begin{enumerate}
  \item Forward direction:
    \[
    \begin{array}{rl}
      \forall \e_2. &
        \evalwa{\MT_g}{\plugCx\Cx{\evalt{\MT}{\e_1}}}
               {\MT_g'}{\plugCx\Cx{\evalt{\MT}{\e_2}}} \\
      & \implies \\
      & \exists \e_2'.\
        \evalwa{\MT_g}{\plugCx\Cx{\evalt{\MT'}{\e_1'}}}
               {\MT_g'}{\plugCx\Cx{\evalt{\MT'}{\e_2'}}}
        \ \ \ \land\ \
        \expropt{\SpecEnv}{}{\MT}{\e_2}{\MT'}{\e_2'}.
    \end{array}
    \]
  \item Backward direction:
    \[
    \begin{array}{rl}
      \forall \e_2'. &
        \evalwa{\MT_g}{\plugCx\Cx{\evalt{\MT'}{\e_1'}}}
               {\MT_g'}{\plugCx\Cx{\evalt{\MT'}{\e_2'}}} \\
      & \implies \\
      & \exists \e_2.\
        \evalwa{\MT_g}{\plugCx\Cx{\evalt{\MT}{\e_1}}}
               {\MT_g'}{\plugCx\Cx{\evalt{\MT}{\e_2}}}
        \ \ \ \land\ \
        \expropt{\SpecEnv}{}{\MT}{\e_2}{\MT'}{\e_2'}.
    \end{array}
    \]
\end{enumerate}
\end{lemma}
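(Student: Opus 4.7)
The plan is to prove each direction by case analysis on the last rule used to derive $\expropt{\SpecEnv}{}{\MT}{\e_1}{\MT'}{\e_1'}$, combined with the canonical decomposition of $\e_1$ supplied by Lemma~\ref{lem:paper:expr-form-unique}. Since a single reduction of $\plugCx{\Cx}{\evalt{\MT}{\e_1}}$ acts on a unique redex inside $\e_1$, Lemma~\ref{lem:paper:context-irrelevant} lets me factor out the outer world context $\Cx$ and focus on the step strictly inside $\evalt{\MT}{\e_1}$, while Lemma~\ref{lem:paper:simple-context-irrelevant} lets me pass freely between $\evalt{\MT}{\plugx{\Xx_e}{\rdx}}$ and the bare $\evalt{\MT}{\rdx}$. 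The remaining task is, for each redex shape, to exhibit a matching step on the other side and verify that the resulting expressions stay in the optimization relation.

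Steps whose redex sits strictly inside a nested $\evalg{\cdot}$ or $\evalt{\MT_l}{\cdot}$ within $\e_1$ are handled uniformly: rules \WAOE{Global} and \WAOE{Local} force $\e_1'$ to carry an identical sub-expression there, both sides step by the same rule, and the result is related by the same \WAOE{Global} or \WAOE{Local} instance (the name-compatibility premise is preserved because reduction does not introduce new method-name references). The same observation disposes of the \WAE{CallGlobal} redex $\evalg{\plugx{\Xx_e}{\mcall{\m}{\vs}}}$: both sides wrap the call with the same fresh $\evalt{\MT_g}{\cdot}$ and the result is again related by \WAOE{Global}. The remaining non-call outer redex shapes---$\seq{\v}{\e}$, $\primcalld{\vs}$, $\md$, $\evalg{\v}$, $\evalt{\MT_l}{\v}$---are dispatched by the structurally matching rules \WAOE{Seq}, \WAOE{Primop}, \WAOE{MD}, \WAOE{Global}, \WAOE{Local}, \WAOE{Val}, \WAOE{ValFun}; both sides step identically and Lemma~\ref{lem:paper:optimization-preserves-values} identifies inner expressions as values on both sides when needed.

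The substantive case is the local-call redex $\mcall{\m}{\vs}$ sitting directly inside $\evalt{\MT}{\cdot}$ and reducing by \WAE{CallLocal}, for which I split on the optimization rule that produced the corresponding call-site in $\e_1'$. In the \WAOE{Call} case the optimized call is still $\mcall{\m}{\vs}$; \WAOT{MethodTable} connects $\getmd(\MT,\m,\gs)$ with $\getmd(\MT',\m,\gs)$ through \WAOD{MD}, and Lemma~\ref{lem:paper:subst-preserves-optim} transports the body-level optimization through the argument substitution to yield the required $\e_2'$. The \WAOE{Direct} case is analogous except that $\e_1'$ dispatches a fresh $\m'$ in $\MT'$, with \textsc{MethodOpt-Valid} supplying the body-level relation and the freshness side-condition of \WAOT{MethodTable} guaranteeing that $\getmd(\MT',\m',\gs)$ is precisely the intended method. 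The \WAOE{Inline} case is the only one in which the step shapes differ: $\e_1'$ has shape $\seq{\skp}{\e'}$ and reduces in one \WAE{Seq} step to $\e'$, which is the substituted body's optimization by the premise of \WAOE{Inline}. The backward direction is symmetric; in particular, the unique \WAE{Seq} step available on the $\seq{\skp}{\e'}$ side of \WAOE{Inline} is matched by the unique \WAE{CallLocal} step available on the original.

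The main obstacle is orchestrating these three call-optimization strategies so that in each case the dispatched body in $\MT'$ is shown related to the dispatched body in $\MT$ \emph{after} substitution of the concrete arguments. Lemma~\ref{lem:paper:subst-preserves-optim} is exactly the tool for this, and its hypothesis $\substokd$ is discharged using the concrete-typing premises $\typedwad{\nv_i}{\g_i}$ of \WAOE{Inline} and $\typedwad{\e_i'}{\g_i}$ of \WAOE{Direct} together with the fact that at the firing of \WAE{CallLocal} the near-values and arguments $\e_i'$ have already reduced to concrete values. The freshness side-condition in \WAOT{MethodTable} prevents introduced direct-call and specialization names from being accidentally redispatched, and the \textsc{MethodOpt-Valid} premise keeps $\SpecEnv$ self-consistent under the mutual recursion that arises when optimized bodies call one another.
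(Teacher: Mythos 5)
Your proposal matches the paper's proof in all essentials: induction/case analysis over the optimization derivation combined with the canonical decomposition of the redex, Context Irrelevance and Simple-Context Irrelevance to localize the step, Optimization Preserves Values for the trivial cases, and Value Substitution Preserves Optimization together with \textsc{MethodOpt-Valid} and the \WAOT{MethodTable} premises to discharge the \WAOE{Call}, \WAOE{Direct}, and (asymmetric) \WAOE{Inline} call cases. The only cosmetic difference is that you organize by redex position first and optimization rule second, whereas the paper inducts on the optimization derivation and sub-cases on the canonical form; the content is the same.
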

\begin{proof}
  The proof goes by induction on the derivation of
  optimization \expropt{\SpecEnv}{}{\MT}{\e_1}{\MT'}{\e_1'}.
  \APPENDICITIS{}{The full proof is given
  in~\appref{app:proofs:opt-correctness}, page~\pageref{proof:optimization-correct}.}
  For each case, both directions are proved by analyzing
  possible normal-evaluation steps. More specifically, the forward-direction
  proof strategy is as follows (the backward direction is similar):
  \begin{enumerate}
    \item Observe that to make the required step, $\e_1$ should have a
      certain representation. 
      Consider all possible representations that satisfy this
      requirement. 
    \item For each representation, analyze the suitable
      normal-evaluation rule (recall that the semantics is deterministic,
      so there will be just one such rule).
    \item If $\e_1$ represents an immediate redex
      (e.g. \seq{\v_{11}}{\e_{12}}), the optimized expression will be an
      immediate redex too (possibly, of a different form).  Otherwise, use
      induction hypothesis and auxiliary facts about contexts and evaluation
      to show that the optimized expression steps in a similar fashion,
      in particular, facts from \lemref{lem:paper:context-irrelevant}
      and \lemref{lem:paper:simple-context-irrelevant}.
    \item Finally, show that the resulting expressions are in the optimization
      relation. This will follow from the assumptions and induction.
  \end{enumerate}
  As an example, consider the proof of the forward direction for
  the sequence case \textbf{\WAOE{Seq}}. 
  By assumption, we have $\e_1 = \seq{\e_{11}}{\e_{12}}$ and
  $\e_1' = \seq{\e'_{11}}{\e'_{12}}$ where
  \begin{mathpar}
    \inferrule*[right=\WAOE{Seq}]
    { \expropte{\e_{11}}{\e'_{11}} \\ \exproptd{\e_{12}}{\e'_{12}} }
    { \expropte{\seq{\e_{11}}{\e_{12}}}{\seq{\e'_{11}}{\e'_{12}}} }.
  \end{mathpar}
  For \plugCx\Cx{\evalt{\MT}{\seq{\e_{11}}{\e_{12}}}} to reduce,
  by case analysis, we know there are three possibilities.
  \begin{enumerate}
    \item $\e_{11} = \v_{11}$ and
      $\evalwa{\MTg}{\plugCx\Cx{\evalt{\MT}{\seq{\v_{11}}{\e_{12}}}}}
              {\MTg}{\plugCx\Cx{\evalt{\MT}{\e_{12}}}}$
      by rule \WAE{Seq}. Then by \lemref{lem:paper:optimization-preserves-values},
      $\e'_{11} = \v_{11}$ and the optimized
      expression steps by the same rule:
      \[ \evalwa{\MTg}{\plugCx\Cx{\evalt{\MT'}{\seq{\v_{11}}{\e'_{12}}}}}
                {\MTg}{\plugCx\Cx{\evalt{\MT'}{\e'_{12}}}}. \]
      The desired optimization relation holds by one of the assumptions:
      \exproptd{\e_{12}}{\e'_{12}}.
    \item $\e_{11} = \plugx{\Xx_1}{\mcalld}$ and the original expression
      steps by \WAE{CallLocal}:
      \[
        \evalwa
        {\MT_g}{\plugCx\Cx{\evalt{\MT}{\seq{\plugx{\Xx_1}{\mcalld}}{\e_{12}}}}}
        {\MT_g}{\plugCx\Cx{\evalt{\MT}{\seq{\plugx{\Xx_1}{\e_b\subst{\xs}{\vs}}}{\e_{12}}}}}.
      \]
      Since \plugCx\Cx{\evalt{\MT}{\plugx{\Xx_1}{\mcalld}}} reduces similarly,
      by the induction hypothesis, $\exists\ \e'_{21}$ such that
      \[
        \evalwa
          {\MT_g}{\plugCx\Cx{\evalt{\MT'}{\e'_{11}}}}
          {\MT_g}{\plugCx\Cx{\evalt{\MT'}{\e'_{21}}}}
          \quad\text{ and }\quad
          \exproptd{\plugx{\Xx_1}{\e_b\subst{\xs}{\vs}}}{\e'_{21}}.
      \]
      But then, by~\lemref{lem:paper:simple-context-irrelevant}, the entire
      optimized expression \plugCx\Cx{\evalt{\MT}{\seq{\e'_{11}}{\e'_{12}}}}
      steps too, and the desired optimization relation holds:
      \begin{mathpar}
        \inferrule*[right=\WAOE{Seq}]
        { \expropt{\SpecEnv}{}{\MT}{\plugx{\Xx_1}{\e_b\subst{\xs}{\vs}}}{\MT'}{\e'_{21}} \\
          \expropt{\SpecEnv}{}{\MT}{\e_{12}}{\MT'}{\e'_{12}} }
        { \expropt{\SpecEnv}{}{\MT}{\seq{\plugx{\Xx_1}{\e_b\subst{\xs}{\vs}}}{\e_{12}}}
                              {\MT'}{\seq{\e'_{21}}{\e'_{12}}} }.
      \end{mathpar}
    \item $\e_{11} = \plugCx{\Cx_1}{\rdx}$ and
      by~\lemref{lem:paper:context-irrelevant}:
      \[
      \begin{array}{c}
      \evalwa{\MTg}{\plugCx\Cx{\evalt{\MT}{\seq{\plugCx{\Cx_1}{\rdx}}{\e_{12}}}}}
              {\MTg'}{\plugCx\Cx{\evalt{\MT}{\seq{\plugCx{\Cx_1}{\e'}}{\e_{12}}}}}
      \\ \iff
      \\ \evalwa{\MTg}{\plugCx\Cx{\evalt{\MT}{\plugCx{\Cx_1}{\rdx}}}}
                {\MTg'}{\plugCx\Cx{\evalt{\MT}{\plugCx{\Cx_1}{\e'}}}}.
      \end{array}
      \]
      Since \plugCx\Cx{\evalt{\MT}{\plugCx{\Cx_1}{\rdx}}} reduces,
      by the induction hypothesis, $\exists\ \e'_{21}$ such that
      \[
        \evalwa
          {\MT_g}{\plugCx\Cx{\evalt{\MT'}{\e'_{11}}}}
          {\MT_g}{\plugCx\Cx{\evalt{\MT'}{\e'_{21}}}}
          \quad\text{ and }\quad
          \exproptd{\plugCx{\Cx_1}{\e'}}{\e'_{21}}.
      \]
      Similarly to the previous case, the entire 
      \plugCx\Cx{\evalt{\MT}{\seq{\e'_{11}}{\e'_{12}}}} steps,
      and the desired optimization relation holds.
  \end{enumerate}
\end{proof}

\begin{lemma}[Reflexivity of Optimization]\label{lem:paper:optimization-reflexive}
  For all $\MT, \MT', \SpecEnv, \Gm, \e$, the following holds:
  \[
    \tableoptref{\SpecEnv}{\e}{\MT}{\MT'}
    \quad\implies\quad
    \expropt{\SpecEnv}{\Gm}{\MT}{\e}{\MT'}{\e}. 
  \]
\end{lemma}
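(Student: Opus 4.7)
The plan is to proceed by structural induction on $\e$, using the hypothesis $\tableoptrefd$, which unfolds (by rule \WAOT{MethodTable-Expr}) into the two facts $\tableoptd$ and $\tableoptexprd{\e}$. For each syntactic form of $\e$, I want to pick an ``identity-like'' rule from \figref{fig:wa-expr-opt} that relates $\e$ to itself and show that its premises follow from the hypotheses.

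First I would record a simple monotonicity observation about the side-condition \textsc{MNamesCompat}: if $\e_s$ is a syntactic subexpression of $\e$, then every function name referenced by $\e_s$ is also referenced by $\e$, so $\tableoptexprd{\e}$ entails $\tableoptexprd{\e_s}$. This lemma-by-inspection is what lets the global hypothesis propagate into each recursive call of the induction.

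For the base cases, the choice of rule is forced by the shape of $\e$: \WAOE{Val} handles $\v \neq \m$ with no premise; \WAOE{ValFun} handles $\m$, discharging its premise $\tableoptexprd{\m}$ directly from $\tableoptexprd{\e}$; \WAOE{Var} handles $\x$ with no premise; \WAOE{MD} handles $\md$, whose premise $\tableoptexprd{\mathop{name}(\md)}$ again follows from the global compatibility hypothesis. For $\evalg{\e_0}$ and $\evalt{\MT_l}{\e_0}$, notice that \WAOE{Global} and \WAOE{Local} do not recurse into $\e_0$ but merely demand $\tableoptexprd{\e_0}$; this is supplied by the monotonicity observation above, so no induction hypothesis is even needed for these cases. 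For the inductive cases $\seq{\e_1}{\e_2}$, $\primcalld{\es}$, and $\mcall{\e_c}{\es}$, I apply \WAOE{Seq}, \WAOE{Primop}, and \WAOE{Call} respectively; each of their premises is of the form $\exproptd{\e_i}{\e_i}$, which is exactly the induction hypothesis applied to a subexpression (whose side-condition $\tableoptexprd{\e_i}$ is again justified by monotonicity, while $\tableoptd$ is carried unchanged).

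The main obstacle, if any, is really just bookkeeping: making sure that the \textsc{MNamesCompat} side-condition is recognized as inherited by every subexpression, so that the reflexive optimization witnesses for $\evalg{\cdot}$, $\evalt{\MT_l}{\cdot}$, and $\md$---which do \emph{not} recurse structurally through the optimization judgment---are still available. Once the monotonicity observation is stated, the induction itself is mechanical, with one case per production in the grammar of \e.
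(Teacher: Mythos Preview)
Your proposal is correct and follows essentially the same approach as the paper: both proceed by structural induction on $\e$, and both single out the cases $\m$, $\md$, $\evalg{\cdot}$, and $\evalt{\MT_l}{\cdot}$ as the only non-trivial ones, discharging their \textsc{MNamesCompat} premises from the hypothesis $\tableoptref{\SpecEnv}{\e}{\MT}{\MT'}$. Your explicit ``monotonicity observation'' that \textsc{MNamesCompat} is inherited by subexpressions is exactly the mechanism the paper uses implicitly when it says the required \tableoptexprd{\e} is obtained by inversion; otherwise the arguments are the same.
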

\begin{proof}
  By induction on \e.
  The only interesting cases are \m, \md, \evalg{\e'}, and \evalt{\MT_l}{\e'}.
  For example, consider the case of \m (others are similar).
  Rule \WAOE{ValFun} requires a method named \m to either exist in both tables
  or do not appear in either (this rules out the case
  where \evalt{\MT}{\m()} would err but \evalt{\MT'}{\m()} succeed).
  This requirement is guaranteed by the assumption that
  \tableoptref{\SpecEnv}{\e}{\MT}{\MT'}, which by inversion,
  gives the necessary \tableoptexprd{\e}.
\end{proof}

The main result, \thmref{thm:paper:table-opt}, is a corollary of
\lemref{lem:paper:expr-optimization-bisimulation}.  It states that a fixed-table expression can be
soundly evaluated in an optimized table.

\begin{theorem}[Correctness of Table Optimization]\label{thm:paper:table-opt}
For all $\MT, \MT', \SpecEnv, e$
satisfying $\tableoptref{\SpecEnv}{\e}{\MT}{\MT'}$,
for all $\MT_g,\MT_g',\Cx,\v,$ the following holds:
\[ \evalwac{\MT_g}{\plugCx\Cx{\evalt{\MT}{\e}}}{\MT_g'}{\v}
   \iff
   \evalwac{\MT_g}{\plugCx\Cx{\evalt{\MT'}{\e}}}{\MT_g'}{\v}. \]
\end{theorem}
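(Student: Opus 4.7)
The plan is to reduce the theorem to repeated applications of the Bisimulation lemma (\lemref{lem:paper:expr-optimization-bisimulation}). First, I would use Lemma (Reflexivity of Optimization), \lemref{lem:paper:optimization-reflexive}, instantiated with $\Gm = \varnothing$, to turn the hypothesis $\tableoptrefd$ into the expression-level optimization $\expropt{\SpecEnv}{}{\MT}{\e}{\MT'}{\e}$. This gives a starting pair of expressions that are related both by the table optimization and by the expression optimization, which is exactly the setup required by the Bisimulation lemma.

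Next, I would prove both directions simultaneously by induction on the length $n$ of the multi-step reduction. Only the forward direction is described here, as the backward direction uses the symmetric clause of the Bisimulation lemma. Observe that because the final state $\v$ is a value while $\plugCx\Cx{\evalt{\MT}{\e}}$ contains the wrapper $\evalt{\MT}{\cdot}$, at some step in the reduction rule \WAE{ValLocal} must fire and strip this wrapper. Up until that step, by \lemref{lem:paper:expr-form-unique}, each intermediate expression can be uniquely decomposed, and in particular every reduction step preserves the syntactic shape $\plugCx\Cx{\evalt{\MT}{\e_i}}$ until $\e_i$ becomes a value. For each such intermediate step, the forward direction of Bisimulation yields a matching step on the optimized side producing $\plugCx\Cx{\evalt{\MT'}{\e_i'}}$, together with a fresh optimization relation $\expropt{\SpecEnv}{}{\MT}{\e_i}{\MT'}{\e_i'}$, so the induction hypothesis applies.

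When the inner expression reaches a value $\v'$, \lemref{lem:paper:optimization-preserves-values} forces $\e_i' = \v'$ as well, so rule \WAE{ValLocal} applies on both sides, producing the identical state $\plugCx\Cx{\v'}$ with the same global table. From that point on there is no syntactic difference between the two executions, so the remainder of the reduction in $\Cx$ proceeds identically and reaches the same $\v$ and $\MT_g'$. The main obstacle I expect is not in any single step but in justifying carefully that the reduction truly splits into the two phases described: one must appeal to \lemref{lem:paper:expr-form-unique} to argue that while $\e_i$ is not a value, each step is of the form that Bisimulation can handle, and use \lemref{lem:paper:simple-context-irrelevant} and \lemref{lem:paper:context-irrelevant} to move between the shapes $\plugCx\Cx{\evalt{\MT}{\e_i}}$ tracked by Bisimulation and the concrete redex decompositions produced by the operational semantics.
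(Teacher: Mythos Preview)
Your proposal is correct and follows essentially the same route as the paper: obtain $\expropt{\SpecEnv}{}{\MT}{\e}{\MT'}{\e}$ from \lemref{lem:paper:optimization-reflexive}, then induct on the length of the reduction, applying the Bisimulation lemma at each step and using \lemref{lem:paper:optimization-preserves-values} when the inner expression becomes a value so that \WAE{ValLocal} fires identically on both sides. The paper's proof leaves your ``two-phase'' structure implicit (it only spells out the ``interesting'' step where the reduction stays inside the $\evalt{\MT}{\cdot}$ wrapper), so your explicit treatment of the boundary is a welcome elaboration; note, however, that \lemref{lem:paper:context-irrelevant} and \lemref{lem:paper:simple-context-irrelevant} are already absorbed into the proof of Bisimulation and need not be invoked again here.
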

\begin{proof}
First of all, note that \expropt{\SpecEnv}{}{\MT}{\e}{\MT'}{\e}
by \lemref{lem:paper:optimization-reflexive},
and that \tableopt{\SpecEnv}{\e}{\MT}{\MT'} follows from
\tableoptref{\SpecEnv}{\e}{\MT}{\MT'}.
Then, we proceed by induction on $\evalstep^*$ (reflexive-transitive closure
of normal evaluation). In the interesting case of the forward direction, we have:
\begin{mathpar}
\inferrule[]
  { \evalwa{\MT_g}{\plugCx\Cx{\evalt{\MT}{\e}}}
           {\MT_g''}{\plugCx\Cx{\evalt{\MT}{\e'_1}}} \\
    \evalwac{\MT_g''}{\plugCx\Cx{\evalt{\MT}{\e'_1}}}{\MT_g'}{\v} }
  { \evalwac{\MT_g}{\plugCx\Cx{\evalt{\MT}{\e}}}{\MT_g'}{\v} }.
\end{mathpar}
By applying \lemref{lem:paper:expr-optimization-bisimulation}
to the first premise, we get:
\[ \evalwa{\MT_g}{\plugCx\Cx{\evalt{\MT'}{\e}}}
          {\MT_g''}{\plugCx\Cx{\evalt{\MT'}{\e'_2}}}
   \quad \text{and} \quad
   \expropt{\SpecEnv}{}{\MT}{\e'_1}{\MT'}{\e'_2}. \]
By applying the induction hypothesis to the second premise, we get:
\[ \evalwac{\MT_g''}{\plugCx\Cx{\evalt{\MT'}{\e'_2}}}{\MT_g'}{\v}. \]
By combining the results, we can get the desired derivation:
\begin{mathpar}
\inferrule[]
  { \evalwa{\MT_g}{\plugCx\Cx{\evalt{\MT'}{\e}}}
           {\MT_g''}{\plugCx\Cx{\evalt{\MT'}{\e'_2}}} \\
    \evalwac{\MT_g''}{\plugCx\Cx{\evalt{\MT'}{\e'_2}}}{\MT_g'}{\v} }
  { \evalwac{\MT_g}{\plugCx\Cx{\evalt{\MT'}{\e}}}{\MT_g'}{\v} }.
\end{mathpar}
The backward direction proceeds similarly.
\end{proof}

\thmref{thm:paper:table-opt}, in particular, justifies Julia's choice to
execute top-level calls using optimized methods.
Once a top-level call \stwa{\MT}{\evalg{\mcall{\m}{\vs}}}
steps to a fixed-table call \stwa{\MT}{\evalt{\MT}{\mcall{\m}{\vs}}},
it is sound to optimize table \MT into $\MT'$
(using inlining, direct calls, and specialization),
and evaluate the call in the optimized table
\stwa{\MT}{\evalt{\MT'}{\mcall{\m}{\vs}}}.

\subsection{Testing the Semantics}\label{sec:validation}

To check if \juliette behaves as we expect, we implemented it in
Redex~\cite{bib:redex-book} and ran it along with Julia
on a small set of 9 litmus tests (provided in~\appref{app:litmus});
Julia agrees with \juliette on all of them.
The tests cover the intersection of the semantics of \juliette and Julia,
and demonstrate the interaction of \eval, method definitions, and method calls.
In particular, the litmus tests ensure: that the executing semantics
prohibits calls to too-new methods, that this restriction can be skipped
with \eval or \invokelatest, and that the semantics of \eval executes
successive statements in the latest age.

\begin{wrapfigure}{r}{8.5cm}
\begin{minipage}{4cm}\begin{lstlisting}
r2() = r1()
m()  = (
  eval(:(r1() = 2));
  r2())
m() # error
\end{lstlisting}\end{minipage}
\begin{minipage}{3mm}
  \hspace{3mm}
\end{minipage}
\begin{minipage}{4cm}\begin{lstlisting}
r2() = r1()
m()  = (
  eval(:(r1() = 2));
  Base.invokelatest(r2))
m() == 2 # passes
\end{lstlisting}\end{minipage}
\caption{Litmus tests}\label{fig:litmus-example}
\end{wrapfigure}

Two of the litmus tests are shown in~\figref{fig:litmus-example};
each test is made up of a small program and its expected output.
The tests examine the case where a method \c{r2} is placed ``in between'' the
generated method \c{r1} and an older \c{m}. In the first test, \c{m} errs.
While \c{r2} is callable from the age that \c{m} was called in, \c{r1}
is not.  In the second test, we use \invokelatest to execute \c{r2} in the
latest world age; this allows the invocation of the dynamically
generated \c{r1}.

To use the litmus tests, we need to (1) translate them
from Julia into our grammar and (2)~implement the semantics of \juliette
into an executable form.  The former is done by translating ASTs.  The
latter is realized with a Redex mechanization, which is publicly available
on GitHub\footnote{\url{https://github.com/julbinb/juliette-wa}}
along with the litmus tests. The model implements the calculus almost literally.
Values, tags, and type annotations are instantiated with several concrete
examples, such as numbers and strings.  Primitive operations include
arithmetic and \c{print}.  The only difference between the paper and Redex
is handling of function names. Similar to Julia, in the Redex model, a
definition of the method named \c f introduces a global constant $f$.  When
referenced, the constant evaluates to a function value $\mval{f}$.  Thus,
instead of a single error evaluation rule \WAE{Var} from
\figref{semantics}, the Redex model has the following
two rules, one for normal evaluation and one for erroneous evaluation:

\begin{mathpar}
  \inferrule[\WAE{VarMethod}]
  { x \in \dom(\MT) }
  { \evalwad{\plugCx{\Cx}{\x}}{\plugCx{\Cx}{\mval{x}}} }

  \inferrule[\WAE{VarErr}]
  { x \notin \dom(\MT) }
  { \evalerrwad{\plugCx{\Cx}{\x}}{\err}. }
\end{mathpar}

\noindent
The new rules treat the global method table as a global environment:
\WAE{VarMethod} evaluates a global variable to its underlying function value,
and \WAE{VarErr} errs if a variable is not found in the global
environment; all local variables should be eliminated by substitution.  All
paper-style programs can be written in the Redex model, and the extension
makes it easier to compare and translate Julia programs to corresponding
Redex programs.  Thus, the litmus test on the left of
\figref{fig:litmus-example} translates to the Redex model as follows (the
grammar is written in S-expressions style):
\begin{lstlisting}
(evalg (seq (seq
  (mdef "r2" () (mcall r1)) # r2() = r1()
  (mdef "m"  () (seq
                  (evalg (mdef "r1" () 2)) # eval(:(r1() = 2))
                  (mcall r2))))            # r2()
  (mcall m))) # m()
\end{lstlisting}
The Redex model also implements the optimization judgments presented
in~\secref{subsec:optimizations}, as well as a straightforward optimization
algorithm that is checked against the judgments.
\APPENDICITIS{}{The definition of the algorithm is provided in~\appref{app:opt-algo}.}

Discussion with Julia's developers confirmed that our understanding of world
age is correct, and that the table-based semantics has a correspondence to
the age-based implementation. Namely, it is possible to generate \juliette
method tables 
from the global data
structure used by Julia to store methods.

\section{Conclusion} 

Julia's approach to dynamic code loading is distinct; instead of striving to
achieve performance \emph{in spite of} the language's semantics, the
designers of Julia chose to restrict expressiveness so that they could keep
their compiler simple \emph{and} generate fast code.  World age aligns Julia's
dynamic semantics with its just-in-time compiler's static
approximation. As a result, statically resolved function calls have the same
behavior as dynamic invocations.

This equivalence---that statically and dynamically resolved methods behave
the same---allows Julia to forsake some of the complexity
of modern compilers.  Instead of needing deoptimization to handle newly
added definitions, Julia simply does not allow running code to see those
definitions. Thus, optimizations can rely on the results of static reasoning
about the method table, while remaining sound in the presence of \eval.
If necessary, the programmer can explicitly ask for newly defined methods,
making the performance penalty explicit and user-controllable.

World age need not be limited to Julia. Any language that supports updating
existing function definitions may benefit from such a mechanism, namely
control over when those new definitions can be observed and when function
calls can be optimized.
From Java to languages like R, having a
clear semantics for updating code, especially in the presence of
concurrency, can be beneficial, as it would improve our ability to reason about
programs written in those languages.

Although the world-age semantics presented in the paper follows Julia,
\emph{a} world-age semantics does not have to. For instance,
an alternative world-age semantics could pick another point when the age
counter is incremented. The notion of top level makes sense in the context
of an interactive development environment, but is unclear in, for example, a
web server that may receive new code to install from time to time. Such a
continuously running system may need a definition of quiescence that is
different from the top-level used in Julia. One alternative is to provide
an explicit \c{freeze} construct that allows programmers to opt-in to the
world-age system. This would allow existing languages to incorporate world
age without affecting existing code.

The calculus we present here is a basic foundation intended to capture the
operation of world age. Future work may build on this to formalize the
semantics of Julia as a whole, but, notably, the additional semantics
will not impact the world-age mechanism itself. Of particular note is
mutable state: it is orthogonal to world age because Julia decouples code state from
data state by design. This was a pragmatic decision, as the compiler
depends on knowing the contents of the method table for its optimization.
Optimizations based on global variables are much less frequent.

\begin{acks}
We thank the anonymous reviewers for their insightful comments and
suggestions to improve this paper.  This work was supported by
\grantsponsor{ONR}{Office of Naval Research (ONR)}{} award
\grantnum{ONR}{503353}, the \grantsponsor{NSF}{National Science
  Foundation}{} awards \grantnum{NSF}{1759736},
\grantnum{NSF}{1925644} and \grantnum{NSF}{1618732}, the
\grantsponsor{BC}{Czech Ministry of Education from the Czech
  Operational Programme Research, Development, and Education}{}, under
grant agreement No.
\grantnum{BC}{CZ.02.1.01/0.0/0.0/15\_003/0000421}, and the
\grantsponsor{ELE}{European Research Council under the European
  Union's Horizon 2020 research and innovation programme}{}, under
grant agreement No.  \grantnum{ELE}{695412}.
\end{acks}

\appendix

\section{Litmus Tests}\label{app:litmus}
As a basic test of functionality, we provide 9 litmus tests shown
in~\figref{fig:litmus-tests}, written in Julia, that exercise the basic world
age semantics as well as key Julia semantics surrounding world age. The tests
suffice to identify the following semantic characteristics:
\begin{enumerate}[label=(\alph*)]
\item too-new methods cannot be called using a normal invocation;
\item \invokelatest uses the latest world age;
\item \eval uses the latest world age;
\item successive \eval statements run in the latest world age;
\item only age at the top-level is relevant for invocation visibility;
\item ``latest'' calls propagate the new world age;
\item \eval executes in the top-level scope;
\item normal invocation uses overridden methods if added method too new;
\item \eval will use latest definition of an overridden method.
\end{enumerate}

\begin{figure}[h]
\begin{minipage}{4.5cm}
\begin{lstlisting}
#fails, too new
function g()
  eval(:(k() = 2))
  k()
end
g() # error
\end{lstlisting}
\centering
(a)
\end{minipage}
\begin{minipage}{4.5cm}
\begin{lstlisting}
function h()
  eval(:(j() = 2))
  Base.invokelatest(j)
end

h() == 2
\end{lstlisting}
\centering
(b)
\end{minipage}
\begin{minipage}{4.5cm}
\begin{lstlisting}
function h()
  eval(:(p() = 2))
  eval(:(p()))
end

h() == 2
\end{lstlisting}
\centering
(c)
\end{minipage}

\begin{minipage}{4.5cm}
\begin{lstlisting}
r2() = r1()
function i()
  eval(:(r1() = 2))
  r2()
end

i() # error
\end{lstlisting}
\centering
(d)
\end{minipage}
\begin{minipage}{4.5cm}
\begin{lstlisting}
r4() = r3()
function m()
  eval(:(r3() = 2))
  Base.invokelatest(r4)
end

m() == 2
\end{lstlisting}
\centering
(e)
\end{minipage}
\begin{minipage}{4.5cm}
\begin{lstlisting}
function l()
  eval(quote
    eval(:(f1() = 2))
    f1()
  end)
end
l() == 2
\end{lstlisting}
\centering
(f)
\end{minipage}
\begin{minipage}{4.5cm}
\begin{lstlisting}
x = 1
f(x) = (
    eval(:(x = 0));
    x * 2)
f(42) == 84
x == 0
\end{lstlisting}
\centering
(g)
\end{minipage}
\begin{minipage}{4.5cm}
\begin{lstlisting}
g() = 2
f(x) = (eval(:(g() = $x));
        x * g())
f(42) == 84
g() == 42
f(42) == 1764
\end{lstlisting}
\centering
(h)
\end{minipage}
\begin{minipage}{4.5cm}
\begin{lstlisting}
g() = 2
f(x) = (eval(:(g() = $x));
        x * eval(:(g()))
f(42) == 1764
\end{lstlisting}
\centering
(i)
\end{minipage}
\caption{Litmus Tests}
\label{fig:litmus-tests}
\end{figure}


\APPENDICITIS{}{
  \section{Proofs}\label{app:proofs}

In this section, we provide detailed proofs for statements
from \secref{subsec:properties} and \secref{subsec:correctness}:
progress and determinism of \juliette semantics,
and correctness of the optimizations.
The proofs rely on a number of auxiliary lemmas and definitions
related to contexts and the form of expressions.

For convenience, all statements from the main text are reproduced here.
Definitions of the semantics and optimization relations can be found
in the main text: \secref{subsec:wa-semantics} defines the semantics,
\secref{subsec:optimizations} gives an informal account of optimizations,
and \secref{subsec:correctness} defines the optimizations formally.

\subsection{Preliminary}

In what follows, we use a new value form for non-functional values,
$\vnm = \v \setminus \m$. 

The following facts will be used implicitly, without a reference:
\begin{eqnarray}
  \forall \Xx_1.\forall \Xx_2. \exists \Xx. & \plugx{\Xx_1}{\Xx_2} & = \quad\Xx\\
  \forall \Xx.\forall \Cx'. \exists \Cx.    & \plugx{\Xx}{\Cx'} & = \quad\Cx \\
  \forall \Cx_1.\forall \Cx_2. \exists \Cx. & \plugx{\Cx_1}{\Cx_2} & = \quad\Cx
\end{eqnarray}
The first two are proved by induction on \Xx,
and the last one is proved by induction on \Cx.


\subsection{Progress of \juliette Programs}\label{app:proofs:progress}

To prove the \textbf{Progress \thmref{thm:paper:progress}},
we will need several auxiliary lemmas below.
Roughly, the idea is to show that any \juliette program can be represented in a
certain way (\lemref{lem:expr-form}) that allows for concluding progress
(\lemref{lem:redex-steps}).

The desired representation of expressions closely follows normal-
and error-evaluation rules:
it relies on a subset of expressions \rdx, which we call redex bases.
A redex base gives rise to an ``interesting'' reduction,
in the spirit of $\beta$-reduction from the
traditional context-based semantics of lambda-calculus.
Redex bases are defined in \figref{fig:wa-calculus-redex},
which reproduces \figref{fig:wa-rdx-main} and additionally hints
which part of the semantics every redex base corresponds to
(normal, error, or both).

\begin{figure}
  \[
  \begin{array}{ccl@{\qquad}ll}
      \\ \rdx & ::= & & \text{\emph{Redex Base}} &
      \\ &\Alt& \x & \text{variable} & \text{(error)}
      \\ &\Alt& \seq{\v}{\e} & \text{sequencing} & \text{(normal)}
      \\ &\Alt& \primcalld{\vs} & \text{primop call} & \text{(normal/error)}
      \\ &\Alt& \mcall{\vnm}{\vs} & \text{non-function call} & \text{(error)}
      \\ &\Alt& \md & \text{method definition} & \text{(normal)}
      \\ &\Alt& \evalg{\v} & \text{value in global context} & \text{(normal)}
      \\ &\Alt& \evalt{\MT}{\v} & \text{value in table context} & \text{(normal)}
      \\ &\Alt& \evalg{\plugx{\Xx}{\mcall{\m}{\vs}}} 
            & \text{function call in global context} & \text{(normal)}
      \\ &\Alt& \evalt{\MT}{\plugx{\Xx}{\mcall{\m}{\vs}}} 
            & \text{function call in table context} & \text{(normal/error)}
  \end{array}
  \]
  \caption{Redex Bases}\label{fig:wa-calculus-redex}
\end{figure}

\begin{lemma}[Form of Expression]\label{lem:expr-form}
  For any expression \e, one of the following holds:
  \begin{enumerate}[label=(\alph*)]
    \item $\e = \v$; or
    \item $\e = \plugx{\Xx}{\mcall{\m}{\vs}}$; or
    \item $\e = \plugCx{\Cx}{\rdx}$.
  \end{enumerate}
\end{lemma}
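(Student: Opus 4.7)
The plan is to proceed by structural induction on $\e$, performing a three-way case split at each level. The base cases are immediate: a value $\v$ witnesses (a); a variable $\x$ and a method definition $\md$ are themselves redex bases (see \figref{fig:wa-calculus-redex}), so they satisfy (c) via the empty world context $\hole$. The function-call form $\mcall{\e_c}{\es}$ is the only source of case (b), and it applies precisely when $\e_c = \m$ and every argument is a value, so that $\mcall{\m}{\vs} = \plugx{\hole}{\mcall{\m}{\vs}}$; other configurations either fall directly into (c) (e.g.\ $\mcall{\vnm}{\vs}$ is itself a redex base) or recurse on a non-value subexpression.

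For the wrapper forms $\evalg{\e'}$ and $\evalt{\MT}{\e'}$, I apply the induction hypothesis to $\e'$. If $\e'$ is a value or has the shape $\plugx{\Xx}{\mcall{\m}{\vs}}$, the whole expression is itself a redex base and (c) holds with the empty context. If $\e' = \plugCx{\Cx'}{\rdx}$, the wrapper absorbs into the world context, giving $\evalg{\e'} = \plugCx{\plugx{\hole}{\evalg{\Cx'}}}{\rdx}$ and similarly for $\evalt{\MT}{\cdot}$. Every subcase thus lands in (c).

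The real work is in the compound forms $\seq{\e_1}{\e_2}$, $\primcalld{\es}$, and $\mcall{\e_c}{\es}$ in its non-base subcase, where I must push an outer frame through the IH applied to the leftmost non-value subexpression. The main obstacle is syntactic: a naive concatenation such as $\seq{\plugCx{\Cx_1}{\rdx}}{\e_2}$ is not literally in the grammar of $\Cx$, since $\Cx$ only admits an outer $\Xx$ prefix before an optional $\evalg$/$\evalt$ frame. I would handle this with a short auxiliary closure lemma: for any elementary frame $F$ in $\{\seq{\hole}{\e}, \primcall{l}{\vs\ \hole\ \es}, \mcall{\hole}{\es}, \mcall{\v}{\vs\ \hole\ \es}\}$ and any $\Cx$, the expression $\plugx{F}{\Cx}$ is again a legal $\Cx'$. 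This is immediate from the grammar: every $\Cx$ decomposes as $\plugx{\Xx_0}{R}$ with $R \in \{\hole, \evalg{\Cx''}, \evalt{\MT}{\Cx''}\}$, and $\plugx{F}{\Xx_0}$ remains a valid $\Xx$ by inspection. With this lemma in hand, each compound case reduces to locating the leftmost non-value subexpression, invoking the IH to obtain (b) or (c), and wrapping the resulting simple or world context with the appropriate frame.
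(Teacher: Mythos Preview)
Your proposal is correct and follows essentially the same approach as the paper: structural induction on $\e$, with the three-way IH split and the same handling of each syntactic form. The closure property you isolate (that an elementary $\Xx$-frame composed with a $\Cx$ yields a $\Cx$) is exactly what the paper states as a preliminary fact (``$\forall \Xx.\forall \Cx'.\exists \Cx.\ \plugx{\Xx}{\Cx'}=\Cx$'') and then uses implicitly in each compound case, so you have simply made that step explicit.
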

\begin{proof}
By induction on the structure of \e.
\begin{description}
  \item[\x] This is (c): $\e = \plugCx{\hole}{\x}$.
  \item[\v] This is (a): $\e = \v$.
  \item[\seq{\e_1}{\e_2}] by the induction hypothesis,
    $\e_1$ is one of the following:
    \begin{description}
      \item[$\v_{11}$] Then $\e = \plugCx{\hole}{\seq{\v_{11}}{\e_2}}$,
        which gives us (c).
      \item[$\plugx{\Xx_1}{\mcall{\m_1}{\vs_1}}$] Then
        $\e = \seq{(\plugx{\Xx_1}{\mcall{\m_1}{\vs_1}})}{\e_2}
            = \plugx{(\seq{\Xx_1}{\e_2})}{\mcall{\m_1}{\vs_1}}
            = \plugx{\Xx}{\mcall{\m_1}{\vs_1}}$ for $\Xx = (\seq{\Xx_1}{\e_2})$,
        which gives us (b).
      \item[$\plugCx{\Cx_1}{\rdx_1}$] Then
        $\e = \seq{(\plugCx{\Cx_1}{\rdx_1})}{\e_2}
            = \plugCx{(\seq{\Cx_1}{\e_2})}{\rdx_1} = \plugCx{\Cx}{\rdx_1}$
            for $\Cx = (\seq{\Cx_1}{\e_2})$,
        which gives us (c).
    \end{description}
  \item[$\primcalld{\es}$] Reasoning similarly to $\seq{\e_1}{\e_2}$,
    by induction hypotheses, several cases are possible:
    \begin{itemize}
      \item $\e = \primcalld{\vs} = \plugCx{\hole}{\primcalld{\vs}}$,
        which is case (c).
      \item $\e = \primcalld{\vs,\ \plugx{\Xx_i}{\mcall{\m_i}{\vs_i}},\ \es}
                = \plugx{\primcalld{\vs\ \Xx_i\ \es}}{\mcall{\m_i}{\vs_i}}
                = \plugx{\Xx}{\mcall{\m_i}{\vs_i}}$
        for $\Xx = \primcalld{\vs\ \Xx_i\ \es}$, i.e. case (b).
      \item $\e = \primcalld{\vs,\ \plugCx{\Cx_i}{\rdx_i},\ \es}
                = \plugCx{\primcalld{\vs\ \Cx_i\ \es}}{\rdx_i}
                = \plugCx{\Cx}{\rdx_i}$
        for $\Cx = \primcalld{\vs\ \Cx_i\ \es}$, i.e. case (c).
    \end{itemize}
  \item[$\mcall{\e_c}{\es}$] Similarly to the above, several cases are possible:
    \begin{itemize}
      \item $\e = \mcall{\v_c}{\vs}$
        \begin{itemize}
          \item $\v_c = \vnm_c$ gives case (c),
            for $\e = \plugCx{\hole}{\mcall{\vnm_c}{\vs}}$.
          \item $\v_c = \m$ gives case (b),
            for $\e = \plugx{\hole}{\mcall{\m}{\vs}}$.
        \end{itemize}
      \item $\e = \mcall{(\plugx{\Xx_c}{\mcall{\m_c}{\vs_c}})}{\es}$ or
        $\e = \mcall{\v_c}{\vs,\ \plugx{\Xx_i}{\mcall{\m_i}{\vs_i}},\ \es}$
        give case (b) analogously to the case of
        $\e = \primcalld{\vs,\ \plugx{\Xx_i}{\mcall{\m_i}{\vs_i}},\ \es}$.
      \item $\e = \mcall{(\plugCx{\Cx_c}{\rdx_c})}{\es}$ or
        $\e = \mcall{\v_c}{\vs,\ \plugCx{\Cx_i}{\rdx_i},\ \es}$
        give case (c) analogously to the case of
        $\e = \primcalld{\vs,\ \plugCx{\Cx_i}{\rdx_i},\ \es}$.
    \end{itemize}
  \item[$\md$] This is (c): $\e = \plugCx{\hole}{\md}$.
  \item[$\evalg{\e'}$] by the induction hypothesis, $\e'$ is one of the following:
  \begin{description}
    \item[$\v'$] Then $\e = \evalg{\v'} = \plugCx{\hole}{\evalg{\v'}}$,
      which gives us (c).
    \item[$\plugx{\Xx'}{\mcall{\m'}{\vs'}}$] Then
      $\e = \evalg{\plugx{\Xx'}{\mcall{\m'}{\vs'}}}
          = \plugCx{\hole}{\evalg{\plugx{\Xx'}{\mcall{\m'}{\vs'}}}}$,
      which gives us (c).
    \item[$\plugCx{\Cx'}{\rdx'}$] Then
      $\e = \evalg{\plugCx{\Cx'}{\rdx'}} = \plugCx{\Cx}{\rdx'}$
          for $\Cx = \plugx{\hole}{\evalg{\Cx'}}$,
      which gives us (c).
  \end{description}
  \item[$\evalt{\MT}{\e'}$] Reasoning similarly to $\evalg{\e'}$,
    we have case (c).
\end{description}
\end{proof}

Together, world evaluation context \Cx and redex base \rdx make up a redex,
i.e. expression $\plugCx{\Cx}{\rdx}$ that can make a step. 

\begin{lemma}[Redex Steps]\label{lem:redex-steps}
  For any redex base \rdx, world context \Cx, and global table \MTg,
  \begin{enumerate}[label=(\alph*)]
    \item either\ $\evalwa{\MTg}{\plugCx{\Cx}{\rdx}}{\MTg'}{\plugCx{\Cx}{\e'}}$,
    \item or\qquad $\evalerrwa{\MTg}{\plugCx{\Cx}{\rdx}}{\err}$.
  \end{enumerate}
\end{lemma}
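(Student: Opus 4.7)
My plan is to prove the lemma by straightforward case analysis on the structure of the redex base $\rdx$, as defined in \figref{fig:wa-calculus-redex}. For each of the nine forms of $\rdx$, I will exhibit a concrete normal- or error-evaluation rule from \figref{semantics} that is applicable to $\plugCx{\Cx}{\rdx}$ for an arbitrary world context $\Cx$. Since every semantic rule is already stated in the form $\evalwa{\MT}{\plugCx{\Cx}{\cdot}}{\MT'}{\plugCx{\Cx}{\cdot}}$ or $\evalerrwa{\MT}{\plugCx{\Cx}{\cdot}}{\err}$, the context $\Cx$ plays no role in whether a rule fires; only the shape of $\rdx$ matters.

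For the unambiguous cases, the proof is immediate: $\seq{\v}{\e}$ steps by \WAE{Seq}, $\md$ by \WAE{MD}, $\evalg{\v}$ by \WAE{ValGlobal}, $\evalt{\MT}{\v}$ by \WAE{ValLocal}, and $\evalg{\plugx{\Xx}{\mcall{\m}{\vs}}}$ by \WAE{CallGlobal}; on the error side, $\x$ errs by \WAE{VarErr}, and $\mcall{\vnm}{\vs}$ errs by \WAE{CalleeErr} (this is why the redex base specifically uses $\vnm$, the non-function value form). The two remaining cases require splitting on an auxiliary computation. For $\primcalld{\vs}$, I case-split on the result of $\Primop(l, \vs)$: if it returns a value $\v'$, apply \WAE{Primop}; if it returns $\err$, apply \WAE{PrimopErr}. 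This is sound because $\Primop$ is declared total into $\rslt = \v \Alt \err$. For $\evalt{\MT}{\plugx{\Xx}{\mcall{\m}{\vs}}}$, I compute $\gs = \typeof(\vs)$ and case-split on $\getmd(\MT, \m, \gs)$: if it yields a method definition $\mdefd$, apply \WAE{CallLocal}; if it yields $\err$ (because $\applcbl$ is empty or $\min$ finds no unique best method), apply \WAE{CallErr}.

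The only mildly subtle point, and the place I would slow down, is the case $\evalt{\MT}{\plugx{\Xx}{\mcall{\m}{\vs}}}$: one must check that the definition of $\getmd$ via $\latest$, $\applcbl$, and $\min$ is indeed total into $\rslt$, so that the two cases above exhaust all possibilities. This is essentially by construction given the definitions in \figref{semantics} ($\min$ either returns a unique minimum or $\err$), but it is worth an explicit line. Everything else is a mechanical rule lookup, so I expect the proof to be short and direct once the case analysis is set up.
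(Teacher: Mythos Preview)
Your proposal is correct and matches the paper's own proof essentially step for step: a case analysis on the nine forms of $\rdx$, applying the obvious rule in each case, with the same two-way splits on $\Primop(l,\vs)$ and $\getmd(\MT,\m,\gs)$ for the primop and local-call cases. Your extra remark about the totality of $\getmd$ into $\rslt$ is a nice touch that the paper leaves implicit.
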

\begin{proof} By case analysis on \rdx, using rules from \figref{semantics}.
  \begin{description}
    \item[\x] By \WAE{VarErr}, $\evalerrwa{\MTg}{\plugCx{\Cx}{\x}}{\err}$,
      which is case (b).
    \item[\seq{\v}{\e}] By \WAE{Seq},
      $\evalwa{\MTg}{\plugCx{\Cx}{\seq{\v}{\e}}}{\MTg}{\plugCx{\Cx}{\e}}$,
      which is case (a).
    \item[\primcalld{\vs}] Depending on $\Primop(l, \vs)$, 
      \begin{itemize}
        \item either by \WAE{Primop},
          $\evalwa{\MTg}{\plugCx{\Cx}{\primcalld{\vs}}}{\MTg}{\plugCx{\Cx}{\v'}}$,
          i.e. case (a),
        \item or by \WAE{PrimopErr}
          $\evalerrwa{\MTg}{\plugCx{\Cx}{\primcalld{\vs}}}{\err}$, i.e. case (b).
      \end{itemize}
    \item[\mcall{\vnm}{\vs}] By \WAE{CalleeErr},
      $\evalerrwa{\MTg}{\plugCx{\Cx}{\mcall{\vnm}{\vs}}}{\err}$, i.e. case (b).
    \item[\md] By \WAE{MD},
      $\evalwa{\MTg}{\plugCx{\Cx}{\md}}{\MText{\MTg}{\md}}{\plugCx{\Cx}{\m}}$,
      which is case (a).
    \item[\evalg{\v}] By \WAE{ValGlobal},
      $\evalwa{\MTg}{\plugCx{\Cx}{\evalg{\v}}}{\MTg}{\plugCx{\Cx}{\v}}$,
      which is case (a).
    \item[\evalt{\MT}{\v}] By \WAE{ValLocal},
      $\evalwa{\MTg}{\plugCx{\Cx}{\evalt{\MT}{\v}}}{\MTg}{\plugCx{\Cx}{\v}}$,
      which is case (a).
    \item[\evalg{\plugx{\Xx}{\mcall{\m}{\vs}}}] By \WAE{CallGlobal},
      $\evalwa
         {\MTg}{\plugCx\Cx{\evalg{\plugx\Xx{\mcall\m\vs}}}}
         {\MTg}{\plugCx\Cx{\evalg{\plugx\Xx{\evalt{\MTg}{\mcall\m\vs}}}}}$,
      i.e. case (a). 
    \item[\evalt{\MT}{\plugx{\Xx}{\mcall{\m}{\vs}}}] Depending on the result
      of method call resolution \mcall{\m}{\vs} for \MT,
      \begin{itemize}
        \item either \WAE{CallLocal} gives us case (a),
        \item or \WAE{CallErr} gives us case (b).
      \end{itemize}
  \end{description}
\end{proof}

Finally, we can prove the \textbf{Progress \thmref{thm:paper:progress}}.
\begin{theorem}[Progress]\label{thm:progress}
  For any program \p and method table \MTg, 
  the program either reduces to a value,
  or it makes a step to another program,
  or it errs.
  That is, one of the following holds:
  \begin{enumerate}[label=(\alph*)]
    \item $\evalwa{\MTg}{\p}{\MTg'}{\v}$; or
    \item $\evalwa{\MTg}{\p}{\MTg'}{\p'}$; or
    \item $\evalerrwa{\MTg}{\p}{\err}$.
  \end{enumerate}
\end{theorem}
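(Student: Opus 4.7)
The plan is to case-analyze $\e$ in $\p = \evalg{\e}$ using \lemref{lem:paper:expr-form-unique} (Unique Form of Expression). Each of its three cases corresponds to one of the outcomes in the theorem statement, and in each case I would exhibit the step by either invoking a single normal-evaluation rule directly or delegating to the auxiliary Redex Steps lemma (\lemref{lem:redex-steps}).

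First, if $\e = \v$, then $\p = \evalg{\v}$ is itself a redex base $\evalg{\v}$ sitting under the empty world context $\Cx = \hole$, so rule \WAE{ValGlobal} gives $\evalwa{\MTg}{\p}{\MTg}{\v}$, which is case~(a) of the theorem. Second, if $\e = \plugx{\Xx}{\mcall{\m}{\vs}}$, then $\p$ matches the redex base $\evalg{\plugx{\Xx}{\mcall{\m}{\vs}}}$ with $\Cx = \hole$, and \WAE{CallGlobal} steps $\p$ to $\evalg{\plugx{\Xx}{\evalt{\MTg}{\mcall{\m}{\vs}}}}$; since this still has the shape $\evalg{\e'}$ for some $\e'$, it is another program, yielding case~(b).

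The third case, $\e = \plugCx{\Cx_e}{\rdx}$, is where a small context-reshaping observation is needed: $\p$ may be rewritten as $\plugCx{\Cx}{\rdx}$ for $\Cx = \plugx{\hole}{\evalg{\Cx_e}}$, and this $\Cx$ is a legitimate world evaluation context via the production $\plugx{\Xx}{\evalg{\Cx}}$ with $\Xx = \hole$. Applying \lemref{lem:redex-steps} then delivers either a normal step whose result $\plugCx{\Cx}{\e'} = \evalg{\plugCx{\Cx_e}{\e'}}$ is again a program (case~(b)), or an error (case~(c)).

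I expect the only nontrivial bookkeeping to be the context reshape in the third case, together with the verification that in the step-producing sub-cases the reduct is syntactically a program rather than a value; both are immediate from the grammars of \Cx and \p. Since the three cases of \lemref{lem:paper:expr-form-unique} are exhaustive and mutually exclusive, the above covers every $\p$ and $\MTg$, establishing progress.
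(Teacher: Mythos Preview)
Your proposal is correct and follows essentially the same route as the paper's own proof: case-analyze $\e$ in $\p = \evalg{\e}$ via the Form-of-Expression lemma, handle the value and $\plugx{\Xx}{\mcalld}$ cases directly with \WAE{ValGlobal} and \WAE{CallGlobal}, and in the $\plugCx{\Cx_e}{\rdx}$ case reshape to $\plugCx{\evalg{\Cx_e}}{\rdx}$ and invoke the Redex Steps lemma. The bookkeeping you flag (context reshape, reduct being a program) is exactly what the paper checks as well.
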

\begin{proof}\label{proof:progress}
  From $\p = \evalg{\e}$ and \lemref{lem:expr-form}, we know that
  one of the following holds:
  \begin{enumerate}
    \item $\p = \evalg{\v}$. This gives us case (a): 
      \begin{mathpar}
        \inferrule*[right=\WAE{ValGlobal}]
          { }
          { \evalwa{\MTg}{\evalg{\v}}{\MTg}{\v} }.
      \end{mathpar}
    \item $\p = \evalg{\plugx{\Xx}{\mcall{\m}{\vs}}}$. This gives us case (b):
      \begin{mathpar}
        \inferrule*[right=\WAE{CallGlobal}]
          { }
          { \evalwa
              {\MTg}{\evalg{\plugx{\Xx}{\mcall{\m}{\vs}}}}
              {\MTg}{\evalg{\plugx{\Xx}{\evalt{\MTg}{\mcall\m\vs}}}} }.
      \end{mathpar}
    \item $\p = \evalg{\plugCx{\Cx}{\rdx}} = \plugCx{\Cx'}{\rdx}$
      for $\Cx' = \evalg{\Cx}$. By \lemref{lem:redex-steps},
      one of the following holds:
      \begin{itemize}
        \item $\evalwa{\MTg}{\plugCx{\Cx'}{\rdx}}{\MTg'}{\plugCx{\Cx'}{\e'}}$.
          This gives us (b) because $\evalg{\plugCx{\Cx}{\e'}} = \p'$:
          \[
            \evalwa
                {\MTg}{\evalg{\plugCx{\Cx}{\rdx}}}
                {\MTg'}{\evalg{\plugCx{\Cx}{\e'}}}.
          \]
        \item $\evalerrwa{\MTg}{\plugCx{\Cx'}{\rdx}}{\err}$.
          This gives us (a):
          \[ \evalerrwa{\MTg}{\evalg{\plugCx{\Cx}{\rdx}}}{\err}. \]
      \end{itemize}
  \end{enumerate}
\end{proof}

\subsection{Canonical Representation of Expressions}\label{app:proofs:canonical}

In this section, we show that:
\begin{enumerate}
  \item The form of expression \e described in~\lemref{lem:expr-form}
    (\v, \plugx{\Xx}{\mcall{\m}{\vs}}, or \plugx{\Cx}{\rdx}) is in fact unique.
  \item Any normal- or error-evaluation step of \e is determined by
    its \plugx{\Cx}{\rdx} representation.
\end{enumerate}

To simplify proofs, we will use an auxiliary syntax \rep to distinguish
between expression forms, and an auxiliary judgment \Can{\e}{\rep}
to remember how those forms are built.
These new definitions are provided in~\figref{fig:wa-canoncial-forms}.

\begin{figure}
  \[
  \begin{array}{ccl@{\qquad}l}
      \\ \rep & ::= & & \text{\emph{Form of expression}}
      \\ &\Alt& \rVd &
      \\ &\Alt& \rXd &
      \\ &\Alt& \rCd &
      \\
      \\ & & \Can{\e}{\rep} & \text{\emph{Canonical form}}
  \end{array}
  \]
  \begin{mathpar}
    \inferrule[\WACAN{Val}]
    { }
    { \Can{\v}{\rVd} }
    
    \inferrule[\WACAN{Call}]
    { }
    { \Can{\mcalld}{\rXX{\hole}} }

    \inferrule[\WACAN{Redex}]
    { }
    { \Can{\rdx}{\rCC{\hole}} }
    \\

    \inferrule[\WACAN{SeqX}]
    { \Can{\e_1}{\rXX{\Xx_1}} }
    { \Can{\seq{\e_1}{\e_2}}{\rXX{\seq{\Xx_1}{\e_2}}} }

    \inferrule[\WACAN{SeqC}]
    { \Can{\e_1}{\rCC{\Cx_1}} }
    { \Can{\seq{\e_1}{\e_2}}{\rCC{\seq{\Cx_1}{\e_2}}} }
    \\

    \inferrule[\WACAN{PrimopX}]
    { \Can{\e_i}{\rXX{\Xx_i}} }
    { \Can{\primcall{l}{\vs',\,\e_i,\,\es}}{\rXX{\primcall{l}{\vs',\,\Xx_i,\,\es}}} }

    \inferrule[\WACAN{PrimopC}]
    { \Can{\e_i}{\rCC{\Cx_i}} }
    { \Can{\primcall{l}{\vs',\,\e_i,\,\es}}{\rCC{\primcall{l}{\vs',\,\Cx_i,\,\es}}} }
    \\

    \inferrule[\WACAN{CalleeX}]
    { \Can{\e_c}{\rXX{\Xx_c}} }
    { \Can{\mcall{\e_c}{\es}}{\rXX{\mcall{\Xx_c}{\es}}} }

    \inferrule[\WACAN{CalleeC}]
    { \Can{\e_c}{\rCC{\Cx_c}} }
    { \Can{\mcall{\e_c}{\es}}{\rCC{\mcall{\e_c}{\es}}} }
    \\

    \inferrule[\WACAN{CallX}]
    { \Can{\e_i}{\rXX{\Xx_i}} }
    { \Can{\mcall{\m'}{\vs',\,\e_i,\,\es}}{\rXX{\mcall{\m'}{\vs',\,\Xx_i,\,\es}}} }

    \inferrule[\WACAN{CallC}]
    { \Can{\e_i}{\rCC{\Cx_i}} }
    { \Can{\mcall{\m'}{\vs',\,\e_i,\,\es}}{\rCC{\mcall{\m'}{\vs',\,\Cx_i,\,\es}}} }
    \\

    \inferrule[\WACAN{EvalGlobalC}]
    { \Can{\e}{\rCd} }
    { \Can{\evalg{\e}}{\rCC{\evalg{\Cx}}} }

    \inferrule[\WACAN{EvalLocalC}]
    { \Can{\e}{\rCd} }
    { \Can{\evalt{\MT}{\e}}{\rCC{\evalt{\MT}{\Cx}}} }
  \end{mathpar}
\caption{Canonical forms of internal \juliette expressions}\label{fig:wa-canoncial-forms}
\end{figure}

\begin{lemma}[Reconstruction from Canonical Forms]\label{lem:can-rep-build}
  For all expressions \e, the following holds:
  \begin{enumerate}[label=(\alph*)]
    \item $\Can{\e}{\rVd} \quad\quad\ \implies \e = \v.$
    \item $\Can{\e}{\rXd} \implies \e = \plugx{\Xx}{\mcalld}.$
    \item $\Can{\e}{\rCd} \ \,\implies \e = \plugx{\Cx}{\rdx}.$
  \end{enumerate}
\end{lemma}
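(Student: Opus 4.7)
The plan is to proceed by induction on the derivation of $\Can{\e}{\rep}$, splitting each case according to the conclusion's form (\rVd, \rXd, or \rCd). The three base cases \WACAN{Val}, \WACAN{Call}, \WACAN{Redex} discharge (a), (b), (c) directly by taking the empty context \hole: observe that $\plugx{\hole}{\mcalld} = \mcalld$ and $\plugx{\hole}{\rdx} = \rdx$. Each remaining rule has exactly one premise of the form $\Can{\e_i}{\rep_i}$, and the shape of $\rep_i$ matches that of the conclusion (X-rules use X-premises, C-rules use C-premises), so the appropriate induction hypothesis applies to the subexpression.

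For the inductive X-cases (\WACAN{SeqX}, \WACAN{PrimopX}, \WACAN{CalleeX}, \WACAN{CallX}), the IH (b) on the relevant subexpression $\e_i$ yields $\e_i = \plugx{\Xx_i}{\mcalld}$. Plugging this back into the surrounding expression builds exactly the simple context recorded in the conclusion's $\rXd$ label. For instance, in \WACAN{SeqX}, if $\e_1 = \plugx{\Xx_1}{\mcalld}$, then $\seq{\e_1}{\e_2} = \seq{\plugx{\Xx_1}{\mcalld}}{\e_2} = \plugx{(\seq{\Xx_1}{\e_2})}{\mcalld}$, which matches $\rXX{\seq{\Xx_1}{\e_2}}$. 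The other X-cases are analogous, following the grammar of \Xx from \figref{semantics}.

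The inductive C-cases (\WACAN{SeqC}, \WACAN{PrimopC}, \WACAN{CalleeC}, \WACAN{CallC}, \WACAN{EvalGlobalC}, \WACAN{EvalLocalC}) proceed identically, invoking IH (c) and using the grammar of \Cx. The two \Eval-context rules (\WACAN{EvalGlobalC}, \WACAN{EvalLocalC}) are the only rules where a C-premise produces a C-conclusion through a form (\evalg{\cdot} or \evalt{\MT}{\cdot}) that does not itself appear in the \Xx grammar, but this is precisely why these forms are only available on the C side: they match the corresponding \Cx productions \plugx{\hole}{\evalg{\Cx}} and \plugx{\hole}{\evalt{\MT}{\Cx}}.

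I don't anticipate any genuinely hard step here; the proof is essentially a case-by-case inspection that the canonical judgment rules mirror the grammars of \Xx, \Cx, and \rdx verbatim. The main attention point is bookkeeping: making sure that for each rule the label $\rep$ in the conclusion, after unfolding the subcontext supplied by the IH, reassembles into a context that, when plugged with $\mcalld$ or $\rdx$, yields back exactly the original expression \e. This follows mechanically from the shape of each rule and does not require any auxiliary lemma.
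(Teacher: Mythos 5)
Your proposal matches the paper's proof: both proceed by induction on the derivation of $\Can{\e}{\rep}$, discharge the base rules \WACAN{Val}, \WACAN{Call}, \WACAN{Redex} with the empty context \hole, and in each inductive case apply the induction hypothesis to the single premise and reassemble the context recorded in the conclusion (e.g., $\seq{\plugx{\Xx_1}{\mcalld}}{\e_2} = \plugx{(\seq{\Xx_1}{\e_2})}{\mcalld}$ for \WACAN{SeqX}). The argument is correct and no further comment is needed.
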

\begin{proof}
  By induction on the derivation of \Can{\e}{\rep}.
  \begin{enumerate}[label=(\alph*)]
  \item This case is trivial: there is only one constructor of \CanSym
    with $\rV$ expression form, \WACAN{Value}, and thus $\e = \v$ by inversion.
  \item There are multiple \CanSym constructors with $\rX$ forms.
    \begin{description}
      \item[\WACAN{Call}] Another trivial case. By inversion, $\e = \mcalld$,
        and also $\plugx{\hole}{\mcalld} = \mcalld.$
      \item[\WACAN{SeqX}] By inversion, $\e = (\seq{\e_1}{\e_2})$.
        By the induction hypothesis, $\e_1 = \plugx{\Xx_1}{\mcalld}$. Therefore,
        \[\plugx{(\seq{\Xx_1}{\e_2})}{\mcalld} = \seq{\plugx{\Xx_1}{\mcalld}}{\e_2}
            = \seq{\e_1}{\e_2}.\]
      \item[$\ldots$] Other cases are similar to \WACAN{SeqX}.
    \end{description}
  \item There are multiple \CanSym constructors with $\rC$ forms.
    \begin{description}
      \item[\WACAN{Redex}] This is a trivial case. By inversion, $\e = \rdx$,
        and also $\plugx{\hole}{\rdx} = \rdx.$
      \item[\WACAN{SeqC}] By inversion, $\e = (\seq{\e_1}{\e_2})$.
        by the induction hypothesis, $\e_1 = \plugx{\Cx_1}{\rdx}$. Therefore,
        \[\plugx{(\seq{\Cx_1}{\e_2})}{\rdx} = \seq{\plugx{\Cx_1}{\rdx}}{\e_2}
            = \seq{\e_1}{\e_2}.\]
      \item[$\ldots$] Other cases are similar to \WACAN{SeqC}.
    \end{description}
  \end{enumerate}
\end{proof}

\begin{lemma}[\Xx preserves Canonical $\rC$]\label{lem:X-preserves-canonical-C}
  For all contexts \Xx and expressions \e,
  \[\Can{\e}{\rCd} \quad \implies \quad \Can{\plugx{\Xx}{\e}}{\rCC{\plugx{\Xx}{\Cx}}}.\]
\end{lemma}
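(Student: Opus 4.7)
The plan is to prove this lemma by straightforward structural induction on the simple evaluation context $\Xx$. The statement relates two context-plugging operations: if $\e$ is witnessed to be in canonical $\rC$-form with context $\Cx$ and redex base $\rdx$, then plugging that $\e$ into an outer simple context $\Xx$ should again give a canonical $\rC$-form, where the witnessing context is the combined context $\plugx{\Xx}{\Cx}$ (with the same $\rdx$ at the bottom).

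First I would handle the base case $\Xx = \hole$. Here $\plugx{\hole}{\e} = \e$ and $\plugx{\hole}{\Cx} = \Cx$, so the conclusion $\Can{\plugx{\hole}{\e}}{\rCC{\plugx{\hole}{\Cx}}}$ is exactly the hypothesis $\Can{\e}{\rCC{\Cx}}$. Then for each non-hole production of $\Xx$ from the grammar, I would peel off the outer constructor, invoke the induction hypothesis on the strictly smaller subcontext, and finish with the matching $\WACAN{}\text{-}\rC$ rule. Concretely, for $\Xx = \seq{\Xx'}{\e_2}$, the IH gives $\Can{\plugx{\Xx'}{\e}}{\rCC{\plugx{\Xx'}{\Cx}}}$, and then \WACAN{SeqC} yields the result; for $\Xx = \primcall{l}{\vs\ \Xx'\ \es}$ use \WACAN{PrimopC}; for $\Xx = \mcall{\Xx'}{\es}$ use \WACAN{CalleeC}; and for $\Xx = \mcall{\v}{\vs\ \Xx'\ \es}$ use \WACAN{CallC}. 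In every case the associativity of plugging ($\plugx{(\seq{\Xx'}{\e_2})}{\e} = \seq{\plugx{\Xx'}{\e}}{\e_2}$, and analogously for the other frames) is what lines up the syntactic shape with the premise of the canonical-form rule.

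I do not expect a genuine obstacle here: this is essentially the standard "contexts compose" fact, but transported from raw context plugging to the canonical-form judgment \CanSym. The only minor subtlety is making sure the canonical-form rules cover every shape of $\Xx$-frame, particularly the function-call-with-value-callee frame $\mcall{\v}{\vs\ \Xx'\ \es}$; here one appeals to \WACAN{CallC} (reading its $\m'$ as ranging over the value in the callee position, since otherwise the callee-value case in $\Xx$ would be left unaccounted for). Apart from that bookkeeping, the induction is mechanical: each frame in the grammar of $\Xx$ is matched one-for-one by a $\WACAN{}\text{-}\rC$ rule, and the IH threads the $\Cx$ through the single distinguished subexpression of that frame.
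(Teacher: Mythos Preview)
Your proposal is correct and follows exactly the same route as the paper: structural induction on \Xx, with the hole case immediate and each compound frame handled by the induction hypothesis on the subcontext followed by the matching $\rC$-form rule (\WACAN{SeqC}, \WACAN{PrimopC}, \WACAN{CalleeC}, \WACAN{CallC}). The paper in fact spells out only the \hole and $\seq{\Xx_1}{\e_2}$ cases and dismisses the rest as ``similar'', so your write-up is, if anything, more explicit.

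Your side remark about the callee in \WACAN{CallC} being written as $\m'$ rather than an arbitrary value $\v$ is a genuine observation: the \Xx-grammar admits $\mcall{\v}{\vs\ \Xx'\ \es}$ for any value $\v$, whereas \WACAN{CallC} (and \WACAN{CallX}) as stated only cover a function name in callee position. This is a small mismatch in the paper's canonical-form rules, not a flaw in your argument; the paper's own ``other cases are similar'' sweeps over the same point. Reading $\m'$ there as an arbitrary value (or, equivalently, adding the obvious missing rule) is the intended fix.
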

\begin{proof}
  By induction on \Xx.
  \begin{description}
    \item[\hole] Since $\plugx{\hole}{\e} = \e$ and $\plugx{\hole}{\Cx} = \Cx$,
      $\Can{\plugx{\hole}{\e}}{\rCC{\plugx{\hole}{\Cx}}}$ by assumption.
    \item[\seq{\Xx_1}{\e_2}] By definition,
      $\plugx{(\seq{\Xx_1}{\e_2})}{\e} = \seq{\plugx{\Xx_1}{\e}}{\e_2}$
      and $\plugx{(\seq{\Xx_1}{\e_2})}{\Cx} = \seq{\plugx{\Xx_1}{\Cx}}{\e_2}$.
      By the induction hypothesis for $\Xx_1$,
      \Can{\plugx{\Xx_1}{\e}}{\rCC{\plugx{\Xx_1}{\Cx}}}. Then:
      \begin{mathpar}
        \inferrule*[right=\WACAN{SeqC}]
        { \Can{\plugx{\Xx_1}{\e}}{\rCC{\plugx{\Xx_1}{\Cx}}} }
        { \Can{\seq{\plugx{\Xx_1}{\e}}{\e_2}}{\rCC{\seq{\plugx{\Xx_1}{\Cx}}{\e_2}}} }.
      \end{mathpar}
    \item[\ldots] Other cases are similar.
  \end{description}
\end{proof}

\begin{lemma}[\Cx preserves Canonical $\rC$]\label{lem:C-preserves-canonical-C}
  For all contexts $\Cx'$ and expressions \e,
  \[\Can{\e}{\rCd} \quad \implies \quad \Can{\plugx{\Cx'}{\e}}{\rCC{\plugx{\Cx'}{\Cx}}}.\]
\end{lemma}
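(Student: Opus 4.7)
The plan is to proceed by induction on the structure of $\Cx'$, using the grammar of world evaluation contexts: $\Cx' = \Xx'$, or $\Cx' = \plugx{\Xx'}{\evalg{\Cx_1}}$, or $\Cx' = \plugx{\Xx'}{\evalt{\MT}{\Cx_1}}$. The base case, where $\Cx' = \Xx'$, falls out immediately from Lemma~\ref{lem:X-preserves-canonical-C} (the \Xx-preservation lemma just proved), since that lemma gives $\Can{\plugx{\Xx'}{\e}}{\rCC{\plugx{\Xx'}{\Cx}}}$ whenever $\Can{\e}{\rCd}$.

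For the two inductive cases, I will first apply the induction hypothesis to $\Cx_1$ (using the same expression \e) to obtain $\Can{\plugx{\Cx_1}{\e}}{\rCC{\plugx{\Cx_1}{\Cx}}}$. Then I will wrap this derivation with the appropriate \CanSym rule: \WACAN{EvalGlobalC} in the first case, yielding
\[ \Can{\evalg{\plugx{\Cx_1}{\e}}}{\rCC{\evalg{\plugx{\Cx_1}{\Cx}}}}, \]
and \WACAN{EvalLocalC} in the second case, yielding an analogous conclusion with $\evalt{\MT}{\cdot}$ in place of $\evalg{\cdot}$. Finally, I will invoke Lemma~\ref{lem:X-preserves-canonical-C} once more with the outer simple context $\Xx'$ to plug the resulting expression into $\Xx'$; this yields $\Can{\plugx{\Xx'}{\evalg{\plugx{\Cx_1}{\e}}}}{\rCC{\plugx{\Xx'}{\evalg{\plugx{\Cx_1}{\Cx}}}}}$, which is exactly $\Can{\plugx{\Cx'}{\e}}{\rCC{\plugx{\Cx'}{\Cx}}}$ by the definition of context plugging, and similarly for the local-evaluation case.

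The only bookkeeping step worth checking explicitly is the equality $\plugx{(\plugx{\Xx'}{\evalg{\Cx_1}})}{\e} = \plugx{\Xx'}{\evalg{\plugx{\Cx_1}{\e}}}$ (and the analogous equality on the right-hand side, as well as the local-evaluation version). These follow directly from the compositional definition of context plugging, which was noted implicitly in the preamble of the appendix. I expect no serious obstacle: the statement is the natural ``\Cx-version'' of the preceding \Xx-lemma, and the world-context grammar decomposes every $\Cx'$ into an outer \Xx' enclosing at most one more world context, so the induction is shallow and the structural rules of \CanSym line up one-to-one with the grammar productions of \Cx.
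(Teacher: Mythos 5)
Your proposal is correct and follows essentially the same route as the paper's proof: induction on $\Cx'$, discharging the base case via Lemma~\ref{lem:X-preserves-canonical-C}, and in the two inductive cases applying the induction hypothesis to the inner world context, wrapping with \WACAN{EvalGlobalC} or \WACAN{EvalLocalC}, and finishing with another application of Lemma~\ref{lem:X-preserves-canonical-C} for the outer simple context. The plugging equalities you flag are exactly the ones the paper also notes as following from the definition of context composition.
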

\begin{proof}
  By induction on $\Cx'$.
  \begin{description}
    \item[\Xx] This case is covered by \lemref{lem:X-preserves-canonical-C}.
    \item[\plugx{\Xx}{\evalg{\Cx''}}] By definition, 
      $\plugx{(\plugx{\Xx}{\evalg{\Cx''}})}{\e} = \plugx{\Xx}{\evalg{\plugx{\Cx''}{\e}}}$
      and $\plugx{(\plugx{\Xx}{\evalg{\Cx''}})}{\Cx} = \plugx{\Xx}{\evalg{\plugx{\Cx''}{\Cx}}}.$
      By the induction hypothesis for $\Cx''$,
      \Can{\plugx{\Cx''}{\e}}{\rCC{\plugx{\Cx''}{\Cx}}}. Then:
      \begin{mathpar}
        \inferrule*[right=\WACAN{EvalGlobalC}]
        { \Can{\plugx{\Cx''}{\e}}{\rCC{\plugx{\Cx''}{\Cx}}} }
        { \Can{\evalg{\plugx{\Cx''}{\e}}}{\rCC{\evalg{\plugx{\Cx''}{\Cx}}}} },
      \end{mathpar}
      and by \lemref{lem:X-preserves-canonical-C},
      \[ \Can{\plugx{\Xx}{\evalg{\plugx{\Cx''}{\e}}}}{\rCC{\plugx{\Xx}{\evalg{\plugx{\Cx''}{\Cx}}}}}. \]
    \item[\plugx{\Xx}{\evalt{\MT}{\Cx''}}] Similarly to the previous case.
  \end{description}
\end{proof}

\begin{lemma}[\plugx{\Xx}{\mcalld} is Canonical]\label{lem:Xcall-canonical}
  For all \Xx and \mcalld,
  \[ \Can{\plugx{\Xx}{\mcalld}}{\rXd}. \]
\end{lemma}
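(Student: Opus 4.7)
\medskip

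\noindent\textbf{Proof proposal.} The plan is a straightforward structural induction on the simple evaluation context $\Xx$, mirroring the shape of the $\WACAN{\cdot X}$ rules in \figref{fig:wa-canoncial-forms}. For each production in the grammar of $\Xx$, plugging $\mcalld$ into that shape produces an expression whose outer form matches exactly one of the $\WACAN{\cdot X}$ constructors, and the premise of that constructor is discharged by the induction hypothesis applied to the strictly smaller subcontext.

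Concretely, in the base case $\Xx = \hole$ we have $\plugx{\hole}{\mcalld} = \mcalld$, and \WACAN{Call} immediately gives $\Can{\mcalld}{\rXX{\hole}}$. For $\Xx = \seq{\Xx_1}{\e_2}$, the induction hypothesis on $\Xx_1$ yields $\Can{\plugx{\Xx_1}{\mcalld}}{\rXX{\Xx_1}}$, and since $\plugx{\Xx}{\mcalld} = \seq{\plugx{\Xx_1}{\mcalld}}{\e_2}$, applying \WACAN{SeqX} gives $\Can{\plugx{\Xx}{\mcalld}}{\rXX{\seq{\Xx_1}{\e_2}}}$, as required. The three remaining cases --- $\Xx = \primcall{l}{\vs\ \Xx_i\ \es}$, $\Xx = \mcall{\Xx_c}{\es}$, and $\Xx = \mcall{\v}{\vs\ \Xx_i\ \es}$ --- are handled analogously using \WACAN{PrimopX}, \WACAN{CalleeX}, and \WACAN{CallX}, respectively, each time invoking the induction hypothesis on the unique $\Xx$-subcontext appearing in the frame.

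Since the grammar of $\Xx$ has exactly five productions and each is matched by a corresponding $\WACAN{\cdot X}$ rule whose premise is of the form $\Can{\e_i}{\rXX{\Xx_i}}$, there is no real obstacle: the induction is driven by a one-to-one correspondence between the $\Xx$ productions and the $\WACAN{\cdot X}$ rules. The only thing to be careful about is noting, at each step, that $\plugx{\Xx}{\mcalld}$ unfolds to the very shape demanded by the corresponding canonical-form rule, so that the rule is syntactically applicable.
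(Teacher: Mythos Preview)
Your proposal is correct and matches the paper's own proof: a structural induction on $\Xx$, with the base case handled by \WACAN{Call} and each inductive case discharged by the corresponding $\WACAN{\cdot X}$ rule applied to the induction hypothesis on the smaller subcontext. The paper spells out only the $\hole$ and $\seq{\Xx_1}{\e_2}$ cases and leaves the rest as ``similar,'' which is exactly the level of detail you give.
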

\begin{proof}
  By induction on \Xx.
  \begin{description}
    \item[\hole] In this case, $\plugx{\hole}{\mcalld} = \mcalld$,
      and by \WACAN{Call}, \Can{\mcalld}{\rXX{\hole}}.
    \item[\seq{\Xx_1}{\e_2}] By definition, 
      $\plugx{(\seq{\Xx_1}{\e_2})}{\mcalld} = (\seq{\plugx{\Xx_1}{\mcalld}}{\e_2}).$
      By the induction hypothesis for $\Xx_1$ and \CanSym constructor: 
      \begin{mathpar}
        \inferrule*[right=\WACAN{SeqX}]
        { \Can{\plugx{\Xx_1}{\mcalld}}{\rXX{\Xx_1}} }
        { \Can{\seq{\plugx{\Xx_1}{\mcalld}}{\e_2}}{\rXX{\seq{\Xx_1}{\e_2}}} }.
      \end{mathpar}
    \item[\ldots] Other cases are similar.
  \end{description}
\end{proof}

\begin{lemma}[\plugx{\Xx}{\rdx} is Canonical]\label{lem:Xrdx-canonical}
  For all \Xx and \rdx,
  \[ \Can{\plugx{\Xx}{\rdx}}{\rCC{\Xx}}. \]
\end{lemma}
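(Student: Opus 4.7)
The plan is to proceed by induction on the structure of the simple evaluation context $\Xx$, mirroring the proof of \lemref{lem:Xcall-canonical} with the roles of $\mcalld$/$\rX$ replaced by $\rdx$/$\rC$. The induction is natural because the canonical-form judgment $\CanSym$ in \figref{fig:wa-canoncial-forms} is defined by direct structural recursion, and for every syntactic shape of $\Xx$ there is a matching $\rC$-producing rule (\WACAN{SeqC}, \WACAN{PrimopC}, \WACAN{CalleeC}, \WACAN{CallC}) whose premises ask for exactly the canonical judgments the induction hypothesis delivers.

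First I would dispatch the base case $\Xx = \hole$: here $\plugx{\hole}{\rdx} = \rdx$ and $\rCC{\hole} = \rCC{\hole}$, so \WACAN{Redex} immediately gives $\Can{\rdx}{\rCC{\hole}}$. Then, for each inductive form of $\Xx$---namely $\seq{\Xx_1}{\e_2}$, $\primcall{l}{\vs'\,\Xx_1\,\es}$, $\mcall{\Xx_1}{\es}$, and $\mcall{\v_c}{\vs'\,\Xx_1\,\es}$---I would unfold the definition of plugging to push $\rdx$ under the constructor, invoke the induction hypothesis on $\Xx_1$ to obtain $\Can{\plugx{\Xx_1}{\rdx}}{\rCC{\Xx_1}}$, and then apply the corresponding $\rC$-producing constructor of $\CanSym$. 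For instance, in the sequencing case one derives
\begin{mathpar}
\inferrule*[right=\WACAN{SeqC}]
  { \Can{\plugx{\Xx_1}{\rdx}}{\rCC{\Xx_1}} }
  { \Can{\seq{\plugx{\Xx_1}{\rdx}}{\e_2}}{\rCC{\seq{\Xx_1}{\e_2}}} }
\end{mathpar}
and recognizes that $\seq{\plugx{\Xx_1}{\rdx}}{\e_2} = \plugx{(\seq{\Xx_1}{\e_2})}{\rdx}$ and $\rCC{\seq{\Xx_1}{\e_2}} = \rCC{\plugx{(\seq{\Xx_1}{\e_2})}{\hole}}$, which is exactly what the goal demands for $\Xx = \seq{\Xx_1}{\e_2}$.

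I do not anticipate any real obstacle: the inductive cases are mechanically parallel, every $\Xx$-constructor in \figref{semantics} has a matching $\rC$-canonical rule in \figref{fig:wa-canoncial-forms}, and no branching on the shape of $\rdx$ is needed because $\CanSym$ only inspects the context. The only minor care needed is bookkeeping of how argument positions line up in the primop and call cases (distinguishing the evaluated prefix $\vs'$ from the suffix $\es$), but this is entirely syntactic and follows the exact pattern already displayed in the proof of \lemref{lem:Xcall-canonical}.
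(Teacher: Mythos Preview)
Your proposal is correct and follows essentially the same approach as the paper: induction on $\Xx$, with the base case handled by \WACAN{Redex} and the inductive cases by the matching $\rC$-producing constructors (the paper spells out only $\hole$ and $\seq{\Xx_1}{\e_2}$ before declaring the rest similar).
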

\begin{proof}
  By induction on \Xx.
  \begin{description}
    \item[\hole] In this case, $\plugx{\hole}{\rdx} = \rdx$,
      and by \WACAN{Redex}, \Can{\rdx}{\rCC{\hole}}.
    \item[\seq{\Xx_1}{\e_2}] By definition, 
      $\plugx{(\seq{\Xx_1}{\e_2})}{\rdx} = (\seq{\plugx{\Xx_1}{\rdx}}{\e_2}).$
      by the induction hypothesis for $\Xx_1$ and \CanSym constructor: 
      \begin{mathpar}
        \inferrule*[right=\WACAN{SeqC}]
        { \Can{\plugx{\Xx_1}{\rdx}}{\rCC{\Xx_1}} }
        { \Can{\seq{\plugx{\Xx_1}{\rdx}}{\e_2}}{\rCC{\seq{\Xx_1}{\e_2}}} }.
      \end{mathpar}
    \item[\ldots] Other cases are similar.
  \end{description}
\end{proof}

\begin{lemma}[\plugx{\Cx}{\rdx} is Canonical]\label{lem:Crdx-canonical}
  For all \Cx and \rdx,
  \[ \Can{\plugx{\Cx}{\rdx}}{\rCC{\Cx}}. \]
\end{lemma}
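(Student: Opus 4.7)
The plan is to proceed by induction on the structure of the world evaluation context \Cx, mirroring the grammar $\Cx \Alt \Xx \Alt \plugx{\Xx}{\evalg{\Cx'}} \Alt \plugx{\Xx}{\evalt{\MT}{\Cx'}}$. The three cases correspond exactly to the three production rules for \Cx, and each can be handled using the preceding lemmas about how contexts compose with canonical forms.

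For the base case $\Cx = \Xx$, the conclusion $\Can{\plugx{\Xx}{\rdx}}{\rCC{\Xx}}$ is precisely \lemref{lem:Xrdx-canonical}, so nothing further is needed. For the inductive case $\Cx = \plugx{\Xx}{\evalg{\Cx'}}$, I would first unfold the definition of plugging to obtain $\plugx{\plugx{\Xx}{\evalg{\Cx'}}}{\rdx} = \plugx{\Xx}{\evalg{\plugx{\Cx'}{\rdx}}}$. The induction hypothesis applied to $\Cx'$ yields $\Can{\plugx{\Cx'}{\rdx}}{\rCC{\Cx'}}$. Applying the \CanSym-constructor \WACAN{EvalGlobalC} then lifts this to $\Can{\evalg{\plugx{\Cx'}{\rdx}}}{\rCC{\evalg{\Cx'}}}$, and finally \lemref{lem:X-preserves-canonical-C} wraps the \Xx around both sides to give $\Can{\plugx{\Xx}{\evalg{\plugx{\Cx'}{\rdx}}}}{\rCC{\plugx{\Xx}{\evalg{\Cx'}}}}$, as required.

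The remaining case $\Cx = \plugx{\Xx}{\evalt{\MT}{\Cx'}}$ is entirely symmetric, using \WACAN{EvalLocalC} in place of \WACAN{EvalGlobalC} and the same wrapping step via \lemref{lem:X-preserves-canonical-C}. Since both inductive cases reduce to invoking previously established lemmas after a single unfolding of the plug operation, there is no real obstacle: the only thing to be careful about is matching the syntactic shape of the plug so that the right \CanSym rule applies and the final wrapping with \Xx is accounted for by \lemref{lem:X-preserves-canonical-C} rather than reproved inline. In short, this lemma is a straightforward structural induction whose work has already been done by \lemref{lem:Xrdx-canonical} and \lemref{lem:X-preserves-canonical-C}.
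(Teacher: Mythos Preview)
Your proposal is correct and follows essentially the same approach as the paper: induction on \Cx, with the base case handled by \lemref{lem:Xrdx-canonical}, and the two inductive cases handled by the induction hypothesis, the appropriate \CanSym constructor (\WACAN{EvalGlobalC} or \WACAN{EvalLocalC}), and \lemref{lem:X-preserves-canonical-C} to wrap the outer \Xx.
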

\begin{proof}
  By induction on \Cx.
  \begin{description}
    \item[\Xx] This case is covered by \lemref{lem:Xrdx-canonical}.
    \item[\plugx{\Xx}{\evalg{\Cx'}}] By definition, 
      $\plugx{(\plugx{\Xx}{\evalg{\Cx'}})}{\rdx} = \plugx{\Xx}{\evalg{\plugx{\Cx'}{\rdx}}}.$
      By the induction hypothesis for $\Cx'$ and \CanSym constructor: 
      \begin{mathpar}
        \inferrule*[right=\WACAN{EvalGlobalC}]
        { \Can{\plugx{\Cx'}{\rdx}}{\rCC{\Cx'}} }
        { \Can{\evalg{\plugx{\Cx'}{\rdx}}}{\rCC{\evalg{\Cx'}}} }.
      \end{mathpar}
      Then, by \lemref{lem:X-preserves-canonical-C},
      \[ \Can{\plugx{\Xx}{\evalg{\plugx{\Cx'}{\rdx}}}}{\rCC{\plugx{\Xx}{\evalg{\Cx'}}}}. \]
      \item[\plugx{\Xx}{\evalt{\MT}{\Cx'}}] Similarly to the previous case.
  \end{description}
\end{proof}

\begin{lemma}[Canonical Form is Unique]\label{lem:canonical-unique}
  For all expressions \e and representations $\rep_1, \rep_2$,
  \[ \Can{\e}{\rep_1} \land \Can{\e}{\rep_2} \quad \implies \quad \rep_1 = \rep_2. \]
\end{lemma}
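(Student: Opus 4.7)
The plan is to prove this by structural induction on $\e$, determining for each syntactic form which \CanSym rule applies and then appealing to the induction hypothesis to pin down the $\Xx$ or $\Cx$ context that appears inside the resulting $\rep$. The guiding principle is that the three forms $\rV$, $\rX$, $\rC$ are pairwise incompatible at the base: values admit only $\rV$ via \WACAN{Val}; bare calls $\mcalld$ admit $\rX$ with the hole context via \WACAN{Call}; every immediate redex base admits $\rC$ with the hole context via \WACAN{Redex}; and no expression simultaneously matches the left-hand sides of two of these.

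First I would dispatch the leaf and simple cases. For $\e = \v$ only \WACAN{Val} matches; for $\e \in \{\x, \md, \evalg{\v}, \evalt{\MT}{\v}\}$ the expression is a redex base, so only \WACAN{Redex} applies (the structural EvalGlobalC/EvalLocalC rules require a $\rC$-form body, which a value cannot have). For $\seq{\e_1}{\e_2}$ and $\primcalld{\es}$ I would split on whether the distinguished subexpression (namely $\e_1$, or the leftmost non-value argument) is itself a value: if yes, the whole expression is a redex base and only \WACAN{Redex} applies; if no, \WACAN{Redex} is inapplicable and the IH on the subexpression dictates both whether the paired structural rule (SeqX vs.\ SeqC, PrimopX vs.\ PrimopC) fires and what context it produces.

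The call case $\mcall{\e_c}{\es}$ needs a finer split on the callee and arguments: a non-value callee selects CalleeX or CalleeC by the IH on $\e_c$; callee $\m$ with all-value arguments selects \WACAN{Call}; callee $\vnm$ with all-value arguments selects \WACAN{Redex}; and a value callee with some non-value argument selects CallX or CallC by the IH on that argument. The $\evalg{\e'}$ and $\evalt{\MT}{\e'}$ cases with non-value $\e'$ require resolving an apparent clash between the redex-base readings $\evalg{\plugx{\Xx}{\mcall{\m}{\vs}}}$ and $\evalt{\MT}{\plugx{\Xx}{\mcall{\m}{\vs}}}$ on the one hand and the structural EvalGlobalC/EvalLocalC readings on the other. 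This collision is where I expect the main obstacle to lie, but it dissolves because the two readings demand respectively $\Can{\e'}{\rXd}$ (embedded in a redex base, so $\e' = \plugx{\Xx}{\mcall{\m}{\vs}}$) and $\Can{\e'}{\rCd}$, which the IH on $\e'$ cannot simultaneously satisfy. Once the outermost rule is pinned down in every case, $\rep_1 = \rep_2$ follows because each surviving rule reconstructs the enclosing context pointwise from a subexpression whose unique form the IH already supplies.
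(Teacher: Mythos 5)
Your proposal is correct and follows essentially the same route as the paper's proof: structural induction on \e, inverting the applicable \CanSym rules in each case, and using the induction hypothesis on the distinguished subexpression to show that only one rule can fire and that it determines the context uniquely. The paper's version is terser (it spells out only the value, variable, and sequencing cases and declares the rest ``similar''), whereas you additionally work through the genuinely delicate \evalg{}/\evalt{\MT}{} clash between the \WACAN{Redex} and \WACAN{EvalGlobalC}/\WACAN{EvalLocalC} readings, resolving it exactly as the definitions require.
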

\begin{proof}
  By induction on \e.
  \begin{description}
    \item[\v] There is only one \CanSym constructor for \v, \WACAN{Value}.
      Therefore, $\rep_1 = \rVd = \rep_2$.
    \item[\x] There is only one \CanSym constructor for \x, \WACAN{Redex}.
      Therefore, $\rep_1 = \rC(\hole, \x) = \rep_2$.
    \item[\seq{\e_1}{\e_2}] By inversion of \Can{\seq{\e_1}{\e_2}}{\rep_1},
      three cases are possible.
      \begin{enumerate}
        \item \e is a redex with $\e_1 = \v_{11}$ and
          \Can{\seq{\v_{11}}{\e_2}}{\rC(\hole, \seq{\v_{11}}{\e_2})},
          and thus \Can{\e_1}{\rV(\v_{11})}.
        \item \Can{\seq{\e_1}{\e_2}}{\rXX{\seq{\Xx_1}{\e_2}}} with
          \Can{\e_1}{\rXX{\Xx_1}}.
        \item \Can{\seq{\e_1}{\e_2}}{\rCC{\seq{\Cx_1}{\e_2}}} with
        \Can{\e_1}{\rCC{\Cx_1}}.
      \end{enumerate}
      By inversion of \Can{\seq{\e_1}{\e_2}}{\rep_2},
      similar three cases are possible, with different representations
      of $\e_1$. However, by the induction hypothesis, we know that the canonical
      representation of $\e_1$ is unique. And thus the representations
      of \seq{\e_1}{\e_2} will coincide too.
    \item[\ldots] Other cases are similar. 
  \end{description}
\end{proof}

Finally, we can prove the \textbf{Unique Form \lemref{lem:paper:expr-form-unique}},
reformulated here using \CanSym.
\begin{lemma}[Unique Canonical Representation of Expression]\label{lem:expr-form-canonical}
  Any expression \e can be uniquely represented in one of the following ways:
  \begin{enumerate}[label=(\alph*)]
    \item $\e = \v$ with \Can{\e}{\rVd}; or
    \item $\e = \plugx{\Xx}{\mcalld}$ with \Can{\e}{\rXd}; or
    \item $\e = \plugx{\Cx}{\rdx}$ with \Can{\e}{\rCd}.
  \end{enumerate}
\end{lemma}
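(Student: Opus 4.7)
The plan is to derive this lemma by combining the existence result of Lemma~\ref{lem:expr-form} with the canonical-form machinery developed in Lemmas~\ref{lem:can-rep-build}--\ref{lem:canonical-unique}. In broad strokes, Lemma~\ref{lem:expr-form} already gives us that every expression \e admits at least one of the three shape decompositions; what remains is (i) to exhibit a matching \CanSym derivation for whichever decomposition applies, and (ii) to rule out the possibility that \e simultaneously fits two different shapes.

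For existence, I would do a three-way case split mirroring the conclusion of Lemma~\ref{lem:expr-form}. In case (a), where $\e = \v$, the derivation \Can{\v}{\rVd} is immediate from \WACAN{Val}. In case (b), where $\e = \plugx{\Xx}{\mcalld}$, the required \Can{\e}{\rXd} is exactly the statement of Lemma~\ref{lem:Xcall-canonical}. In case (c), where $\e = \plugx{\Cx}{\rdx}$, the derivation \Can{\e}{\rCd} is exactly the statement of Lemma~\ref{lem:Crdx-canonical}. So each of the three existence branches is discharged by a single appeal.

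For uniqueness, suppose \e admitted two distinct representations from the list. Each such representation carries, by the existence argument just given, an associated \CanSym derivation $\Can{\e}{\rep_1}$ and $\Can{\e}{\rep_2}$. By Lemma~\ref{lem:canonical-unique}, we must have $\rep_1 = \rep_2$. Since the three representation constructors $\rV$, $\rX$, $\rC$ are syntactically distinct, this rules out the cross-case scenarios (e.g.\ \e being simultaneously of form (a) and form (b)). Within a single case, equality of the \CanSym representations forces equality of their components (the underlying \v, or the pair $(\Xx,\mcalld)$, or the pair $(\Cx,\rdx)$), and Lemma~\ref{lem:can-rep-build} confirms that the reconstructed expressions then coincide.

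I don't anticipate any substantive obstacle here, since all the real work has already been done in the preceding lemmas: Lemmas~\ref{lem:Xcall-canonical} and~\ref{lem:Crdx-canonical} supplied the existence of the \CanSym witness in the non-trivial shape cases, and Lemma~\ref{lem:canonical-unique} did the bulk of the uniqueness reasoning by induction on \e. The only mild subtlety is being careful that, in case (c), the \rdx produced by Lemma~\ref{lem:expr-form} is the same \rdx used to build the canonical derivation in Lemma~\ref{lem:Crdx-canonical}; this is immediate by construction, since Lemma~\ref{lem:Crdx-canonical} accepts an arbitrary $(\Cx,\rdx)$ pair. Thus the present lemma is essentially the packaging together of the previously proved ``existence of form'' and ``uniqueness of canonical form'' results into a single clean statement ready to be invoked in the progress and determinism proofs.
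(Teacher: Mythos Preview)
Your proposal is correct and mirrors the paper's own proof essentially step for step: existence via Lemma~\ref{lem:expr-form} together with \WACAN{Val}, Lemma~\ref{lem:Xcall-canonical}, and Lemma~\ref{lem:Crdx-canonical} for the three cases, and uniqueness via Lemma~\ref{lem:canonical-unique}. Your write-up is a bit more explicit about the cross-case versus within-case uniqueness argument and invokes Lemma~\ref{lem:can-rep-build}, but this is elaboration rather than a different approach.
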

\begin{proof}
  By \lemref{lem:expr-form}, we know that at least one of the
  conditions above holds. For every condition, \e has the corresponding
  canonical representation: by \WACAN{Value} for (a),
  by \lemref{lem:Xcall-canonical} for~(b),
  and by \lemref{lem:Crdx-canonical} for~(c).
  But by \lemref{lem:canonical-unique}, the canonical representation is unique.
  Thus, for example, if $\e = \plugx{\Cx_1}{\rdx_1}$ and
  $\e = \plugx{\Cx_2}{\rdx_2}$, then $\Cx_1 = \Cx_2$ and $\rdx_1 = \rdx_2$.
\end{proof}

This fact will be used implicitly in various proofs that follow.

\subsection{Determinism of \juliette Semantics}\label{app:proofs:determinism}


\begin{lemma}[Only Redex Steps]\label{lem:only-redex-steps}
  For all expressions \e and method tables \MT, the following hold:
  \begin{enumerate}
    \item $\evalwa{\MT}{\e}{\MT'}{\e'} \quad \implies \quad \e = \plugCx{\Cx}{\rdx},$
    \item $\evalerrwa{\MT}{\e}{\err} \quad \implies \quad \e = \plugCx{\Cx}{\rdx},$
    \item $\e = \plugCx{\Cx}{\rdx} \quad\implies\quad
           \evalwa{\MT}{\e}{\MT'}{\e'} \text{ or }\ \evalerrwa{\MT}{\e}{\err}.$
  \end{enumerate}
\end{lemma}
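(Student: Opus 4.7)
The plan is to prove each of the three implications separately, exploiting the fact that every evaluation rule in \figref{semantics} has already been designed so that its conclusion syntactically exhibits the decomposition \plugCx{\Cx}{\rdx}. Parts (1) and (2) are case analyses on the normal- and error-evaluation rules, while part (3) is essentially a restatement of the previously established \lemref{lem:redex-steps} (``Redex Steps'').

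For part (1), I would inspect each normal-evaluation rule and read off the corresponding \rdx. For instance, \WAE{Seq} gives $\rdx = \seq{\v}{\e}$; \WAE{Primop} gives $\rdx = \primcalld{\vs}$; \WAE{MD} gives $\rdx = \md$; \WAE{ValGlobal} and \WAE{ValLocal} give $\rdx = \evalg{\v}$ and $\rdx = \evalt{\MT'}{\v}$; and \WAE{CallGlobal} and \WAE{CallLocal} give the redex bases $\evalg{\plugx{\Xx}{\mcall{\m}{\vs}}}$ and $\evalt{\MT'}{\plugx{\Xx}{\mcall{\m}{\vs}}}$. In every case the matched syntactic form lies in the grammar of \rdx from \figref{fig:wa-rdx-main}, so the witnesses \Cx and \rdx can be extracted directly from the rule's conclusion. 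Part (2) proceeds identically on the error-evaluation rules: \WAE{VarErr} yields $\rdx = \x$, \WAE{PrimopErr} yields $\rdx = \primcalld{\vs}$, \WAE{CalleeErr} yields $\rdx = \mcall{\vnm}{\vs}$ (using the side condition $\v_c \neq \m$ to justify the use of the \vnm form), and \WAE{CallErr} yields $\rdx = \evalt{\MT'}{\plugx{\Xx}{\mcall{\m}{\vs}}}$.

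Part (3) is an immediate consequence of \lemref{lem:redex-steps}: given any expression of the form \plugCx{\Cx}{\rdx} and any method table, that lemma produces either a normal step or an error step, which is precisely the required disjunction.

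No serious obstacle is expected; the entire argument amounts to unpacking the syntactic shapes of the evaluation rules. The only care required is verifying, for each rule, that the left-hand side really does fit the \rdx grammar of \figref{fig:wa-rdx-main}, in particular that \WAE{CalleeErr} contributes a \vnm base thanks to its premise $\v_c \neq \m$, and that the ``fixed-table'' call forms match exactly the shape $\evalt{\MT'}{\plugx{\Xx}{\mcall{\m}{\vs}}}$ listed in the redex grammar.
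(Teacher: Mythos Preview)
Your proposal is correct and matches the paper's own argument: parts (1) and (2) are handled by inspecting each normal- and error-evaluation rule to exhibit the decomposition \plugCx{\Cx}{\rdx}, and part (3) is exactly \lemref{lem:redex-steps} (whose proof is the case analysis on \rdx that the paper alludes to). Your write-up is in fact more detailed than the paper's, which compresses all of this into two sentences.
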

\begin{proof}
  The first two are established by analyzing
  normal- and error-evaluation rules.
  The last statement can be established by case analysis on \rdx.
\end{proof}

Finally, we can prove the \textbf{Determinism \thmref{thm:paper:eval-deterministic}}.
\begin{theorem}[Determinism]\label{thm:eval-deterministic}
\juliette semantics is deterministic. That is,
for all expressions \e and method tables \MT,
if \e can make a normal- or error-evaluation step, the step is unique.
\end{theorem}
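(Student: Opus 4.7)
The plan is to reduce determinism to a uniqueness argument on the decomposition of a stepping expression into context-plus-redex-base, followed by a case analysis on the redex base. I would begin by appealing to the Only Redex Steps lemma: any $\e$ that admits a normal or error step has the form $\e = \plugCx{\Cx}{\rdx}$. By the Unique Canonical Representation lemma (\lemref{lem:expr-form-canonical}), this decomposition is unique, so both $\Cx$ and $\rdx$ are determined by $\e$. Hence, to establish determinism, it suffices to show that for any given $\rdx$, at most one rule of \figref{semantics} applies, and that the rule produces a unique result.

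I would then do a case analysis on the grammar of $\rdx$ from \figref{fig:wa-calculus-redex}. For most cases, inspection of the rule shapes shows that exactly one rule matches: $\x$ matches only \WAE{VarErr}; $\seq{\v}{\e}$ only \WAE{Seq}; $\md$ only \WAE{MD}; $\mcall{\vnm}{\vs}$ only \WAE{CalleeErr}; $\evalg{\v}$ only \WAE{ValGlobal}; $\evalt{\MT}{\v}$ only \WAE{ValLocal}; and $\evalg{\plugx{\Xx}{\mcall{\m}{\vs}}}$ only \WAE{CallGlobal}. In each such case the resulting state is determined by the rule itself (using that $\rdx$, and hence all of its parts such as $\Xx$, $\m$, $\vs$, or $\MT$, is uniquely determined).

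The two subtle cases are $\primcalld{\vs}$ and $\evalt{\MT'}{\plugx{\Xx}{\mcall{\m}{\vs}}}$, each of which matches one normal-evaluation rule and one error-evaluation rule: \WAE{Primop}/\WAE{PrimopErr} and \WAE{CallLocal}/\WAE{CallErr}, respectively. Here I would invoke the fact that the meta-functions $\Primop$ and $\getmd$ are single-valued, returning either a value/method definition or $\err$ but not both; consequently the premises of the two candidate rules are mutually exclusive and exactly one rule fires. The result of the firing rule is determined, since $\Primop(l,\vs)$ and $\getmd(\MT',\m,\gs)$ are themselves deterministic.

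The main obstacle, such as it is, is simply ensuring exhaustiveness: I need to verify that the grammar of $\rdx$ in \figref{fig:wa-calculus-redex} exactly mirrors the head positions of the evaluation rules, so that no $\rdx$ case is overlooked and no rule is silently applicable to two distinct redex shapes. Once this bookkeeping is done, combining the uniqueness of the $\plugCx{\Cx}{\rdx}$ decomposition with the uniqueness of the applicable rule yields a unique successor state $\stwa{\MT'}{\e'}$ (or $\err$), completing the proof with no induction required.
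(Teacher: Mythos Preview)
Your proposal is correct and follows essentially the same approach as the paper: invoke \lemref{lem:only-redex-steps} to obtain the $\plugCx{\Cx}{\rdx}$ form, use \lemref{lem:expr-form-canonical} for uniqueness of the decomposition, and then case-analyze $\rdx$, observing that only $\primcalld{\vs}$ and $\evalt{\MT'}{\plugx{\Xx}{\mcall{\m}{\vs}}}$ match two rules, whose premises are mutually exclusive. Your write-up is in fact more detailed than the paper's own sketch, explicitly enumerating the single-rule cases and noting that $\Primop$ and $\getmd$ are single-valued.
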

\begin{proof}\label{proof:eval-deterministic}
  By \lemref{lem:only-redex-steps}, we know that if \e can make a step,
  then exists some \Cx and \rdx such that $\e = \plugCx{\Cx}{\rdx}$.
  By \lemref{lem:expr-form-canonical}, we know that such a representation
  is unique.
  Finally, by case analysis on \rdx, we can see that for all redex bases
  except for \primcalld{\vs} and \evalt{\MT}{\plugx{\Xx}{\mcall{\m}{\vs}}},
  there is exactly one (normal- or error-evaluation) rule applicable.
  For \primcalld{\vs} and \evalt{\MT}{\plugx{\Xx}{\mcall{\m}{\vs}}},
  there are two rules for each, but their premises are incompatible.
  Thus, for any \plugCx{\Cx}{\rdx}, exactly one rule is applicable.
\end{proof}

\subsection{Preservation of concrete typing}

The following lemmas are needed for the correctness of optimization.

\begin{lemma}[Function Call is Irrelevant for Concrete Typing]\label{lem:fun-call}
  For all \Xx, \Gm, \mcall{\m}{\es}:
  \[ \typedwad{\plugx{\Xx}{\mcall{\m}{\es}}}{\g}
     \quad \implies \quad 
     \forall \e'.\ \typedwad{\plugx{\Xx}{\e'}}{\g}. \]
\end{lemma}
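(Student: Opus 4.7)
\smallskip
\noindent
\textbf{Proof plan.}
The plan is to proceed by structural induction on the simple evaluation context $\Xx$. The central observation that makes the lemma work is that the concrete-typing relation in \figref{fig:wa-typing} has \emph{no rule for function calls}: an expression whose outermost form is $\mcall{\e_c}{\es}$ is simply not typeable. Consequently, any $\Xx$-position in which the surrounding typing rule demands a typing derivation for the hole's subexpression would make the premise $\typedwad{\plugx{\Xx}{\mcall{\m}{\es}}}{\g}$ vacuous; in those cases there is nothing to prove. The only positions that are not vacuous are those where the surrounding rule does not demand a typing for its hole-carrying subexpression, namely the left component of a sequence, and indirectly the primop-argument slot once nested inside a sequence.

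Concretely, I would do the following case analysis:

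\begin{itemize}
\item $\Xx = \hole$. Then $\plugx{\Xx}{\mcall{\m}{\es}} = \mcall{\m}{\es}$, which has no derivation in \figref{fig:wa-typing}, so the premise cannot hold and the implication is vacuous.

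\item $\Xx = \seq{\Xx_1}{\e_2}$. Invert the assumption through \WAT{Seq}: the only premise is $\typedwad{\e_2}{\g}$ (the left-hand side of a sequence is never typed). Therefore for any $\e'$, applying \WAT{Seq} to $\seq{\plugx{\Xx_1}{\e'}}{\e_2}$ immediately yields $\typedwad{\plugx{\Xx}{\e'}}{\g}$. No induction hypothesis is required here.

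\item $\Xx = \primcall{l}{\vs\ \Xx_1\ \es'}$. Invert through \WAT{Primop}: every argument, in particular $\plugx{\Xx_1}{\mcall{\m}{\es}}$, must be typed at some $\g_i$. Apply the induction hypothesis to $\Xx_1$ at type $\g_i$ to obtain $\typedwad{\plugx{\Xx_1}{\e'}}{\g_i}$ for any $\e'$, then reapply \WAT{Primop} to reconstruct $\typedwad{\plugx{\Xx}{\e'}}{\g}$ with the same return type delivered by $\PrimopRT$.

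\item $\Xx = \mcall{\Xx_1}{\es'}$ or $\Xx = \mcall{\v}{\vs\ \Xx_1\ \es'}$. In both subcases the plug $\plugx{\Xx}{\mcall{\m}{\es}}$ is itself a function-call expression, for which no typing rule exists, so the premise is unsatisfiable and the implication holds vacuously.
\end{itemize}

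I do not expect any real obstacle: once one notices the deliberate absence of a typing rule for function-call forms, the proof is a short, rule-directed induction. The only bookkeeping worth double-checking is that the return type $\g$ obtained for the primop case does not depend on the replaced subterm, which is immediate because $\PrimopRT(l,\gs)$ depends only on the tuple of argument types, and the induction hypothesis preserves exactly that tuple.
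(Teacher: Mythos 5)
Your proof is correct and follows essentially the same route as the paper's: a structural induction on \Xx in which the hole and the two call-position contexts are vacuous because no concrete-typing rule exists for function-call expressions, the sequence case goes through directly by inverting and reapplying \WAT{Seq} without the induction hypothesis, and the primop-argument case uses the induction hypothesis before reapplying \WAT{Primop}.
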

\begin{proof}
  By induction on \Xx.
  \begin{description}
    \item[\hole] This case is impossible case because there is no
      concrete typing for \mcall{\m}{\es}.
    \item[\seq{\Xx_1}{\e_2}] Here,
      $\plugx{\Xx}{\mcall{\m}{\es}}
       = \seq{(\plugx{\Xx_1}{\mcall{\m}{\es}})}{\e_2}.$ By assumption,
      \typedwad{\seq{(\plugx{\Xx_1}{\mcall{\m}{\es}})}{\e_2}}{\g},
      and thus by inversion of \WAT{Seq}, \typedwad{\e_2}{\g}. Therefore:
      \begin{mathpar}
        \inferrule*[right=\WAT{Seq}]
        { \typedwad{\e_2}{\g} }
        { \typedwad{\seq{(\plugx{\Xx_1}{\e'})}{\e_2}}{\g} }.
      \end{mathpar}
    \item[$\primopd(\vs\ \Xx_i\ \es)$] Here,
     $\plugx{\Xx}{\mcall{\m}{\es}}
      = \primopd(\vs\ \plugx{\Xx_i}{\mcall{\m}{\es}}\ \es)$. By assumption,
      \begin{mathpar}
        \inferrule*[right=\WAT{Primop}]
        { \typedwad{\v_j}{\g'_j} \\
          \typedwad{\plugx{\Xx_i}{\mcall{\m}{\es}}}{\g'_i} \\ 
          \typedwad{\e_k}{\g'_k} \\ \PrimopRT(l, \gs')=\g }
        { \typedwad{\primopd(\vs\ \plugx{\Xx_i}{\mcall{\m}{\es}}\ \es)}{\g} }.
      \end{mathpar}
      By induction hypothesis, \typedwad{\plugx{\Xx_i}{\e'}}{\g'_i}.
      Therefore, \typedwad{\primopd(\vs\ \plugx{\Xx_i}{\e'}\ \es)}{\g}
      by \WAT{Primop}.
    \item[$\Xx'(\es')$] This case is impossible case because there is no
      concrete typing for (\plugx{\Xx'}{\mcall{\m}{\es}})(\es').
    \item[$\v(\vs'\ \Xx'\ \es')$] This case is impossible case because there is no
      concrete typing for $\v(\vs'\ \plugx{\Xx'}{\mcall{\m}{\es}}\ \es')$.
  \end{description}
\end{proof}

\begin{theorem}[Normal Evaluation Preseves Concrete Typing]
  For all \e, \g, \MT, $\MT'$, $\e'$:
  \[ 
    \typedwa{}{}{\e}{\g} \quad \land \quad \evalwa{\MT}{\e}{\MT'}{\e'}
    \qquad \implies \qquad
    \typedwa{}{}{\e'}{\g}.
  \]
\end{theorem}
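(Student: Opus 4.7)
The plan is to proceed by case analysis on the normal-evaluation step. By \lemref{lem:paper:expr-form-unique}, since $\e$ takes a step it has the unique form $\plugCx{\Cx}{\rdx}$, and the reduction is determined by $\rdx$. For each of the redex bases admitting a normal step, I check that the reduct carries the same concrete type as the redex, and then lift this through $\Cx$.

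To support the lifting, I first state an auxiliary context-compatibility lemma: if $\typedwad{\plugCx{\Cx}{\e_1}}{\g}$ and $\e_2$ either has the same concrete type as $\e_1$, or $\e_1$ and $\e_2$ both appear in a position where the enclosing typing rule disregards the subexpression (e.g.\ the first component of a sequence, or any position inside a $\mcall{\cdot}{\cdots}$, since no \WAT{} rule types a bare function call), then $\typedwad{\plugCx{\Cx}{\e_2}}{\g}$. The proof is a routine induction on $\Cx$, walking through the layered structure of $\Xx$, $\evalg{\cdot}$, and $\evalt{\MT}{\cdot}$, and mirrors the reasoning of \lemref{lem:fun-call}.

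With this lemma in hand, the ``value-returning'' redex cases are handled by inversion of the typing rule governing the head constructor. For $\seq{\v}{\e_s}\to\e_s$, inversion of \WAT{Seq} directly gives $\e_s$ the correct type. For $\primcalld{\vs}\to\v'$, inversion of \WAT{Primop} yields $\PrimopRT(l,\gs)=\g$, and the stated agreement between $\Primop$ and $\PrimopRT$ forces $\typeof(\v')=\g$, so \WAT{Val} retypes the reduct. For $\md\to\m$, \WAT{MD} assigns $\mty$, which matches $\typeof(\m)$. The unwrapping redexes $\evalg{\v}\to\v$ and $\evalt{\MT}{\v}\to\v$ are immediate from inversion of \WAT{EvalGlobal} and \WAT{EvalLocal} respectively.

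The two function-call redexes $\evalg{\plugx{\Xx}{\mcall{\m}{\vs}}}$ and $\evalt{\MT}{\plugx{\Xx}{\mcall{\m}{\vs}}}$ are the main obstacle, because the typing relation has no rule for a bare function call: for such a redex to type at all, the call must sit in a don't-care position of $\Xx$. This is exactly \lemref{lem:fun-call}. Applying that lemma with the reduct ($\evalt{\MTg}{\mcall{\m}{\vs}}$ in the \WAE{CallGlobal} case, and $\e_b\subst{\xs}{\vs}$ in the \WAE{CallLocal} case) substituted for the call yields the typing of the inner expression, and the outer $\evalg{\cdot}$ or $\evalt{\MT}{\cdot}$ wrapper is reconstructed by \WAT{EvalGlobal} or \WAT{EvalLocal}. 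The context-compatibility lemma then lifts this through $\Cx$ to conclude $\typedwa{}{}{\e'}{\g}$. The subtle point, and the place where most care is needed, is formulating the compatibility lemma so that it uniformly accommodates both replacements that preserve the concrete type and replacements at don't-care positions; a clean way is to phrase it as a $\Cx$-indexed version of \lemref{lem:fun-call}, namely that an occurrence of a bare function call at the redex position may be replaced by an arbitrary expression without disturbing the outer type.
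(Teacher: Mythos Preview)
Your proposal is correct and takes a genuinely different route from the paper. The paper proceeds by induction on the derivation of $\typedwa{}{}{\e}{\g}$: in each case it inspects how the typed expression can step, and when a proper sub-expression steps (the first component of a sequence, an argument of a primop, or the body of an eval wrapper) the induction hypothesis for the sub-derivation gives preservation directly; \lemref{lem:fun-call} is invoked only inside the \WAT{EvalGlobal} and \WAT{EvalLocal} cases. You instead case-split on the redex and factor all of the context reasoning into a single replacement lemma proved by induction on $\Cx$. Both organizations are sound; yours isolates the context lifting into a reusable lemma and keeps the per-redex checks flat, while the paper's lets the typing induction absorb the context walk and avoids stating an extra lemma.

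One tightening: the disjunctive phrasing of your compatibility lemma mixes a condition on the expressions (``same concrete type'') with a condition on $\Cx$ (``don't-care position''), which is awkward when $\Cx$ has several layers and the hole sits, say, in a primop argument nested under a sequence. The clean version that makes every redex case go through uniformly is: if $\typedwa{}{}{\plugCx{\Cx}{\e_1}}{\g}$ and, for every $\g'$, $\typedwa{}{}{\e_1}{\g'}$ implies $\typedwa{}{}{\e_2}{\g'}$, then $\typedwa{}{}{\plugCx{\Cx}{\e_2}}{\g}$. An untypable $\e_1$ makes the hypothesis vacuous, subsuming your don't-care case, and the proof is exactly the routine induction on $\Cx$ (then $\Xx$) that you sketch. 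Your final suggestion of a ``$\Cx$-indexed \lemref{lem:fun-call}'' covers only the bare-function-call situation and would not by itself handle the value-returning redexes, where the type-preserving half of the hypothesis is still needed.
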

\begin{proof}
  By induction on the derivation of $\typedwa{}{}{\e}{\g}$.
  \begin{description}
    \item[\WAT{Var}] This case is impossible.
    \item[\WAT{Val}] By inversion, $\e = \v$ and \typedwa{}{}{\v}{\sigma},
      but \v cannot make a step, so the case is vacuously true.
    \item[\WAT{MD}] By inversion, \e is a method definition \mdefd with
      \typedwa{}{}{\mdefd}{\mty}.
      By \WAE{MD}, \evalwa{\MT}{\mdefd}{\MText{\MT}{\mdefd}}{\m},
      and \typedwa{}{}{\m}{\mty} because $\typeof(\m) = \mty$.
    \item[\WAT{Seq}] By inversion, $\e = \seq{\e_1}{\e_2}$,
      \typedwa{}{}{\seq{\e_1}{\e_2}}{\g}, and \typedwa{}{}{\e_2}{\g}.
      There are two possibilities for \seq{\e_1}{\e_2} to make a step.
      \begin{enumerate}
        \item $\e_1 = \v_{11}$ and $\evalwa{\MT}{\seq{\v_{11}}{\e_2}}{\MT}{\e_2}$.
          By assumption, \typedwa{}{}{\e_2}{\g}.
        \item $\e_1 = \plugCx{\Cx_1}{\rdx_1}$ and
          $\evalwa{\MT}{\seq{\plugCx{\Cx_1}{\rdx_1}}{\e_2}}
                  {\MT'}{\seq{\plugCx{\Cx_1}{\e_1'}}{\e_2}}$
          with $\e' = \seq{\plugCx{\Cx_1}{\e_1'}}{\e_2}$. Then:
          \begin{mathpar}
            \inferrule*[right=\WAT{Seq}]
            { \typedwa{}{}{\e_2}{\g} }
            { \typedwa{}{}{\seq{\plugCx{\Cx_1}{\e_1'}}{\e_2}}{\g} }.
          \end{mathpar}
      \end{enumerate}
    \item[\WAT{Primop}] By inversion, $\e = \primopd(\es)$,
      \typedwa{}{}{\e_i}{\g'_i}, and $\PrimopRT(l, \gs')=\g$.
      There are two ways how \primopd(\es) can make a step.
      \begin{enumerate}
        \item $\primopd(\es) = \primopd(\vs)$ and
          $\evalwa{\MT}{\primopd(\vs)}{\MT}{\v'}$, where $\Primop(l, \vs)=\v'$.
          By the properties of \Primop and \PrimopRT, we know that
          $\typeof(\v') = \g$, and thus by \WAT{Val}, \typedwa{}{}{\v'}{\g}.
        \item $\primopd(\es) = \primopd(\vs\ \plugCx{\Cx_i}{\rdx_i}\ \es)$ and
          \[
            \evalwa{\MT}{\primopd(\vs\ \plugCx{\Cx_i}{\rdx_i}\ \es)}
                   {\MT'}{\primopd(\vs\ \plugCx{\Cx_i}{\e'_i}\ \es)}
            \quad \iff \quad
            \evalwa{\MT}{\plugCx{\Cx_i}{\rdx_i}}{\MT'}{\plugCx{\Cx_i}{\e'_i}}.
          \]
          By assumption, \typedwa{}{}{\plugCx{\Cx_i}{\rdx_i}}{\g'_i}
          and \plugCx{\Cx_i}{\rdx_i} makes a step.
          Therefore, by the induction hypothesis, we know
          \typedwa{}{}{\plugCx{\Cx_i}{\e'_i}}{\g'_i}. Because other arguments
          of \primopd are the same, by \WAT{Primop},
          \typedwa{}{}{\primopd(\vs\ \plugCx{\Cx_i}{\rdx_i}\ \es)}{\g}.
      \end{enumerate}
    \item[\WAT{EvalGlobal}] By inversion, $\e = \evalg{\e_e}$ and
      \typedwa{}{}{\e_e}{\g}. There are three possibilities for \evalg{\e_e}
      to make a step.
      \begin{enumerate}
        \item $\evalg{\e_e} = \evalg{\v}$ and \evalwa{\MT}{\evalg{\v}}{\MT}{\v}.
          But we already know that \typedwa{}{}{\v}{\g}.
        \item $\evalg{\e_e} = \evalg{\plugx{\Xx}{\mcalld}}$ and
          \evalwa{\MT}{\evalg{\plugx{\Xx}{\mcalld}}}
                 {\MT}{\evalg{\plugx{\Xx}{\evalt{\MT}{\mcalld}}}}.
          By \lemref{lem:fun-call}, we know that if
          \typedwa{}{}{\plugx{\Xx}{\mcalld}}{\g}, then
          \typedwa{}{}{\plugx{\Xx}{\evalt{\MT}{\mcalld}}}{\g}.
          Thus, by \WAT{EvalGlobal}:
          \[ \typedwa{}{}{\evalg{\plugx{\Xx}{\evalt{\MT}{\mcalld}}}}{\g}. \]
        \item $\evalg{\e_e} = \evalg{\plugCx{\Cx}{\rdx_e}}$ and
          \[
            \evalwa{\MT}{\evalg{\plugCx{\Cx}{\rdx_e}}}
                   {\MT'}{\evalg{\plugCx{\Cx}{\e'_e}}}
            \quad \iff \quad
            \evalwa{\MT}{\plugCx{\Cx}{\rdx_e}}{\MT'}{\plugCx{\Cx}{\e'_e}}.
          \]
          Because \typedwa{}{}{\plugCx{\Cx}{\rdx_e}}{\g},
          by the induction hypothesis, we have
          \typedwa{}{}{\plugCx{\Cx}{\e'_e}}{\g}. Therefore:
          \begin{mathpar}
            \inferrule*[right=\WAT{EvalGlobal}]
            { \typedwa{}{}{\plugCx{\Cx}{\e'_e}}{\g} }
            { \typedwa{}{}{\evalg{\plugCx{\Cx}{\e'_e}}}{\g} }.
          \end{mathpar}
      \end{enumerate}
    \item[\WAT{EvalLocal}] Similarly to previous case. The only difference is
      in evaluation step (2), which is
      \evalwa{\MT}{\evalt{\MT_l}{\plugx{\Xx}{\mcalld}}}
             {\MT}{\evalt{\MT_l}{\plugx{\Xx}{\e_b\subst{\xs}{\vs}}}},
      but \lemref{lem:fun-call} applies in the same way.
  \end{description}
\end{proof}

\subsection{Value Substitution and Optimization}\label{app:proofs:value-subst}

Value substitution is defined in \figref{fig:app:wa-value-subst},
which simply reproduces \figref{fig:wa-value-subst} for convenience.

\begin{lemma}[Value Substitution Preserves Concrete Typing]\label{lem:subst-preserves-typing}
  For all $\e, \Gm, \g, \gm$:
  \[ \typedwad{\e}{\g}\ \ \land\ \ \substokd \quad\implies\quad
     \typedwa{}{}{\gm(\e)}{\g}. \]
\end{lemma}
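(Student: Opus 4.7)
The plan is to proceed by structural induction on the derivation of \typedwad{\e}{\g}. Since the typing judgment in \figref{fig:wa-typing} has no rule for function calls, the case analysis covers only \WAT{Var}, \WAT{Val}, \WAT{MD}, \WAT{Seq}, \WAT{Primop}, \WAT{EvalGlobal}, and \WAT{EvalLocal}. The inductive cases are routine: each applies the induction hypothesis to the premises and reassembles the derivation using the same rule on $\gm(\e)$, since value substitution distributes over sequencing, primop calls, and both evaluation constructs.

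The key base case is \WAT{Var}, where $\e = \x$ and $\Gm(\x) = \g$. By inversion of \substokd (see \figref{fig:wa-value-subst}), $\x \in \dom(\gm)$ and $\typeof(\gm(\x)) = \g$; hence $\gm(\x)$ is a value of the correct type, and \WAT{Val} yields \typedwa{}{}{\gm(\x)}{\g}. For \WAT{Val}, since values contain no free variables, $\gm(\v) = \v$ and the typing carries over unchanged. For \WAT{MD}, the rule assigns \mty to any method definition regardless of its body, so $\gm(\mdefd)$ still types via \WAT{MD}, provided substitution is capture-avoiding with respect to the bound parameters \xs.

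For the inductive cases, applying the induction hypothesis to each sub-derivation suffices. In \WAT{Seq}, the premise \typedwad{\e_2}{\g} yields \typedwa{}{}{\gm(\e_2)}{\g} by IH, and since $\gm(\seq{\e_1}{\e_2}) = \seq{\gm(\e_1)}{\gm(\e_2)}$, rule \WAT{Seq} reassembles the conclusion. The \WAT{Primop} case applies IH to each argument and reassembles via the same \PrimopRT computation, while \WAT{EvalGlobal} and \WAT{EvalLocal} lift the IH through their evaluation contexts without disturbing the associated method tables (which are themselves closed with respect to \gm by the side condition that \gm only substitutes for free variables in \e).

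The main subtlety---minor but worth flagging---is ensuring that \gm interacts correctly with the bound variables in \WAT{MD}; this relies on the standard hygiene convention that substitution avoids capture of the parameter names \xs. Once that is granted, every case reduces to either a direct use of the \substokd premise (in the variable case) or a direct appeal to the induction hypothesis (in the compound cases), and the lemma follows.
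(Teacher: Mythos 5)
Your proposal is correct and follows essentially the same route as the paper's proof: induction on the typing derivation, with the \WAT{Var} case resolved by inverting \substokd to obtain $\typeof(\gm(\x)) = \g$ and concluding via \WAT{Val}, and the remaining cases reassembled by the induction hypothesis since substitution distributes over each construct. The capture-avoidance point you flag for \WAT{MD} is handled in the paper by restricting the substitution to $\gm \setminus \xs$ on the method body, which is the same observation.
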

\begin{proof}
  By induction on the derivation of \typedwad{\e}{\g}.
  \begin{description}
    \item[\WAT{Var}] By inversion of \typedwad{\x}{\g},
      we know that $\Gm(\x) = \g$. Therefore, from \substokd, we have
      \begin{mathpar}
        \inferrule*[right=\WAT{Val}]
        { \typeof(\gm(\x)) = \g }
        { \typedwa{}{}{\gm(\x)}{\g} }.
      \end{mathpar}
    \item[\WAT{Val}] Since $\gm(\v) = \v$, \typedwa{}{}{\gm(\v)}{\g}
      by \WAT{Val}.
    \item[\WAT{MD}] Since 
      $\gm(\mdefd) = \mdef{\m}{\obar{\x::\t}}{\gm'(\e)}$ where
      $\gm' = \gm \setminus \xs$, \WAT{MD} still holds.
    \item[\WAT{Seq}] By inversion of \typedwad{(\seq{\e_1}{\e_2})}{\g},
      we know that \typedwad{\e_2}{\g}. By the induction hypothesis on typing of $\e_2$,
      \typedwa{}{}{\gm(\e_2)}{\g}. By definition,
      $\gm(\seq{\e_1}{\e_2}) = \seq{\gm(\e_1)}{\gm(\e_2)}$. Therefore:
      \begin{mathpar}
        \inferrule*[right=\WAT{Seq}]
        { \typedwa{}{}{\gm(\e_2)}{\g} }
        { \typedwa{}{}{\seq{\gm(\e_1)}{\gm(\e_2)}}{\g} }.
      \end{mathpar}
    \item[\WAT{Primop}] Similarly to the previous case.
    \item[\WAT{EvalGlobal}] By inversion of \typedwad{\evalg{\e'}}{\g},
      we know \typedwad{\e'}{\g}. Then, by the induction hypothesis,
      \typedwa{}{}{\gm(\e')}{\g}. Since $\gm(\evalg{\e'}) = \evalg{\gm(\e')},$
      we have
      \begin{mathpar}
        \inferrule*[right=\WAT{EvalGlobal}]
        { \typedwa{}{}{\gm(\e')}{\g} }
        { \typedwa{}{}{\evalg{\gm(\e')}}{\g} }.
      \end{mathpar}
    \item[\WAT{EvalLocal}] Similarly to the previous case.
  \end{description}
\end{proof}

\begin{figure}
  \begin{mathpar}
  \begin{array}{rcll}
    \gm & ::= & \obar{\x \mapsto \v} & \text{\emph{Value substitution}}\\
      & & \text{where } \forall i,j. \x_i \neq \x_j
  \end{array}
  \\
  \inferrule*[right=\gm-Ok]
    { \dom(\Gm) = \dom(\gm) \\
      \forall \x \in \dom(\gm).\ 
        [\Gm(\x) = \g \iff \typeof(\gm(\x)) = \g] }
    { \substokd }
  \end{mathpar}
  \caption{Value substitution}\label{fig:app:wa-value-subst}
\end{figure}

Now we can prove the \textbf{Value Substitution Preserves
Optimization~\lemref{lem:paper:subst-preserves-optim}}.
\begin{lemma}[Value Substitution Preserves Optimization]
  \label{lem:subst-preserves-optim}
  For all $\SpecEnv, \Gm, \e, \MT, \e', \MT', \gm$, such that
  $\forall \v \in \gm.\ \tableoptexprd{\v}$,
  the following holds:
  \[
    \left(\exproptd{\e}{\e'}\ \land\ \substokd \right)
    \quad\implies\quad
    \expropt{\SpecEnv}{}{\MT}{\gm(\e)}{\MT'}{\gm(\e')}.
  \]
\end{lemma}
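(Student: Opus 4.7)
The plan is to proceed by induction on the derivation of $\exproptd{\e}{\e'}$. Most cases are direct: the structural rules (\WAOE{Seq}, \WAOE{Primop}, \WAOE{Call}) follow immediately because $\gm$ distributes over the syntactic forms, and the induction hypothesis applied to each subexpression lets the corresponding optimization rule be reapplied under the empty typing environment. The reflexive rules \WAOE{Val}, \WAOE{ValFun}, \WAOE{Global}, \WAOE{Local}, and \WAOE{MD} are also immediate: $\gm$ either leaves the expression unchanged, or changes only parts that are not further constrained by the rule, and the side conditions involving \tableoptexprd{\cdot} do not depend on $\Gm$.

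The first substantive case is \WAOE{Var}. Since $\substokd$ gives $\x \in \dom(\gm)$, we have $\gm(\x) = \v$ for some value $\v$, and I would invoke the external hypothesis $\forall \v \in \gm.\ \tableoptexprd{\v}$ to justify re-applying \WAOE{Val} or \WAOE{ValFun} (depending on whether $\v$ is a generic function value) and obtain $\expropte{\v}{\v}$. The case \WAOE{Direct} then reduces to the induction hypothesis on the arguments combined with Lemma~\ref{lem:subst-preserves-typing} (value substitution preserves concrete typing), which transports each $\typedwad{\e'_i}{\g_i}$ to $\typedwa{}{}{\gm(\e'_i)}{\g_i}$; the entry $(\mdeqd) \in \SpecEnv$ is unchanged, so \WAOE{Direct} reapplies.

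The main obstacle is the \WAOE{Inline} case, whose premise $\expropt{\SpecEnv}{\Gm}{\MT}{\e_b\subst{\xs}{\nvs}}{\MT'}{\e'}$ already contains a substitution into the method body, which must commute correctly with the outer $\gm$. The plan is to adopt the Barendregt convention that the formal parameters $\xs$ are fresh with respect to $\dom(\gm)$, so that $\gm(\e_b\subst{\xs}{\nvs}) = \e_b\subst{\xs}{\gm(\nvs)}$. Each $\gm(\nv_i)$ remains a near-value (a variable $\x \in \dom(\gm)$ is replaced by the value $\gm(\x)$, while values stay values), and Lemma~\ref{lem:subst-preserves-typing} lifts $\typedwad{\nv_i}{\g_i}$ to $\typedwa{}{}{\gm(\nv_i)}{\g_i}$. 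The induction hypothesis applied to the premise then yields $\expropte{\e_b\subst{\xs}{\gm(\nvs)}}{\gm(\e')}$, and reassembling via \WAOE{Inline} closes the case and completes the induction.
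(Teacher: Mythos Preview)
Your proposal is correct and follows essentially the same route as the paper: induction on the optimization derivation, with the structural and reflexive cases being routine, \WAOE{Direct} handled via the induction hypothesis plus \lemref{lem:subst-preserves-typing}, and \WAOE{Inline} hinging on the commutation $\gm(\e_b\subst{\xs}{\nvs}) = \e_b\subst{\xs}{\gm(\nvs)}$. The only cosmetic difference is that the paper justifies this commutation by observing that the method body $\e_b$ (coming from the table) has no free variables in $\dom(\Gm)=\dom(\gm)$, whereas you invoke the Barendregt convention on the bound parameters $\xs$; these are two sides of the same standard substitution-commutation argument.
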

\begin{proof}\label{proof:subst-preserves-optim}
  By induction on the derivation of \exproptd{\e}{\e'}.
  \begin{description}
    \item[\WAOE{Val}] By assumption, \exproptd{\v}{\v} with $\v \neq \m$.
      By definition, $\gm(\v) = \v$, and by constructor, the optimization holds
      for any typing environment, including the empty one:
      \begin{mathpar}
        \inferrule*[right={\WAOE{Val}}]{ \v \neq \m }
        { \expropt{\SpecEnv}{}{\MT}{\v}{\MT'}{\v} }.
      \end{mathpar}
    \item[\WAOE{Seq}] By assumption,
      \exproptd{\seq{\e_1}{\e_2}}{\seq{\e_1'}{\e_2'}}. Then, by inversion,
      we also have judgments \exproptd{\e_1}{\e_1'} and \exproptd{\e_2}{\e_2'}.
      By definition, 
      \[ \gm(\seq{\e_1}{\e_2}) = \seq{\gm(\e_1)}{\gm(\e_2)}\
         \text{ and }\ 
         \gm(\seq{\e_1'}{\e_2'}) = \seq{\gm(\e_1')}{\gm(\e_2')}. \]
      By the induction hypothesis on the optimization of $\e_1$, we have
      \[ \expropt{\SpecEnv}{}{\MT}{\gm(\e_1)}{\MT'}{\gm(\e_1')}, \]
      and similarly for $\e_2$. Therefore:
      \begin{mathpar}
        \inferrule*[right={\WAOE{Seq}}]
        { \expropt{\SpecEnv}{}{\MT}{\gm(\e_1)}{\MT'}{\gm(\e_1')} \\
          \expropt{\SpecEnv}{}{\MT}{\gm(\e_2)}{\MT'}{\gm(\e_2')}}
        { \expropt{\SpecEnv}{}{\MT}{\seq{\gm(\e_1)}{\gm(\e_2)}}
                              {\MT'}{\seq{\gm(\e_1')}{\gm(\e_2')}} }.
      \end{mathpar}
    \item[\WAOE{Inline}] In this case, $\e = \mcall{\m}{\nvs}$,
      $\e' = (\seq{\skp}{\e''})$, and the assumption is:
      \begin{mathpar}
        \inferrule*[right=\WAOE{Inline}]
        { \typedwad{\nv_i}{\g_i} \\
          \getmd(\MT,\m,\gs) = \mdef{\m}{\obar{\jty{\x}{\t}}}{\e_b} \\
          \expropt{\SpecEnv}{\Gm}{\MT}{\e_b\subst{\xs}{\nvs}}{\MT'}{\e''} }
        { \exproptd{\mcall{\m}{\nvs}}{\seq{\skp}{\e''}} }.
      \end{mathpar}
      By the induction hypothesis on the optimization of $\e_b\subst{\xs}{\nvs}$, we have
      \[ \expropt{\SpecEnv}{}{\MT}{\gm(\e_b\subst{\xs}{\nvs})}
                             {\MT'}{\gm(\e'')}. \]
      Because $\e_b$ should not have free variables bound in \Gm and
      $\dom(\Gm) = \dom(\gm)$ (thanks to \substokd),
      $\gm(\e_b\subst{\xs}{\nvs}) = \e_b\subst{\xs}{\gm(\nvs)}$.
      Next, by definition, $\gm(\mcall{\m}{\nvs}) = \mcall{\m}{\gm(\nvs)}$
      and $\gm(\seq{\skp}{\e''}) = \seq{\skp}{\gm(\e'')}$.
      Because \substokd, by \lemref{lem:subst-preserves-typing},
      we have \typedwa{}{}{\gm(\nv_i)}{\g_i}. Thus:
      \begin{mathpar}
        \inferrule*[right=\WAOE{Inline}]
        { \typedwa{}{}{\gm(\nv_i)}{\g_i} \\
          \getmd(\MT,\m,\gs) = \mdef{\m}{\obar{\jty{\x}{\t}}}{\e_b} \\
          \expropt{\SpecEnv}{}{\MT}{\e_b\subst{\xs}{\gm(\nvs)}}{\MT'}{\gm(\e'')} }
        { \expropt{\SpecEnv}{}{\MT}{\mcall{\m}{\gm(\nvs)}}{\MT'}{\seq{\skp}{\gm(\e'')}} }.
      \end{mathpar}
    \item[\WAOE{Direct}] In this case, $\e = \mcall{\m}{\es}$,
      $\e' = \mcall{\m'}{\es'}$, and the assumption is:
      \begin{mathpar}
        \inferrule*[right=\WAOE{Direct}]
        { \exproptd{\e_i}{\e'_i} \\ \typedwad{\e'_i}{\g_i} \\
          (\mdeqd) \in \SpecEnv
        }
        { \exproptd{\mcall{\m}{\es}}{\mcall{\m'}{\es'}} }.
      \end{mathpar}
      By the induction hypothesis on optimization of $\e_i$, 
      \[ \expropt{\SpecEnv}{}{\MT}{\gm(\e_i)}{\MT'}{\gm(\e'_i)}. \]
      By \lemref{lem:subst-preserves-typing},
      \[ \typedwa{}{}{\gm(\e_i')}{\g_i}. \]
      By definition, $\gm(\mcall{\m}{\es}) = \mcall{\m}{\gm(\es)}$
      and $\gm(\mcall{\m'}{\es'}) = \mcall{\m'}{\gm(\es')}$.
      Therefore, the desired optimization holds:
      \begin{mathpar}
        \inferrule*[right=\WAOE{Direct}]
        { \expropt{\SpecEnv}{}{\MT}{\gm(\e_i)}{\MT'}{\gm(\e'_i)} \\
          \typedwa{}{}{\gm(\e_i')}{\g_i} \\
          (\mdeqd) \in \SpecEnv
        }
        { \expropt{\SpecEnv}{}{\MT}{\mcall{\m}{\gm(\es)}}
                              {\MT'}{\mcall{\m'}{\gm(\es')}} }.
      \end{mathpar}
    \item[\ldots] Other cases go similarly to the considered ones.
  \end{description}
\end{proof}

\subsection{Correctness of Optimization}\label{app:proofs:opt-correctness}

Recall the following lemmas from \secref{subsec:correctness}.

\begin{lemma}[Context Irrelevance]\label{lem:context-irrelevant}
  For all \rdx, \Cx, $\Cx'$, \MT, the following holds:
  \begin{enumerate}
    \item For all $\MT'$, $\e'$,
      \[ \evalwa{\MT}{\plugCx{\Cx}{\rdx}}{\MT'}{\plugCx{\Cx}{\e'}}
         \quad \iff \quad 
         \evalwa{\MT}{\plugCx{\Cx'}{\rdx}}{\MT'}{\plugCx{\Cx'}{\e'}}. \]
    \item \[ \evalerrwa{\MT}{\plugCx{\Cx}{\rdx}}{\err}
      \quad \iff \quad \evalerrwa{\MT}{\plugCx{\Cx'}{\rdx}}{\err}. \]
  \end{enumerate}
\end{lemma}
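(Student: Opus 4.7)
The plan is to proceed by case analysis on the redex base \rdx, leveraging the observation that every normal- and error-evaluation rule in \figref{semantics} is triggered purely by the shape of \rdx and the ambient global table \MT, never by the surrounding context. For part (1), in the forward direction I would assume \evalwa{\MT}{\plugCx{\Cx}{\rdx}}{\MT'}{\plugCx{\Cx}{\e'}} and, by \lemref{lem:only-redex-steps} together with \thmref{thm:paper:eval-deterministic}, identify the unique normal-evaluation rule that fires. I would then walk through each of the nine shapes of \rdx listed in \figref{fig:wa-calculus-redex}. For example, if \rdx = \md, rule \WAE{MD} fires with $\MT' = \MText{\MT}{\md}$ and $\e' = \m$; these outputs depend only on \rdx and \MT, so the very same rule instance applies with \Cx' in place of \Cx, yielding \evalwa{\MT}{\plugCx{\Cx'}{\md}}{\MT'}{\plugCx{\Cx'}{\m}}. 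Each remaining redex form (sequencing, primop call, method definition, value wrapped in a global or table context, and the two function-call redexes) is handled identically by inspecting the corresponding rule's premises and replaying the derivation. The backward direction is symmetric.

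Part (2) follows analogously, but scanning the error-evaluation rules from \figref{semantics}. In each of the error cases (\WAE{VarErr}, \WAE{PrimopErr}, \WAE{CalleeErr}, \WAE{CallErr}) the premises refer exclusively to data packaged inside \rdx itself—such as $\Primop(l,\vs) = \err$ or $\getmd(\MT',\m,\gs) = \err$—so the derivation transfers verbatim from \plugCx{\Cx}{\rdx} to \plugCx{\Cx'}{\rdx} and vice versa.

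The main subtlety I expect is the \WAE{CallGlobal} and \WAE{CallLocal} cases, where the evaluation step manipulates a table annotation that appears inside the redex. For $\rdx = \evalg{\plugx{\Xx}{\mcall{\m}{\vs}}}$, rule \WAE{CallGlobal} rewrites it to $\evalg{\plugx{\Xx}{\evalt{\MT}{\mcall{\m}{\vs}}}}$, and for $\rdx = \evalt{\MT''}{\plugx{\Xx}{\mcall{\m}{\vs}}}$, rule \WAE{CallLocal} uses \MT'' together with $\getmd$. The key observation is that the simple context \Xx and the table \MT'' are both part of \rdx, not of the outer \Cx, and that the global table \MT consulted by \WAE{CallGlobal} is shared by both sides of the biconditional. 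Once that is spelled out, the step indeed depends only on \rdx and \MT, and the replay argument goes through uniformly, giving both directions of both parts.
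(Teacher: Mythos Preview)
Your proposal is correct and follows essentially the same approach as the paper, which simply states that the proof goes ``by inspecting a reduction step for \plugCx{\Cx}{\rdx} (\plugCx{\Cx'}{\rdx}) and building a corresponding step for \plugCx{\Cx'}{\rdx} (\plugCx{\Cx}{\rdx}),'' observing that only \rdx matters. Your appeal to \lemref{lem:only-redex-steps} and determinism is not strictly needed---a direct case analysis on the rule that fires already suffices---but it is harmless and the rest of your plan, including the handling of the \WAE{CallGlobal}/\WAE{CallLocal} subtlety, is exactly right.
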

\begin{proof}
  See \lemref{lem:paper:context-irrelevant}
  on page~\pageref{lem:paper:context-irrelevant}.
\end{proof}

\begin{lemma}[Simple-Context Irrelevance]\label{lem:simple-context-irrelevant}
  For all \MT, \Cx, \e, \MTg, $\e'$, $\MTg'$, \Xx, the following holds:
  \[ 
     \evalwa{\MTg}{\plugCx{\Cx}{\evalt{\MT}{\e}}}
            {\MTg'}{\plugCx{\Cx}{\evalt{\MT}{\e'}}}
     \quad\implies\quad
     \evalwa{\MTg}{\plugCx{\Cx}{\evalt{\MT}{\plugx{\Xx}{\e}}}}
            {\MTg'}{\plugCx{\Cx}{\evalt{\MT}{\plugx{\Xx}{\e'}}}}.
  \]
\end{lemma}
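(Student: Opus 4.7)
The plan is to proceed by case analysis on the unique canonical form of \e given by \lemref{lem:expr-form-canonical}: \e is either a value \v, a pending function call $\plugx{\Xx_e}{\mcalld}$, or a reducible expression $\plugCx{\Cx_e}{\rdx}$. The value case is vacuous, because if $\e = \v$ then the only step available to $\plugCx{\Cx}{\evalt{\MT}{\v}}$ is \WAE{ValLocal}, which produces $\plugCx{\Cx}{\v}$ rather than anything whose subterm inside \Cx is of the form $\evalt{\MT}{\cdot}$, so the premise cannot hold. I would dispatch this case in one line.

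In the call case $\e = \plugx{\Xx_e}{\mcalld}$, the subterm $\evalt{\MT}{\plugx{\Xx_e}{\mcalld}}$ is itself a redex matching the base $\evalt{\MT}{\plugx{\Xx}{\mcall{\m}{\vs}}}$, so by determinism the step in the premise is by \WAE{CallLocal}, and $\e'$ must be $\plugx{\Xx_e}{\e_b\subst{\xs}{\vs}}$ with the method body $\e_b$ read off from $\getmd(\MT,\m,\gs)$. Since simple contexts are closed under composition, $\plugx{\Xx}{\Xx_e}$ is again a simple context $\Xx'$; consequently $\evalt{\MT}{\plugx{\Xx}{\e}}$ has the same redex shape and steps by the same \WAE{CallLocal} to $\plugx{\Xx'}{\e_b\subst{\xs}{\vs}} = \plugx{\Xx}{\e'}$, which gives exactly the wrapped reduction.

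The third case $\e = \plugCx{\Cx_e}{\rdx}$ is where \lemref{lem:context-irrelevant} does the real work. I would push $\evalt{\MT}{\cdot}$ inside to rewrite the left-hand expression as $\plugCx{\Cx_1}{\rdx}$ with $\Cx_1 = \plugCx{\Cx}{\evalt{\MT}{\Cx_e}}$, and the wrapped right-hand expression as $\plugCx{\Cx_2}{\rdx}$ with $\Cx_2 = \plugCx{\Cx}{\evalt{\MT}{\plugx{\Xx}{\Cx_e}}}$. The Context Irrelevance Lemma then transports the given step on $\plugCx{\Cx_1}{\rdx}$ to an identically shaped step on $\plugCx{\Cx_2}{\rdx}$ with the same inner reduct; repackaging the result through $\Cx_2$ yields the conclusion.

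I do not expect any conceptual difficulty; the main bookkeeping load is purely syntactic. The delicate points are verifying that the composed objects $\plugx{\Xx}{\Xx_e}$, $\plugCx{\Cx}{\evalt{\MT}{\Cx_e}}$, and $\plugCx{\Cx}{\evalt{\MT}{\plugx{\Xx}{\Cx_e}}}$ are respectively a simple context, a world context, and a world context, and that context-plugging commutes with these compositions. Both facts follow directly from the grammar of \Xx and \Cx together with the implicit composition identities recorded at the start of the appendix.
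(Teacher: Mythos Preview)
Your proposal is correct and follows essentially the same approach as the paper: both proofs case-split on the unique canonical form of \e, discharge the value case as vacuous because \WAE{ValLocal} escapes the $\evalt{\MT}{\cdot}$ wrapper, handle the call case by observing that $\plugx{\Xx}{\Xx_e}$ is again a simple context so \WAE{CallLocal} fires identically, and reduce the \plugCx{\Cx_e}{\rdx} case to \lemref{lem:context-irrelevant} via the composed world contexts $\plugCx{\Cx}{\evalt{\MT}{\Cx_e}}$ and $\plugCx{\Cx}{\evalt{\MT}{\plugx{\Xx}{\Cx_e}}}$. Your write-up is slightly more explicit than the paper's in pinning down $\e'$ via determinism in the call case, but the structure is identical.
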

\begin{proof}
  See \lemref{lem:paper:simple-context-irrelevant}
  on page~\pageref{lem:paper:simple-context-irrelevant}.
\end{proof}

\begin{lemma}[Optimization Preserves Values]\label{lem:opt-preserves-values}
For all $\SpecEnv, \MT, \MT', \Gm, \e, \v$, the following hold:
\[ \exproptd{\v}{\e} \quad\implies\quad \e = \v
    \qquad\text{ and }\qquad
    \exproptd{\e}{\v} \quad\implies\quad \e = \v.\]
\end{lemma}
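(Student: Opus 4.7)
The plan is to prove both implications by straightforward case analysis on the derivation of the optimization judgment, exploiting the observation that among the rules in Figure~\ref{fig:wa-expr-opt}, only \WAOE{Val} and \WAOE{ValFun} relate a value to an expression on either side of $\leadsto$.

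For the forward direction $\exproptd{\v}{\e} \implies \e = \v$, I would inspect the last rule used to derive the judgment. Since the left-hand expression is a value, only \WAOE{Val} (when $\v \neq \m$) or \WAOE{ValFun} (when $\v = \m$) can apply: every other rule requires a left-hand expression that is not a value (a variable, a sequence, a primop or function call, a method definition, or a global/local evaluation form). In either applicable case, the rule forces the right-hand side to be the same value as the left, giving $\e = \v$.

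For the backward direction $\exproptd{\e}{\v} \implies \e = \v$, I would again do an inversion, but this time examining what syntactic forms each rule can produce on the right. Only \WAOE{Val} and \WAOE{ValFun} yield a value; the other rules yield non-value forms such as variables, sequences, primop or function calls, method definitions, or evaluation contexts. A case worth flagging explicitly is \WAOE{Inline}, whose right-hand side is $\seq{\skp}{\e'}$---a sequence, not a value---and \WAOE{Direct}, whose right-hand side is a function call $\mcall{\m'}{\es'}$; neither can ever collapse to a bare value at this syntactic level. Hence the derivation must conclude with \WAOE{Val} or \WAOE{ValFun}, and inversion forces $\e = \v$.

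There is no real obstacle; the proof is a routine inversion argument of the kind used tacitly in earlier lemmas. The only mildly subtle point is double-checking that the inlining and direct-call rules never produce a value on the right, which is immediate from their syntactic shape.
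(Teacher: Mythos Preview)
Your proposal is correct and follows the same approach as the paper, which simply states ``by case analysis on the optimization relation.'' You provide more detail than the paper (explicitly ruling out \WAOE{Inline} and \WAOE{Direct} for the backward direction), but the underlying inversion argument is identical.
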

\begin{proof}
See \lemref{lem:paper:optimization-preserves-values}
on page~\pageref{lem:paper:optimization-preserves-values}.
\end{proof}

Finally, we can prove
the \textbf{Bisimulation \lemref{lem:paper:expr-optimization-bisimulation}}.
\begin{lemma}[Bisimulation]\label{lem:expr-optimization-bisimulation}
  For all method tables $\MT, \MT'$, method-optimization environment $\SpecEnv$,\\
  and expressions $\e_1$, ${\e_1}\!'$, such that
  \[ \tableopt{\SpecEnv}{\e_1}{\MT}{\MT'} \quad \text{and} \quad
     \expropt{\SpecEnv}{}{\MT}{\e_1}{\MT'}{\e_1'}, \]
  for all global tables $\MT_g, {\MT_g}\!'$ and world context \Cx,
  the following hold:
  \begin{enumerate}
    \item Forward direction:
      \[
      \begin{array}{rl}
        \forall \e_2. &
          \evalwa{\MT_g}{\plugCx\Cx{\evalt{\MT}{\e_1}}}
                 {\MT_g'}{\plugCx\Cx{\evalt{\MT}{\e_2}}} \\
        & \implies \\
        & \exists \e_2'.\
          \evalwa{\MT_g}{\plugCx\Cx{\evalt{\MT'}{\e_1'}}}
                 {\MT_g'}{\plugCx\Cx{\evalt{\MT'}{\e_2'}}}
          \ \ \ \land\ \
          \expropt{\SpecEnv}{}{\MT}{\e_2}{\MT'}{\e_2'}.
      \end{array}
      \]
    \item Backward direction:
      \[
      \begin{array}{rl}
        \forall \e_2'. &
          \evalwa{\MT_g}{\plugCx\Cx{\evalt{\MT'}{\e_1'}}}
                 {\MT_g'}{\plugCx\Cx{\evalt{\MT'}{\e_2'}}} \\
        & \implies \\
        & \exists \e_2.\
          \evalwa{\MT_g}{\plugCx\Cx{\evalt{\MT}{\e_1}}}
                 {\MT_g'}{\plugCx\Cx{\evalt{\MT}{\e_2}}}
          \ \ \ \land\ \
          \expropt{\SpecEnv}{}{\MT}{\e_2}{\MT'}{\e_2'}.
      \end{array}
      \]
  \end{enumerate}
\end{lemma}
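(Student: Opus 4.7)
The plan is to proceed by induction on the derivation of $\expropt{\SpecEnv}{}{\MT}{\e_1}{\MT'}{\e_1'}$, proving both directions simultaneously. For each optimization rule, I would first use \lemref{lem:paper:expr-form-unique} to classify the shape of $\e_1$ (value, simple-context call $\plugx{\Xx}{\mcalld}$, or redex-in-context $\plugCx{\Cx_e}{\rdx}$) and then analyze which semantic rule fires. Determinism (\thmref{thm:paper:eval-deterministic}) ensures that we never have to consider competing reductions, and \lemref{lem:paper:optimization-preserves-values} keeps values in lockstep on both sides.

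For the reflexive base cases \WAOE{Val}, \WAOE{ValFun}, \WAOE{Var}, \WAOE{Global}, \WAOE{Local}, and \WAOE{MD}, the two sides are syntactically identical as expressions, so both step via the same rule; the only subtlety is that \WAOE{ValFun}, \WAOE{Global}, \WAOE{Local}, and \WAOE{MD} rely on MNamesCompat so that a dispatch error in $\MT$ corresponds to the same error in $\MT'$. For the structural cases \WAOE{Seq}, \WAOE{Primop}, and \WAOE{Call}, I would perform case analysis on the subexpressions using \lemref{lem:paper:expr-form-unique} to separate (i) the immediate-redex firing (e.g., $\seq{\v}{\e}$ via \WAE{Seq}), (ii) a nested method call absorbed by the enclosing $\evalt{\MT}{\cdot}$ through \WAE{CallLocal}, and (iii) a reduction inside a proper sub-context. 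Case (i) closes by direct inspection plus \lemref{lem:paper:optimization-preserves-values}; cases (ii) and (iii) close by applying the induction hypothesis to the relevant subexpression and lifting the step with \lemref{lem:paper:context-irrelevant} and \lemref{lem:paper:simple-context-irrelevant}.

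The hard part will be the optimization cases \WAOE{Inline} and \WAOE{Direct}, where the two sides have genuinely different syntactic shapes. For \WAOE{Inline} with $\e_1 = \mcall{\m}{\nvs}$, the original steps by \WAE{CallLocal} to $\e_b\subst{\xs}{\nvs}$, and I would show the optimized $\seq{\skp}{\e'}$ steps by \WAE{Seq} to $\e'$; the key observation is that the premise of \WAOE{Inline} already delivers $\expropt{\SpecEnv}{\Gm}{\MT}{\e_b\subst{\xs}{\nvs}}{\MT'}{\e'}$ directly, so the residuals are in the optimization relation without needing further work. For \WAOE{Direct}, once arguments are values the original call dispatches via $\getmd(\MT,\m,\gs)$ while the optimized call dispatches via $\getmd(\MT',\m',\gs)$; here MethodOpt-Valid applied to $(\mdeqd)\in\SpecEnv$ guarantees the two method bodies are in the optimization relation under $\obar{\x:\g}$, and \lemref{lem:paper:subst-preserves-optim} then transports this along the argument substitution to give the desired relation on the residuals. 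When the arguments are not yet values, \WAOE{Direct} reduces to the structural \WAOE{Call} case via an auxiliary induction on argument positions.

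The main technical obstacle is ensuring that evaluation contexts align correctly across the two reductions in the \WAOE{Inline} and \WAOE{Direct} cases, particularly when the call sits inside a simple context $\Xx$ nested within $\evalt{\MT}{\cdot}$. The simple-context irrelevance lemma addresses exactly this mismatch, but care is needed when unpacking $\e_1 = \plugx{\Xx_1}{\mcalld}$ to apply the induction hypothesis at the right granularity and then reassemble the step in the full context. A secondary concern is the backward direction: one must check that whenever the optimized expression steps (e.g., the inlined $\seq{\skp}{\e'}$ fires), the original $\mcall{\m}{\nvs}$ also has a well-defined reduction, which follows because the \WAOE{Inline} premise guarantees $\getmd(\MT,\m,\gs)$ succeeds, and the concrete typing of $\nvs$ ensures dispatch tags match.
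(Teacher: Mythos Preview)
Your plan is essentially the paper's plan: induction on the optimization derivation, case analysis on the shape of $\e_1$ via \lemref{lem:paper:expr-form-unique}, and lifting via \lemref{lem:paper:context-irrelevant} and \lemref{lem:paper:simple-context-irrelevant}. The treatment of \WAOE{Inline} and \WAOE{Direct} is right, and you correctly identify \lemref{lem:paper:subst-preserves-optim} and \textsc{MethodOpt-Valid} as the tools that relate the residual bodies.

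There is one genuine gap. You classify \WAOE{Call} as a ``structural'' case to be closed by the induction hypothesis on a subexpression, but that recipe fails in the subcase where the call itself is what fires: $\e_c=\m$ and every $\e_i=\v_i$, i.e.\ $\e_1=\mcall{\m}{\vs}=\plugx{\hole}{\mcalld}$. Here there is no proper subexpression that steps, so no IH applies; yet $\evalt{\MT}{\mcall{\m}{\vs}}$ and $\evalt{\MT'}{\mcall{\m}{\vs}}$ dispatch in \emph{different} tables and land on (potentially) different bodies $\e_b$ and $\e_b'$. To close this case you must invoke the hypothesis $\tableopt{\SpecEnv}{\e_1}{\MT}{\MT'}$: by \WAOT{MethodTable} and \WAOD{MD}, the method selected in $\MT'$ has the same signature and a body $\e_b'$ satisfying $\expropt{\SpecEnv}{\obar{\x:\t}}{\MT}{\e_b}{\MT'}{\e_b'\subst{\xs'}{\xs}}$, and then \lemref{lem:paper:subst-preserves-optim} transports this along $\subst{\xs}{\vs}$ to give $\expropt{\SpecEnv}{}{\MT}{\e_b\subst{\xs}{\vs}}{\MT'}{\e_b'\subst{\xs'}{\vs}}$. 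This is precisely the argument you give for \WAOE{Direct}, and it is needed here too; the paper's appendix proof treats exactly this \WACAN{Call} subcase of \WAOE{Call} separately for this reason. Without it, the ``structural'' handling of \WAOE{Call} is incomplete.
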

\begin{proof}\label{proof:optimization-correct}
  The proof goes by induction on the derivation of
  optimization \expropt{\SpecEnv}{}{\MT}{\e_1}{\MT'}{\e_1'}.
  \begin{itemize}
    \item Case \WAOE{Val}, where $\e_1 = \e'_1 = \v$
      and \expropte{\v}{\v}, 
      is trivial: there are no steps that
      \plugCx\Cx{\evalt{\MT}{\v}} can make to arrive into
      \plugCx\Cx{\evalt{\MT}{\_}} (\v is not \rdx).
      The only possible step is
      \[\evalwa{\MT_g}{\plugCx\Cx{\evalt{\MT}{\v}}}{\MT_g}{\plugCx\Cx{\v}},\]
      and thus direction (1) is vacuously true.
      Same applies to \plugCx\Cx{\evalt{\MT'}{\v}} and direction (2).
    
    \item Case \WAOE{ValFun} is similar to \WAOE{Val}.
    
    \item Case \WAOE{Var}, where $\e_1 = \e'_1 = \x$
      and \expropte{\x}{\x}, is trivial:
      \plugCx\Cx{\evalt{\MT}{\x}} cannot make a normal-evaluation step,
      only error step, so direction (1) is vacuously true.
      Same applies to \plugCx\Cx{\evalt{\MT'}{\x}} and direction (2).

    \item Case \WAOE{Global}, where $\e_1 = \e'_1 = \evalg{\e}$
      and \expropte{\evalg{\e}}{\evalg{\e}}.
      \begin{itemize}
        \item If \e is a value, \plugCx{\Cx}{\evalt{*}{\evalg{\v}}} reduces
          to \plugCx{\Cx}{\evalt{*}{\v}},
          and optimization holds by \WAOE{Val/ValFun}.
        \item If \e is \plugx{\Xx}{\mcalld},
          \[
            \evalwa
            {\MTg}{\plugCx\Cx{\evalt{*}{\evalg{\plugx{\Xx}{\mcalld}}}}}
            {\MTg}{\plugCx\Cx{\evalt{*}{\evalt{\MTg}{\plugx{\Xx}{\mcalld}}}}},
          \]
          and the desired optimization relation holds by \WAOE{Local}.
        \item If \e is \plugCx{\Cx'}{\rdx}
          and reduces to some \plugCx{\Cx'}{\e'},
          by~\lemref{lem:context-irrelevant} (context irrelevance):
          \[\begin{array}{c}
            \evalwa{\MTg}{\plugCx{\Cx'}{\rdx}}
                   {\MTg'}{\plugCx{\Cx'}{\e'}}
            \\ \iff \\
            \evalwa{\MTg}{\plugCx\Cx{\evalt{*}{\evalg{\plugCx{\Cx'}{\rdx}}}}}
                   {\MTg'}{\plugCx\Cx{\evalt{*}{\evalg{\plugCx{\Cx'}{\e'}}}}}. \\
            \end{array}\]
          Then, the desired optimization relation holds by \WAOE{Global}
          (we rely on the fact that evaluation does not generate new method
           names in the syntax).
      \end{itemize}

    \item Case \WAOE{Local} is similar to \WAOE{Global}.

    \item Case \WAOE{MD}, where $\e_1 = \e'_1 = \md$
      and \expropte{\md}{\md}. By \WAE{MD},
      \[
        \evalwa
          {\MTg}{\plugCx\Cx{\evalt{\MT}{\md}}}
          {\MText{\MTg}{\md}}{\plugCx\Cx{\evalt{\MT}{\m}}}
      \]
      and
      \[
        \evalwa
          {\MTg}{\plugCx\Cx{\evalt{\MT'}{\md}}}
          {\MText{\MTg}{\md}}{\plugCx\Cx{\evalt{\MT'}{\m}}},
      \]
      where $\mathop{name}(\md) = \m$.
      Because by inversion of \WAOE{MD},
      \tableoptexprd{\mathop{name}(\md)}, the assumption
      of \WAOE{ValFun} holds, and thus, \expropte{\m}{\m}.

    \item Case \WAOE{Seq}, where $\e_1 = (\seq{\e_{11}}{\e_{12}})$
      and $\e_1' = (\seq{\e'_{11}}{\e'_{12}})$:
      \begin{mathpar}
        \inferrule*[right=\WAOE{Seq}]
        { \expropte{\e_{11}}{\e'_{11}} \\ \expropte{\e_{12}}{\e'_{12}} }
        { \expropte{\seq{\e_{11}}{\e_{12}}}{\seq{\e'_{11}}{\e'_{12}}}. }
      \end{mathpar}
      Let us consider the \emph{forward} direction (1) first.
      To make a step
      \[ \evalwa{\MT_g}{\plugCx\Cx{\evalt{\MT}{\seq{\e_{11}}{\e_{12}}}}}
                {\MT_g'}{\plugCx\Cx{\evalt{\MT}{\_}}}, \]
      the source expression \seq{\e_{11}}{\e_{12}} has to be either some
      \plugx{\Xx}{\mcalld} or \plugx{\Cx'}{\rdx}. By analyzing \CanSym forms,
      we know all such cases:
      \begin{enumerate}
        \item \WACAN{SeqX} with $\Xx = \seq{\Xx_1}{\e_{12}}$
          and \Can{\seq{\e_{11}}{\e_{12}}}{\rXd}, which by inversion
          gives \Can{\e_{11}}{\rXX{\Xx_1}}.
          Therefore, \plugCx\Cx{\evalt{\MT}{\seq{\e_{11}}{\e_{12}}}} steps
          by \WAE{CallLocal}:
          \[
            \evalwa
            {\MT_g}{\plugCx\Cx{\evalt{\MT}{\seq{\plugx{\Xx_1}{\mcalld}}{\e_{12}}}}}
            {\MT_g}{\plugCx\Cx{\evalt{\MT}{\seq{\plugx{\Xx_1}{\e_b\subst{\xs}{\vs}}}{\e_{12}}}}}.
          \]
          Since \plugCx\Cx{\evalt{\MT}{\plugx{\Xx_1}{\mcalld}}} steps similarly,
          \[
            \evalwa
            {\MT_g}{\plugCx\Cx{\evalt{\MT}{\plugx{\Xx_1}{\mcalld}}}}
            {\MT_g}{\plugCx\Cx{\evalt{\MT}{\plugx{\Xx_1}{\e_b\subst{\xs}{\vs}}}}},
          \]
          we can apply induction to the optimization of
          $\e_{11} = \plugx{\Xx_1}{\mcalld}$.
          The induction hypothesis says that $\exists \e'_{21}$, such that:
          \[
            \evalwa
              {\MT_g}{\plugCx\Cx{\evalt{\MT'}{\e'_{11}}}}
              {\MT_g}{\plugCx\Cx{\evalt{\MT'}{\e'_{21}}}}
            \quad\land\quad
            \expropte{\plugx{\Xx_1}{\e_b\subst{\xs}{\vs}}}{\e'_{21}}.
          \]
          Then, by~\lemref{lem:paper:simple-context-irrelevant}, we have:
          \[
            \evalwa
            {\MT_g}{\plugCx\Cx{\evalt{\MT'}{\seq{\e'_{11}}{\e'_{12}}}}}
            {\MT_g}{\plugCx\Cx{\evalt{\MT'}{\seq{\e'_{21}}{\e'_{12}}}}},
          \]
          and by \WAOE{Seq}, the desired optimization relation holds:
          \begin{mathpar}
            \inferrule*[right=\WAOE{Seq}]
            { \expropte{\plugx{\Xx_1}{\e_b\subst{\xs}{\vs}}}{\e'_{21}} \\
              \expropte{\e_{12}}{\e'_{12}} }
            { \expropte{\seq{\plugx{\Xx_1}{\e_b\subst{\xs}{\vs}}}{\e_{12}}}
                       {\seq{\e'_{21}}{\e'_{12}}} }.
          \end{mathpar}
        \item \WACAN{Redex} with $\e_{11} = \v_{11}$
          and \Can{\seq{\v_{11}}{\e_{12}}}{\rCC{\hole}}.
          Therefore, \plugCx\Cx{\evalt{\MT}{\seq{\e_{11}}{\e_{12}}}} steps
          by \WAE{Seq}:
          \[
            \evalwa
            {\MT_g}{\plugCx\Cx{\evalt{\MT}{\seq{\v_{11}}{\e_{12}}}}}
            {\MT_g}{\plugCx\Cx{\evalt{\MT}{\e_{12}}}}.
          \]
          Because \expropte{\v_{11}}{\e'_{11}} by assumption,
          we know that $\e'_{11} = \v_{11}$ by \lemref{lem:opt-preserves-values}.
          Therefore, the optimized expression steps by \WAE{Seq}:
          \[
            \evalwa
            {\MT_g}{\plugCx\Cx{\evalt{\MT'}{\seq{\v_{11}}{\e_{12}'}}}}
            {\MT_g}{\plugCx\Cx{\evalt{\MT'}{\e_{12}'}}}.
          \]
          And we already know that \expropte{\e_{12}}{\e'_{12}}.
        \item \WACAN{SeqC} with $\Cx' = \seq{\Cx_1}{\e_{12}}$ and
          \begin{mathpar}
            \inferrule*[right=\WACAN{SeqC}]
            { \Can{\e_{11}}{\rCC{\Cx_1}} }
            { \Can{\seq{\e_{11}}{\e_{12}}}{\rCC{\seq{\Cx_1}{\e_{12}}}} }.
          \end{mathpar}
          Thus, we know that
          $ \plugCx\Cx{\evalt{\MT}{\seq{\e_{11}}{\e_{12}}}} 
             = \plugCx\Cx{\evalt{\MT}{\seq{\plugCx{\Cx_1}{\rdx}}{\e_{12}}}}, $
          which means that the step
          \[\begin{array}{c}
            \evalwa{\MTg}{\plugCx\Cx{\evalt{\MT}{\seq{\plugCx{\Cx_1}{\rdx}}{\e_{12}}}}}
                    {\MTg'}{\plugCx\Cx{\evalt{\MT}{\seq{\plugCx{\Cx_1}{\e'}}{\e_{12}}}}}
            \\ \iff
            \\ \evalwa{\MTg}{\plugCx\Cx{\evalt{\MT}{\plugCx{\Cx_1}{\rdx}}}}
                      {\MTg'}{\plugCx\Cx{\evalt{\MT}{\plugCx{\Cx_1}{\e'}}}}.
          \end{array}\]
          The latter fact, which can also be written as
          \[
            \evalwa
            {\MTg}{\plugCx\Cx{\evalt{\MT}{\e_{11}}}}
            {\MTg'}{\plugCx\Cx{\evalt{\MT}{\plugCx{\Cx_1}{\e'}}}},
          \]
          allows us to apply induction to the optimization of
          $\e_{11} = \plugCx{\Cx_1}{\rdx}$
          (note that $\plugCx{\Cx_1}{\e'}$ plays the role of $\e_2$
          from the theorem statement).
          By the induction hypothesis, $\exists \e'_{21}$, such that:
          \[
            \evalwa
              {\MTg}{\plugCx\Cx{\evalt{\MT'}{\e'_{11}}}}
              {\MTg'}{\plugCx\Cx{\evalt{\MT'}{\e'_{21}}}}
            \quad\land\quad 
            \expropte{\plugCx{\Cx_1}{\e'}}{\e'_{21}}. \]
          But then, similarly to case (a), we get:
          \[
            \evalwa
            {\MTg}{\plugCx\Cx{\evalt{\MT'}{\seq{\e'_{11}}{\e'_{12}}}}}
            {\MTg'}{\plugCx\Cx{\evalt{\MT'}{\seq{\e'_{21}}{\e'_{12}}}}}
          \]
          and
          \begin{mathpar}
            \inferrule*[right=\WAOE{Seq}]
            { \expropte{\plugCx{\Cx_1}{\e'}}{\e'_{21}} \\
              \expropte{\e_{12}}{\e'_{12}} }
            { \expropte{\seq{\plugCx{\Cx_1}{\e'}}{\e_{12}}}
                       {\seq{\e'_{21}}{\e'_{12}}} }.
          \end{mathpar}
      \end{enumerate}
      The \emph{backward} direction (2) proceeds analogously.

    \item Case \WAOE{Primop} is similar to \WAOE{Seq}.
    
    \item Case \WAOE{Call} where $\e_1 = \mcall{\e_c}{\es}$
      and $\e'_1 = \mcall{\e'_c}{\es'}$:
      \begin{mathpar}
        \inferrule*[right=\WAOE{Call}]
        { \expropte{\e_c}{\e'_c} \\ \expropte{\e_i}{\e'_i} }
        { \expropte{\mcall{\e_c}{\es}}{\mcall{\e'_c}{\es'}} }.
      \end{mathpar}
      Let us consider the \emph{forward} direction (1) first.
      To make a step
      \[ \evalwa{\MT_g}{\plugCx\Cx{\evalt{\MT}{\mcall{\e_c}{\es}}}}
                {\MT_g'}{\plugCx\Cx{\evalt{\MT}{\_}}}, \]
      the source expression \mcall{\e_c}{\es} has to be either some
      \plugx{\Xx}{\mcalld} or \plugx{\Cx'}{\rdx}. By analyzing \CanSym forms,
      we know all such cases. Most of them are similar to the
      using-induction-hypothesis cases of \WAOE{Seq}:
      \WACAN{CalleeX} and \WACAN{CallX}---to \WACAN{SeqX},
      \WACAN{CalleeC} and \WACAN{CallC}---to \WACAN{SeqC}.
      The only interesting case is \WACAN{Call}, when
      $\e_1 = \mcall{\e_c}{\es} = \mcall{\m'}{\vs'}$.
      But then $\e'_1 = \mcall{\e'_c}{\es'} = \mcall{\m'}{\vs'}$.
      We know that \plugCx\Cx{\evalt{\MT}{\mcall{\m'}{\vs'}}} can step
      only by \WAE{CallLocal},
      \[
        \evalwa
          {\MTg}{\plugCx\Cx{\evalt{\MT}{\mcall{\m'}{\vs'}}}}
          {\MTg}{\plugCx\Cx{\evalt{\MT}{\e_b\subst{\xs}{\vs'}}}},
      \]
      and that \expropte{\m'}{\m'} by \WAOE{ValFun}. Thus, since
      $\m' \in \dom(\MT)$, we also know that $\m' \in \dom(\MT')$.
      Furthermore, by assumption, \tableopt{\SpecEnv}{\e_1}{\MT}{\MT'},
      so we know that the corresponding method definition of $\m'$
      exists in $\MT'$ (with the body $\e_b'$), and:
      \[ \expropt{\SpecEnv}{\obar{\x:\t}}
         {\MT}{\e_b}
         {\MT'}{\e_b'\subst{\xs'}{\xs}}. \]
      Since $\mcall{\m'}{\vs'}$ can be resolved in $\MT'$,
      we have a step similar to the above by rule \WAOE{CallLocal}:
      \[
        \evalwa
          {\MTg}{\plugCx\Cx{\evalt{\MT'}{\mcall{\m'}{\vs'}}}}
          {\MTg}{\plugCx\Cx{\evalt{\MT'}{\e'_b\subst{\xs'}{\vs'}}}}.
      \]
      By~\lemref{lem:subst-preserves-optim} (value substitution
      preserves optimization), we get the desired relation:
      \[ \expropte{\e_b\subst{\xs}{\vs'}}{\e'_b\subst{\xs'}{\vs'}}. \]
      The \emph{backward} direction (2) proceeds analogously.

    \item Case \WAOE{Inline} where $\e_1 = \mcalld$,
      $\e'_1 = (\seq{\skp}{\e'})$, and:
      \begin{mathpar}
        \inferrule
        { \typedwa{}{}{\v_i}{\g_i} \\
          \getmd(\MT,\m,\gs) = \mdef{\m}{\obar{\jty{\x}{\t}}}{\e_b} \\
          \expropte{\e_b\subst{\xs}{\vs}}{\e'} }
        { \expropte{\mcall{\m}{\vs}}{\seq{\skp}{\e'}} }.
      \end{mathpar}
      (because of the empty typing context, all near-values are just values).\\
      Let us start with the \emph{forward} direction (1) again.
      The only way for \plugCx\Cx{\evalt{\MT}{\mcalld}} to make a step is
      by using \WAE{CallLocal} rule (note that \typedwa{}{}{\v_i}{\g_i}
      means $\typeof(\v_i) = \g_i$):
      \[
        \evalwa
          {\MTg}{\plugCx\Cx{\evalt{\MT}{\mcalld}}}
          {\MTg}{\plugCx\Cx{\evalt{\MT}{\e_b\subst{\xs}{\vs}}}}.
      \]
      But \plugCx\Cx{\evalt{\MT'}{\seq{\skp}{\e'}}} can step by \WAE{Seq}:
      \[
        \evalwa
          {\MTg}{\plugCx\Cx{\evalt{\MT'}{\seq{\skp}{\e'}}}}
          {\MTg}{\plugCx\Cx{\evalt{\MT'}{\e'}}},
      \]
      and the desired optimization relation holds by assumption:
      \[ \expropte{\e_b\subst{\xs}{\vs}}{\e'}. \]
      The \emph{backward} direction (2) applies same the reasoning:
      \plugCx\Cx{\evalt{\MT'}{\seq{\skp}{\e'}}} steps only by \WAE{Seq},
      and \plugCx\Cx{\evalt{\MT}{\mcalld}} can step by \WAE{CallLocal}
      accordingly, so that \expropte{\e_b\subst{\xs}{\vs}}{\e'}.

    \item Case \WAOE{Direct} where $\e_1 = \mcall{\m}{\es}$,
      $\e'_1 = \mcall{\m'}{\es'}$, and:
      \begin{mathpar}
        \inferrule
        { \expropte{\e_i}{\e'_i} \\ \typedwa{}{}{\e'_i}{\g_i} \\
          (\mdeqd) \in \SpecEnv }
        { \expropte{\mcall{\m}{\es}}{\mcall{\m'}{\es'}} }.
      \end{mathpar}
      Let us consider the \emph{forward} direction (1) first.
      To make a step
      \[ \evalwa{\MT_g}{\plugCx\Cx{\evalt{\MT}{\mcall{\m}{\es}}}}
                {\MT_g'}{\plugCx\Cx{\evalt{\MT}{\_}}}, \]
      the source expression \mcall{\m}{\es} has to be either \mcalld, or some
      \plugx{\Xx}{\mcall{\m''}{\vs''}} or \plugx{\Cx'}{\rdx}.
      By analyzing \CanSym forms, we know all such cases:
      \begin{enumerate}
        \item \WACAN{Call} with $\mcall{\m}{\es} = \mcalld$ and
          \Can{\mcalld}{\rXX{\hole}}. Because
          \expropte{\v_i}{\e'_i} and \typedwa{}{}{\e'_i}{\g_i}, we know
          $\e'_i = \v_i$ and thus $\typeof(\v_i) = \g_i$.
          Because $(\mdeqd) \in \SpecEnv$, we know that call \mcalld can be
          resolved in \MT, and thus
          \plugCx\Cx{\evalt{\MT}{\mcalld}} steps by \WAE{CallLocal}:
          \[
            \evalwa
              {\MTg}{\plugCx\Cx{\evalt{\MT}{\mcalld}}}
              {\MTg}{\plugCx\Cx{\evalt{\MT}{\e_b\subst{\xs}{\vs}}}}.
          \]
          By other properties of well-defined $(\mdeqd) \in \SpecEnv$,
          we also know that call \mcall{\m'}{\vs} can be resolved in $\MT'$,
          which means
          \[
            \evalwa
              {\MTg}{\plugCx\Cx{\evalt{\MT'}{\mcall{\m'}{\vs}}}}
              {\MTg}{\plugCx\Cx{\evalt{\MT'}{\e'_b\subst{\xs'}{\vs}}}}.
          \]
          and that
          \[
            \expropt{\SpecEnv}{\obar{\x:\g}}
              {\MT}{\e_b}
              {\MT'}{\e'_b\subst{\xs'}{\xs}}.
          \]
          By~\lemref{lem:subst-preserves-optim} (value substitution
          preserves optimization), we get the desired relation:
          \[ \expropte{\e_b\subst{\xs}{\vs}}{\e'_b\subst{\xs'}{\vs}}. \]
        \item \WACAN{CallX} with
          $\Xx = \mcall{\m}{\vs_{..{i-1}},\,\Xx_i,\,\es_{{i+1}..}}$ and:
          \begin{mathpar}
            \inferrule*[right=\WACAN{CallX}]
            { \Can{\e_i}{\rX(\Xx_i, \mcall{\m''}{\vs''})} }
            { \Can{\mcall{\m}{\vs_{..{i-1}},\,\e_i,\,\es_{{i+1}..}}}
                  {\rX(\mcall{\m}{\vs_{..{i-1}},\,\Xx_i,\,\es_{{i+1}..}}, \mcall{\m''}{\vs''})} }.
          \end{mathpar}
          This case proceeds similarly to \WACAN{SeqX}, with reduction
          being driven by $\e_i$ instead of $\e_{11}$.
        \item \WACAN{CallC} with
        $\Cx' = \mcall{\m}{\vs_{..{i-1}},\,\Cx_i,\,\es_{{i+1}..}}$ and:
          \begin{mathpar}
            \inferrule*[right=\WACAN{CallC}]
            { \Can{\e_i}{\rCC{\Cx_i}} }
            { \Can{\mcall{\m}{\vs_{..{i-1}},\,\e_i,\,\es_{{i+1}..}}}
                  {\rCC{\mcall{\m}{\vs_{..{i-1}},\,\Cx_i,\,\es_{{i+1}..}}}}. }
          \end{mathpar}
          This case proceeds similarly to \WACAN{SeqC}, with reduction
          being driven by $\e_i$ instead of $\e_{11}$.
      \end{enumerate}
      The \emph{backward} direction (2) proceeds analogously.
  \end{itemize}
\end{proof}

The main \textbf{Optimization Correctness \thmref{thm:paper:table-opt}}
can be found on page~\pageref{thm:paper:table-opt}.
  \section{Optimization Algorithm}\label{app:opt-algo}

This sections defines the expression optimization algorithm
that is implemented in the Redex model.
The algorithm is defined by the reflexive-transitive closure of the $\leadsto$
relation given in \figref{fig:wa-expr-opt-algo-context}.

The algorithm is based on the optimization relation \exproptd{\e}{\e'}:
we conjecture that the result of running the algorithm is in the optimization
relation with the original expression.

\newcommand{\maxinlinecount}{\ensuremath{\texttt{\emph{I}}}\xspace}
\newcommand{\maxspecializationcount}{\ensuremath{\texttt{\emph{S}}}\xspace}
\newcommand{\sigg}{\ensuremath{\texttt{\mcall{\m}{\obar{\g}}}}\xspace}
\newcommand{\sigt}{\ensuremath{\texttt{\mcall{\m}{\obar{\t}}}}\xspace}
\newcommand{\optrr}[7]{\Gm \ \vdash \ \langle \Omega, \ \Phi^\t, \ \SpecEnv, \ \evalt{#2}{#1}
    \rangle\ \leadsto\ \langle #3, \ #4, \ #5, \ \evalt{#7}{#6}\rangle}
\newcommand{\getInlineCount}{\ensuremath{\mathop{\mathbf{iCount}}}\xspace}
\newcommand{\incrementInlineCount}{\ensuremath{\mathop{\mathbf{iCount\nolinebreak\hspace{-.05em}\raisebox{.4ex}{\small\bf +}\nolinebreak\hspace{-.10em}\raisebox{.4ex}{\small\bf +}}}}\xspace}
\newcommand{\getSpecializationCount}{\ensuremath{\mathop{\mathbf{sCount}}}\xspace}

\begin{figure}
    \[
    \begin{array}{ccl@{\qquad}l}
    \Omega & ::= & & \text{\emph{Inline environment}} \\
             & \Alt & \varnothing & \text{empty environment}  \\
             & \Alt & \Omega, \mdeq{\obar{\t}}{\m}{\mathbb{N}} & \text{inline mapping extension} \\
    \\
    \maxinlinecount & & & \text{max inline count} \\
    \maxspecializationcount & & & \text{max specialization count} \\\\
    \Phi^\t & ::= & & \text{\emph{Direct-call environment}} \\
             & \Alt & \varnothing & \text{empty environment} \\
             & \Alt & \Phi^\t, \mdeq{\obar{\t}}{\m}{\m'} & \text{specialization mapping extension} \\
    \\ \OEx & ::= & & \text{\emph{Optimization context}}
    \\ &\Alt& \hole & \text{hole}
    \\ &\Alt& \seq{\OEx}{\e} \Alt \seq{\e}{\OEx} & \text{sequence}
    \\ &\Alt& \primcall{l}{\es\ \OEx\ \es'} & \text{primop call (argument)}
    \\ &\Alt& \mcall{\OEx}{\es} & \text{function call (callee)}
    \\ &\Alt& \mcall{\e}{\es\ \OEx\ \es'} & \text{function call (argument)}
    \end{array}
    \]
    \[
    \begin{array}{rcl}
      \getInlineCount(\Omega, \mcall{\m}{\obar{\t}}) & = &
        n \\
        & & \text{if } \mdeq{\obar{\t}}{\m}{n} \in \Omega \\
      \getInlineCount(\Omega, \mcall{\m}{\obar{\t}}) & = &
        0 \\
        & & \text{if } \nexists n : \mdeq{\obar{\t}}{\m}{n} \in \Omega \\
      \\
      \incrementInlineCount(\Omega, \mcall{\m}{\obar{\t}}) & = &
        (\varnothing, \mdeq{\obar{\t}}{\m}{n+1}) \cup \Omega_{rest} \\
        & & \text{if } (\varnothing, \mdeq{\obar{\t}}{\m}{n}) \cup \Omega_{rest} == \Omega\ \&\& \\
        & & (\varnothing, \mdeq{\obar{\t}}{\m}{n}) \cap \Omega_{rest} == \varnothing \\
      \incrementInlineCount(\Omega, \mcall{\m}{\obar{\t}}) & = &
        \Omega, \mdeq{\obar{\t}}{\m}{1} \\
        & & \text{if } \nexists n : \mdeq{\obar{\t}}{\m}{n} \in \Omega \\
      \\
      \getSpecializationCount(\varnothing,\ \m) & = &
        0 \\
      \getSpecializationCount((\SpecEnv_{rest}, \mdeqd),\ \m_{cmp}) & = &
        \getSpecializationCount(\SpecEnv_{rest},\ \m_{cmp}) \\
        & & \text{if } \m_{cmp} \neq \m \\
      \getSpecializationCount((\SpecEnv_{rest}, \mdeqd),\ \m_{cmp}) & = &
        1 + \getSpecializationCount(\SpecEnv_{rest},\ \m_{cmp}) \\
        & & \text{if } \m_{cmp} == \m \\
    \end{array}
    \]
  \caption{Expression optimization algorithm: syntax \& auxiliary functions}
\end{figure}

\begin{figure}
  \begin{mathpar}
        \inferrule[\WAOE{Inline}]
            { \typedwad{\nv_i}{\g_i} \\
              \getmd(\MT,\m,\gs) = \mdef{\m}{\obar{\jty{\x}{\t}}}{\e_b} \\
              \getInlineCount(\Omega,\ \mcall{\m}{\obar{\t}})\ < \ \maxinlinecount \\
              \incrementInlineCount(\Omega,\ \mcall{\m}{\obar{\t}}) = \Omega' }
            { \optrr{\plugx{\OEx}{\mcall{\m}{\nvs}}}{\MT}
            {\Omega'}{\Phi^\t}{\SpecEnv}
            {\plugx{\OEx}{\seq{\skp}{\e_b\subst{\xs}{\nvs}}}}{\MT} }

        \inferrule[\WAOE{Specialize-Existing}]
            { \exists\ \e_i \in \obar{\e} : \typeof(\e_i) \notin \g \\
              \nexists\ \obar{\g''},\m'' : \mdeq{\obar{\g''}}{\m''}{\m} \in \SpecEnv \\
              \typedwad{\e_i}{\g_i} \\
              \mdeq{\obar{\g}}{\m}{\m'} \in \SpecEnv }
            { \optrr{\plugx{\OEx}{\mcall{\m}{\obar{\e}}}}{\MT}
            {\Omega}{\Phi^\t}{\SpecEnv}
            {\plugx{\OEx}{\mcall{\m'}{\obar{\e}}}}{\MT} }

        \inferrule[\WAOE{Specialize-New}]
            { \exists\ \e_i \in \obar{\e} : \typeof(\e_i) \notin \g \\
              \nexists\ \obar{\g''},\m'' : \mdeq{\obar{\g''}}{\m''}{\m} \in \SpecEnv \\
              \typedwad{\e_i}{\g_i} \\
              \getmd(\MT,\m,\gs) = \mdef{\m}{\obar{\jty{\x}{\t}}}{\e_b} \\
              \nexists\ \m_{opt} : \mdeq{\obar{\g}}{\m}{\m_{opt}} \in \SpecEnv \\
              \getSpecializationCount(\SpecEnv,\ \m)\ < \ \maxspecializationcount \\
              \m' \ \text{ does not occur in } \MT \\
              \MT' = \MText{\mdef{\m'}{\obar{\jty{\x}{\g}}}{\e_b}}{\MT} \\
              \SpecEnv' = \SpecEnv, \mdeq{\obar{\g}}{\m}{\m'} }
            { \optrr{\plugx{\OEx}{\mcall{\m}{\obar{\e}}}}{\MT}
            {\Omega}{\Phi^\t}{\SpecEnv'}
            {\plugx{\OEx}{\mcall{\m'}{\obar{\e}}}}{\MT'} }

        \inferrule[\WAOE{Direct-Call-Existing}]
            { \exists\ \e_i \in \obar{\e},\ \typeof(\e_i) \notin \g \\
              \nexists\ \obar{\g''},\m'' : \mdeq{\obar{\g''}}{\m''}{\m} \in \SpecEnv \\
              \typedwad{\e_i}{\g_i} \\
              \getmd(\MT,\m,\gs) = \mdef{\m}{\obar{\jty{\x}{\t}}}{\e_b} \\
              \nexists\ \m_{opt},\ \mdeq{\obar{\g}}{\m}{\m_{opt}} \in \SpecEnv \\
              \getSpecializationCount(\SpecEnv,\ \m)\ \geq \ \maxspecializationcount \\
              \mdeq{\obar{\t}}{\m}{\m'} \in \Phi^\t \\
              \SpecEnv' = \SpecEnv, \mdeq{\obar{\g}}{\m}{\m'} }
            { \optrr{\plugx{\OEx}{\mcall{\m}{\obar{\e}}}}{\MT}
            {\Omega}{\Phi^\t}{\SpecEnv'}
            {\plugx{\OEx}{\mcall{\m'}{\obar{\e}}}}{\MT} }

        \inferrule[\WAOE{Direct-Call-New}]
            { \exists\ \e_i \in \obar{\e} : \typeof(\e_i) \notin \g \\
              \nexists\ \obar{\g''},\m'' : \mdeq{\obar{\g''}}{\m''}{\m} \in \SpecEnv \\
              \typedwad{\e_i}{\g_i} \\
              \getmd(\MT,\m,\gs) = \mdef{\m}{\obar{\jty{\x}{\t}}}{\e_b} \\
              \nexists\ \m_{opt} : \mdeq{\obar{\g}}{\m}{\m_{opt}} \in \SpecEnv \\
              \getSpecializationCount(\SpecEnv,\ \m)\ \geq \ \maxspecializationcount \\
              \m' \ \text{ does not occur in } \MT \\
              \MT' = \MText{\mdef{\m'}{\obar{\jty{\x}{\t}}}{\e_b}}{\MT} \\
              \SpecEnv' = \SpecEnv, \mdeq{\obar{\g}}{\m}{\m'} \\
              \Phi^{\t'} = \Phi^\t, \mdeq{\obar{\t}}{\m}{\m'} }
            { \optrr{\plugx{\OEx}{\mcall{\m}{\obar{\e}}}}{\MT}
            {\Omega}{\Phi^{\t'}}{\SpecEnv'}
            {\plugx{\OEx}{\mcall{\m'}{\obar{\e}}}}{\MT'} }

  \end{mathpar}
  \caption{Expression optimization algorithm: step}\label{fig:wa-expr-opt-algo-context}
\end{figure}

}

\APPENDICITIS
{\bibliography{bib/jv,bib/all}}
{\clearpage\bibliography{main}}


\begin{thebibliography}{17}


\ifx \showCODEN    \undefined \def \showCODEN     #1{\unskip}     \fi
\ifx \showDOI      \undefined \def \showDOI       #1{#1}\fi
\ifx \showISBNx    \undefined \def \showISBNx     #1{\unskip}     \fi
\ifx \showISBNxiii \undefined \def \showISBNxiii  #1{\unskip}     \fi
\ifx \showISSN     \undefined \def \showISSN      #1{\unskip}     \fi
\ifx \showLCCN     \undefined \def \showLCCN      #1{\unskip}     \fi
\ifx \shownote     \undefined \def \shownote      #1{#1}          \fi
\ifx \showarticletitle \undefined \def \showarticletitle #1{#1}   \fi
\ifx \showURL      \undefined \def \showURL       {\relax}        \fi
\providecommand\bibfield[2]{#2}
\providecommand\bibinfo[2]{#2}
\providecommand\natexlab[1]{#1}
\providecommand\showeprint[2][]{arXiv:#2}

\bibitem[\protect\citeauthoryear{Bezanson, Chen, Chung, Karpinski, Shah, Vitek,
  and Zoubritzky}{Bezanson et~al\mbox{.}}{2018}]%
        {oopsla18a}
\bibfield{author}{\bibinfo{person}{Jeff Bezanson}, \bibinfo{person}{Jiahao
  Chen}, \bibinfo{person}{Ben Chung}, \bibinfo{person}{Stefan Karpinski},
  \bibinfo{person}{{Viral B.} Shah}, \bibinfo{person}{Jan Vitek}, {and}
  \bibinfo{person}{Lionel Zoubritzky}.} \bibinfo{year}{2018}\natexlab{}.
\newblock \showarticletitle{Julia: Dynamism and Performance Reconciled by
  Design}.
\newblock \bibinfo{journal}{\emph{Proc. ACM Program. Lang.}}
  \bibinfo{volume}{2}, \bibinfo{number}{{OOPSLA}} (\bibinfo{year}{2018}).
\newblock
\urldef\tempurl%
\url{https://doi.org/10.1145/3276490}
\showDOI{\tempurl}


\bibitem[\protect\citeauthoryear{Bezanson, Edelman, Karpinski, and
  Shah}{Bezanson et~al\mbox{.}}{2017}]%
        {BezansonEKS17}
\bibfield{author}{\bibinfo{person}{Jeff Bezanson}, \bibinfo{person}{Alan
  Edelman}, \bibinfo{person}{Stefan Karpinski}, {and} \bibinfo{person}{Viral~B.
  Shah}.} \bibinfo{year}{2017}\natexlab{}.
\newblock \showarticletitle{Julia: {A} Fresh Approach to Numerical Computing}.
\newblock \bibinfo{journal}{\emph{SIAM Rev.}} \bibinfo{volume}{59},
  \bibinfo{number}{1} (\bibinfo{year}{2017}).
\newblock
\urldef\tempurl%
\url{https://doi.org/10.1137/141000671}
\showDOI{\tempurl}


\bibitem[\protect\citeauthoryear{Bobrow, Kahn, Kiczales, Masinter, Stefik, and
  Zdybel}{Bobrow et~al\mbox{.}}{1986}]%
        {Bobrow86}
\bibfield{author}{\bibinfo{person}{Daniel~G. Bobrow}, \bibinfo{person}{Kenneth
  Kahn}, \bibinfo{person}{Gregor Kiczales}, \bibinfo{person}{Larry Masinter},
  \bibinfo{person}{Mark Stefik}, {and} \bibinfo{person}{Frank Zdybel}.}
  \bibinfo{year}{1986}\natexlab{}.
\newblock \showarticletitle{CommonLoops: Merging {Lisp} and Object-oriented
  Programming}. In \bibinfo{booktitle}{\emph{Conference on Object-Oriented
  Programming Systems, Languages and Applications (OOPSLA)}}.
\newblock
\urldef\tempurl%
\url{https://doi.org/10.1145/28697.28700}
\showDOI{\tempurl}


\bibitem[\protect\citeauthoryear{Cook and Lee}{Cook and Lee}{1983}]%
        {lee}
\bibfield{author}{\bibinfo{person}{Robert~P. Cook} {and} \bibinfo{person}{Insup
  Lee}.} \bibinfo{year}{1983}\natexlab{}.
\newblock \showarticletitle{DYMOS: A Dynamic Modification System}. In
  \bibinfo{booktitle}{\emph{Proceedings of the Symposium on High-Level
  Debugging}}.
\newblock
\urldef\tempurl%
\url{https://doi.org/10.1145/1006147.1006188}
\showDOI{\tempurl}


\bibitem[\protect\citeauthoryear{Detlefs and Agesen}{Detlefs and
  Agesen}{1999}]%
        {detlefs99}
\bibfield{author}{\bibinfo{person}{David Detlefs} {and} \bibinfo{person}{Ole
  Agesen}.} \bibinfo{year}{1999}\natexlab{}.
\newblock \showarticletitle{Inlining of Virtual Methods}. In
  \bibinfo{booktitle}{\emph{European Conference on Object-Oriented Programming
  (ECOOP)}}.
\newblock
\urldef\tempurl%
\url{https://doi.org/10.5555/646156.679839}
\showDOI{\tempurl}


\bibitem[\protect\citeauthoryear{Felleisen, Findler, and Flatt}{Felleisen
  et~al\mbox{.}}{2009}]%
        {bib:redex-book}
\bibfield{author}{\bibinfo{person}{Matthias Felleisen},
  \bibinfo{person}{Robert~Bruce Findler}, {and} \bibinfo{person}{Matthew
  Flatt}.} \bibinfo{year}{2009}\natexlab{}.
\newblock \bibinfo{booktitle}{\emph{Semantics Engineering with {PLT} Redex}}.
\newblock \bibinfo{publisher}{{MIT} Press}.
\newblock
\showISBNx{978-0-262-06275-6}
\urldef\tempurl%
\url{http://mitpress.mit.edu/catalog/item/default.asp?ttype=2\&tid=11885}
\showURL{%
\tempurl}


\bibitem[\protect\citeauthoryear{Fl{\"{u}}ckiger, Scherer, Yee, Goel, Ahmed,
  and Vitek}{Fl{\"{u}}ckiger et~al\mbox{.}}{2018}]%
        {popl18}
\bibfield{author}{\bibinfo{person}{Olivier Fl{\"{u}}ckiger},
  \bibinfo{person}{Gabriel Scherer}, \bibinfo{person}{Ming{-}Ho Yee},
  \bibinfo{person}{Aviral Goel}, \bibinfo{person}{Amal Ahmed}, {and}
  \bibinfo{person}{Jan Vitek}.} \bibinfo{year}{2018}\natexlab{}.
\newblock \showarticletitle{Correctness of speculative optimizations with
  dynamic deoptimization}.
\newblock \bibinfo{journal}{\emph{Proc. ACM Program. Lang.}}
  \bibinfo{volume}{2}, \bibinfo{number}{{POPL}} (\bibinfo{year}{2018}).
\newblock
\urldef\tempurl%
\url{https://doi.org/10.1145/3158137}
\showDOI{\tempurl}


\bibitem[\protect\citeauthoryear{Glew}{Glew}{2005}]%
        {glew2005method}
\bibfield{author}{\bibinfo{person}{Neal Glew}.}
  \bibinfo{year}{2005}\natexlab{}.
\newblock \showarticletitle{Method Inlining, Dynamic Class Loading, and Type
  Soundness.}
\newblock \bibinfo{journal}{\emph{Journal of Object Technology}}
  \bibinfo{volume}{4}, \bibinfo{number}{8} (\bibinfo{year}{2005}).
\newblock
\urldef\tempurl%
\url{https://doi.org/10.5381/jot.2005.4.8.a2}
\showDOI{\tempurl}


\bibitem[\protect\citeauthoryear{H{\"o}lzle, Chambers, and Ungar}{H{\"o}lzle
  et~al\mbox{.}}{1992}]%
        {Hoelzle92}
\bibfield{author}{\bibinfo{person}{Urs H{\"o}lzle}, \bibinfo{person}{Craig
  Chambers}, {and} \bibinfo{person}{David Ungar}.}
  \bibinfo{year}{1992}\natexlab{}.
\newblock \showarticletitle{Debugging Optimized Code with Dynamic
  Deoptimization}, In \bibinfo{booktitle}{Conference on Programming Language
  Design and Implementation (PLDI)}.
\newblock
\urldef\tempurl%
\url{https://doi.org/10.1145/143103.143114}
\showDOI{\tempurl}


\bibitem[\protect\citeauthoryear{Liang and Bracha}{Liang and Bracha}{1998}]%
        {LB98}
\bibfield{author}{\bibinfo{person}{Sheng Liang} {and} \bibinfo{person}{Gilad
  Bracha}.} \bibinfo{year}{1998}\natexlab{}.
\newblock \showarticletitle{Dynamic class loading in the {Java} virtual
  machine}. In \bibinfo{booktitle}{\emph{Conference on Object-Oriented
  Programming, Systems, Languages, and Applications (OOPSLA)}}.
\newblock
\urldef\tempurl%
\url{https://doi.org/10.1145/286936.286945}
\showDOI{\tempurl}


\bibitem[\protect\citeauthoryear{Matthews and Findler}{Matthews and
  Findler}{2008}]%
        {matthews2008operational}
\bibfield{author}{\bibinfo{person}{Jacob Matthews} {and}
  \bibinfo{person}{Robert~Bruce Findler}.} \bibinfo{year}{2008}\natexlab{}.
\newblock \showarticletitle{An operational semantics for Scheme}.
\newblock \bibinfo{journal}{\emph{Journal of Functional Programming}}
  \bibinfo{volume}{18} (\bibinfo{year}{2008}).
\newblock
Issue 1.
\urldef\tempurl%
\url{https://doi.org/10.1017/S0956796807006478}
\showDOI{\tempurl}


\bibitem[\protect\citeauthoryear{McCarthy}{McCarthy}{1978}]%
        {lisp}
\bibfield{author}{\bibinfo{person}{John McCarthy}.}
  \bibinfo{year}{1978}\natexlab{}.
\newblock \showarticletitle{History of LISP}. In
  \bibinfo{booktitle}{\emph{History of programming languages (HOPL)}}.
\newblock
\urldef\tempurl%
\url{https://doi.org/10.1145/960118.808387}
\showDOI{\tempurl}


\bibitem[\protect\citeauthoryear{Nguyen and Xue}{Nguyen and Xue}{2005}]%
        {nguyen2005interprocedural}
\bibfield{author}{\bibinfo{person}{Phung~Hua Nguyen} {and}
  \bibinfo{person}{Jingling Xue}.} \bibinfo{year}{2005}\natexlab{}.
\newblock \showarticletitle{Interprocedural side-effect analysis and
  optimisation in the presence of dynamic class loading}. In
  \bibinfo{booktitle}{\emph{Australasian Conference on Computer Science
  (ACSC)}}.
\newblock
\urldef\tempurl%
\url{https://doi.org/10.5555/1082161.1082163}
\showDOI{\tempurl}


\bibitem[\protect\citeauthoryear{Politz, Carroll, Lerner, Pombrio, and
  Krishnamurthi}{Politz et~al\mbox{.}}{2012}]%
        {politz12}
\bibfield{author}{\bibinfo{person}{Joe~Gibbs Politz},
  \bibinfo{person}{Matthew~J. Carroll}, \bibinfo{person}{Benjamin~S. Lerner},
  \bibinfo{person}{Justin Pombrio}, {and} \bibinfo{person}{Shriram
  Krishnamurthi}.} \bibinfo{year}{2012}\natexlab{}.
\newblock \showarticletitle{A Tested Semantics for Getters, Setters, and Eval
  in JavaScript}. In \bibinfo{booktitle}{\emph{Symposium on Dynamic Languages}}
  \emph{(\bibinfo{series}{DLS})}.
\newblock
\showISBNx{9781450315647}
\urldef\tempurl%
\url{https://doi.org/10.1145/2384577.2384579}
\showDOI{\tempurl}


\bibitem[\protect\citeauthoryear{Stoyle, Hicks, Bierman, Sewell, and
  Neamtiu}{Stoyle et~al\mbox{.}}{2007}]%
        {hicks}
\bibfield{author}{\bibinfo{person}{Gareth Stoyle}, \bibinfo{person}{Michael
  Hicks}, \bibinfo{person}{Gavin Bierman}, \bibinfo{person}{Peter Sewell},
  {and} \bibinfo{person}{Iulian Neamtiu}.} \bibinfo{year}{2007}\natexlab{}.
\newblock \showarticletitle{Mutatis Mutandis: Safe and Predictable Dynamic
  Software Updating}.
\newblock \bibinfo{journal}{\emph{ACM Trans. Program. Lang. Syst.}}
  \bibinfo{volume}{29}, \bibinfo{number}{4} (\bibinfo{year}{2007}).
\newblock
\urldef\tempurl%
\url{https://doi.org/10.1145/1255450.1255455}
\showDOI{\tempurl}


\bibitem[\protect\citeauthoryear{v1}{v1}{2020}]%
        {bib:juliadoc-methods}
\bibfield{author}{\bibinfo{person}{Julia Language~Manual v1}.}
  \bibinfo{year}{2020}\natexlab{}.
\newblock \bibinfo{booktitle}{\emph{Redefining Methods}}.
\newblock
\urldef\tempurl%
\url{https://docs.julialang.org/en/v1/manual/methods/#Redefining-Methods-1}
\showURL{%
\tempurl}


\bibitem[\protect\citeauthoryear{{Zappa Nardelli}, Belyakova, Pelenitsyn,
  Chung, Bezanson, and Vitek}{{Zappa Nardelli} et~al\mbox{.}}{2018}]%
        {oopsla18b}
\bibfield{author}{\bibinfo{person}{Francesco {Zappa Nardelli}},
  \bibinfo{person}{Julia Belyakova}, \bibinfo{person}{Artem Pelenitsyn},
  \bibinfo{person}{Benjamin Chung}, \bibinfo{person}{Jeff Bezanson}, {and}
  \bibinfo{person}{Jan Vitek}.} \bibinfo{year}{2018}\natexlab{}.
\newblock \showarticletitle{Julia Subtyping: A Rational Reconstruction}.
\newblock \bibinfo{journal}{\emph{Proc. ACM Program. Lang.}}
  \bibinfo{volume}{2}, \bibinfo{number}{{OOPSLA}} (\bibinfo{year}{2018}).
\newblock
\urldef\tempurl%
\url{https://doi.org/10.1145/3276483}
\showDOI{\tempurl}


\end{thebibliography}

\end{document}